\newtheorem{theorem}{Theorem}
\newtheorem{lemma}[theorem]{Lemma}
\newtheorem{corollary}[theorem]{Corollary}
\theoremstyle{definition}
\newtheorem{definition}[theorem]{Definition}
\newtheorem{problem}{Problem}
\newcommand{\br}[1]{\left\{#1\right\}}                            
\newcommand{\dotProd}[2]{\ensuremath{\left\langle #1 , #2 \right\rangle }}
\newcommand{\dist}[0]{\ensuremath{\mathrm{dist}}}
\newcommand{\bigO}[1]{\ensuremath{O\left( #1 \right)}}
\newcommand{\norm}[1]{\ensuremath{\left\| #1\right\|_2}}
\newcommand{\normF}[1]{\ensuremath{\left\| #1\right\|_F}}
\DeclareMathOperator{\cost}{cost}
\DeclareMathOperator{\argmin}{argmin}
\newcommand{\ceil}[1]{\left \lceil #1 \right \rceil}
\newcommand{\append}[1]{}
\newcommand{\REAL}{\ensuremath{\mathbb{R}}}
\newcommand{\eps}{\ensuremath{\varepsilon}}
\title{Asymptotically exact streaming algorithms\thanks{Part of this work has been supported by Deutsche Forschungsgemeinschaft (DFG) within the Collaborative Research Center SFB 876 "Providing Information by Resource-Constrained Analysis",
project C4. Marc Heinrich acknowledges the support of \'Ecole Normale Sup\'erieure.}}
\author[1]{Marc Heinrich}
\affil[1]{D\'epartement d'Informatique,\authorcr
\'Ecole Normale Sup\'erieure, Paris, France\authorcr
		  \texttt{marc.heinrich@ens.fr}
}
\author[2]{Alexander Munteanu}
\author[2]{Christian Sohler}
\affil[2]{Department of Computer Science, TU Dortmund, Germany\authorcr
		  \texttt{\{alexander.munteanu,christian.sohler\}@tu-dortmund.de}
}
\begin{document}
\maketitle
\thispagestyle{empty}
\begin{abstract}
\noindent
We introduce a new computational model for data streams: asymptotically exact streaming algorithms. These algorithms have an approximation ratio that tends to one as the length of the stream goes to infinity while the memory used by the algorithm is restricted to $\operatorname{polylog}(n)$ size. Thus, the output of the algorithm is optimal in the limit. We show positive results in our model for a series of important problems that have been discussed in the streaming literature. These include computing the frequency moments, clustering problems and least squares regression. Our results also include lower bounds for problems, which have streaming algorithms in the ordinary setting but do not allow for sublinear space algorithms in our model.\\

\noindent
\end{abstract}

\section{Introduction}
Streaming algorithms aim at solving problems in a setting where the input is given as a stream of items like numerical values, points in Euclidean space or edges of a graph and moreover it is not possible to store the entire input in the main memory. Usually a streaming algorithm is allowed only one pass over the data and its working memory is restricted to polylogarithmic size in the length of the stream \cite{Muthukrishnan05}. For most non-trivial problems, it is not possible to get an exact solution with these restrictions. However, we can focus on the design of efficient approximation algorithms. The seemingly best we can hope for in this situation is a $(1\pm\eps)$-approximation. Such approximation algorithms have been developed for many interesting and important problems. Known results in this area cover a broad variety of computational problems, including $(1\pm\eps)$-approximation algorithms for estimating the frequency moments of a stream of items \cite{ApproxFreqMom}, least squares regression, low-rank approximation \cite{LinAlgStream} and clustering \cite{ClusteringMotion}. These have many applications in machine learning, classification, data mining and other fields of research.

From an information theoretic as well as statistical perspective it seems natural to say that the more data is used in a learning task, the more precise our result will be. In this context one might think of the law of large numbers or central limit theorems. However, these arguments require the observed data to follow some fixed distribution and assume for example some underlying unknown mean value or covariance structure of the data.

In a recent work \cite{MichaelJordan2013} such arguments have been leveraged to say that more data can actually lead to more efficient learning. The basic idea supporting this hypothesis is that possibly non-convex learning tasks on a ground set $S$ can be handled more efficiently by using convex relaxations $C_1\supset S$ and even further relaxations $C_l\supset\ldots\supset C_2\supset C_1 \supset S$ thereof. While this leads to considerably more efficient computations, the solution quality might decrease due to the relaxations. This is where the increase in the amount of data comes into play. Under certain distributional assumptions we can get a solution that satisfies the same bounds on the precision as approximating the initial problem without the relaxation. This is achieved by using an appropriate 
amount of additional data whose size depends on the complexity and dimension of the relaxed sets. The authors even claim a trade-off between computational cost and the amount of data used, but the technical part does not cover lower bounds to support this claim.

Our goal is to show that for many problems that have been studied in the streaming context we can also hope for improving error guarantees as we have access to more data. Firstly, our approach is distinguished from the above in that we do not impose or use statistical assumptions on the source of the data. In some cases we still have to put mild restrictions on the input stream but this is only due to the obvious fact that if some part of the result won't get enough data or only redundant data, i.e., there is no new information on that part, then we cannot hope to improve an error that we have already made. Secondly, we have lower bounds on the space complexity supporting that these assumptions are actually necessary, not only sufficient.

\paragraph*{Our model} The main purpose of the present paper is to introduce a novel model for the design of streaming algorithms. The above discussion raises the question whether we can develop streaming algorithms, which have a guarantee on the error that approaches zero as the length $n$ of the stream tends to infinity, i.e., that have an approximation ratio of $(1\pm \eps)$ for some $\eps=o(1)$, while the memory is still bounded by $\operatorname{polylog}(n)$. As the space complexity of many problems in the streaming model is polynomial in $1/\eps$ we might think of choosing $\eps=\Theta(\frac{1}{\log n})$. Given an algorithm in the usual streaming model, we could just fix $\eps$ to such a value in advance. This means that we already have non-uniform approximation algorithms in the above sense. But this requires to know the length of the stream in advance. If otherwise, the length of the stream exceeds its pre-defined limit, the algorithm will fail to satisfy the desired approximation guarantee. Our intention is therefore to develop algorithms that are \emph{uniform} in terms of $n$ and can deal with potentially infinite input streams. We will call such algorithms \emph{asymptotically exact streaming algorithms} according to the following definition.

\begin{definition}
A problem $P$ with objective function $V: S \rightarrow \mathbb{R}$ has an \emph{asymptotically exact streaming algorithm} if there exists a one-pass streaming algorithm that for an infinite input stream $I$ and every $n\in\mathbb{N}$ maintains a solution $s^{(n)}\in S$ that with probability at least $1-\delta$ satisfies $$\frac{V(s^{(n)})}{V(s^{(n)}_{opt})}\xrightarrow{n \to \infty} 1$$ where $s^{(n)}_{opt}$ is the optimal resp. exact solution to the substream of length $n$ that has been read. The space complexity of the algorithm is bounded by $\log^{\bigO{1}} n$.
\end{definition}
Asymptotically exact algorithms have been designed for several problems including bin packing \cite{BinPacking}, the traveling salesman problem \cite{TSP}, scheduling problems \cite{Scheduling} and pickup and delivery problems \cite{PDP}. However, to our knowledge, no such results exist in the streaming literature. Moreover, our approach is different, since we explicitly use the information from the input data to improve the approximation.

Our definition also gives rise to a stronger notion of approximation in the streaming context. Consider for example the problem of approximating the $k^{th}$ frequency moments $F_k$ of items that are read from an input stream. While we can give an asymptotically exact streaming algorithm for the case $k=2$, our results show that no such algorithm can exist for the case $k=0$. Note that both cases allow for $(1\pm\eps)$ streaming algorithms for any fixed $\eps>0$ \cite{ApproxFreqMom,Bar-YossefJKST02}.

\paragraph*{Our results}

We will study the following problems in the setting of asymptotically exact streaming algorithms.

\begin{problem}[\emph{$F_k$ estimation}] Let $A$ be a sequence of $n$ integers from $[N]=\br{1 \ldots N}$. The task is to compute a $(1 \pm \eps)$-estimate of the $k^{th}$ moment $F_k=\sum_{i=1}^N m_i^k, \text{ where }  m_i = | \br{j \mid A_j =i} |$
is the number of elements in the sequence that are equal to $i$.
\end{problem}

The task of estimating the frequency moments of a sequence arises in the analysis of network traffic and covers some applications on large databases and statistical data analysis. It has been shown in \cite{ImpFreqLowBound} that estimating the $k^{th}$ frequency moment requires polynomial space for $k > 2$. However, for $k\in\{0,1,2\}$, the frequency moments can be estimated using only polylogarithmic space. According algorithms and negative results were given by Alon, Matias and Szegedy in their seminal paper \cite{ApproxFreqMom} on the space complexity of estimating the frequency moments.
Using their sketching techniques, we can give an asymptotically exact streaming algorithm for maintaining an estimate on the second frequency moment of an infinite data stream within $(1\pm\eps)$ error, where $\eps=\frac{1}{\log n}$. Our algorithm uses $O(\log^2 n \log(1/\delta))$ memory words.
We also have negative results regarding the frequency moments. We are able to show that there exists no asymptotically exact streaming algorithm for estimating $F_0$, since any such algorithm must use $\Omega(n)$ space. We have similar results regarding any $k\geq 2$ if we allow for insertion and also deletion of elements.
The lower bounds on the space complexity are derived by reduction from the disjointness problem, which is known to have linear communication complexity.

Another interesting problem that has many applications in data analysis, compression and classification is the clustering problem.

\begin{problem}[\emph{clustering}] Given a set $P=\{p_1,\ldots,p_n\}\subset\REAL^d$ of $n$ points, and an integer $k$, find a set $C\subset\REAL^d$ of $k$ centers closest to the input set $P$. More precisely, the task is to minimize one of the following quantities
\begin{align*}
&\sum_{i=1}^n \dist(p_i, C)^2 		&\text{($k$-means problem)} \\
&\sum_{i=1}^n \dist(p_i, C) 	&\text{($k$-median problem)} \\
&\max_{i=1}^n \dist(p_i, C) 	&\text{($k$-center problem)}, \\
\end{align*}
where $\dist$ is the minimum distance to a set of points, i.e., $\dist(p,C) = \min_{c \in C} \dist(p,c)$.
\end{problem}

The first coreset constructions for the $k$-means and $k$-median problems are described in \cite{FirstCoreset}, \cite{HPClustering} and \cite{SmallerCoresets}. These coresets can be seen as a small size representation of the original point set such that for any choice of centers their cost is approximated up to $(1\pm\eps)$. Thus, solving the problem exactly on the small size coreset yields a solution that is within multiplicative $(1+\eps)$-error to the optimal solution. More recently, a dynamic coreset construction was designed in \cite{DynamicCoreset}. \cite{PolyCoreset} gives the first coreset construction with a polynomial dependency on the dimension, which was further improved in \cite{ImproveCoreset1} and \cite{Feldman} resulting in coresets of size $O(\frac{kd}{\eps^2})$ for $k$-median. Another construction described in \cite{TinyData} returns a coreset of size $O(\frac{k ^2}{\eps^4})$ for $k$-means clustering. Note that the size of the coreset does not depend on $d$ or $n$. The coreset is built by first projecting the input points into a lower dimensional subspace and then applying a coreset construction from \cite{Feldman}.
Our results for clustering are asymptotically exact streaming algorithms for the $k$-means as well as $k$-median problems that are based on maintaining coresets and computing the centers only on these coresets. The results are generic with respect to the coreset construction that is used. Given any such coreset construction of size $g(n, \eps )$, our algorithm constructs a summary of size $\bigO{\log n} g(n, \frac 1 {\log n} )$ and approximates the problem based on this small set to get a solution that is within $(1+\eps)$ to the optimal. 
The analysis is conducted by showing that the obtained solution from the summary is close to the optimum. We need the additional assumption that the number of points that is associated with each center increases with the number of points from the stream. This assumption is necessary as we can show by a complementing lower bound. While under this mild assumption, we can give positive results for the $k$-means and $k$-median problems, we have further lower bounds showing that no asymptotically exact streaming algorithm can exist for the $k$-center clustering problem even for $k=1$ and in $2$ dimensions.

Another problem that has been discussed extensively in the streaming literature is the least squares regression problem.
\begin{problem}[\emph{least squares regression}]
Let $A$ be an $n \times d$ matrix and $b$ be a column vector of size $n$. Find a solution $\tilde x$ such that
\begin{align*}
\norm{A\tilde x - b} \leq (1 + \eps) \norm{Ax^* - b},
\end{align*}
where $x^* = \argmin_{x \in \REAL^d} \norm{Ax - b}$ is an optimal solution.
\end{problem}

Regression is a very important problem used in machine learning and statistics to study the dependency between variables. The most efficient algorithms for solving regression problems with little time and space are due to the early works of Sarl\'os \cite{ImpLinAlg} as well as Clarkson and Woodruff \cite{LinAlgStream} and have been further optimized and generalized in the last years. Their approach is to apply space and time efficient versions of the well known Johnson-Lindenstrauss transform \cite{JLT} as a dimensionality reduction technique to reduce the space and time bounds of their algorithms.

Using the same kind of sketching technique based on random linear projections we are able to develop an asymptotically exact streaming algorithm for the least squares regression problem. On the technical part we give an asymptotically exact streaming algorithm for maintaining a sketch that allows for matrix multiplication as well as embedding a linear subspace in our setting. These problems serve as building blocks to derive the result on least squares regression as shown in \cite{ImpLinAlg,LinAlgStream}. Following the outline of these references we still have to add some additional arguments regarding the columnbasis given by the singular value decomposition (SVD) of the input matrix to account for the improving approximation guarantee. This finally enables us to derive the positive result if the input is given row-by-row and the smallest singular value of the data matrix diverges with growing $n$. Note that while the original references give algorithms in the most general model of turnstile updates \cite{Muthukrishnan05}, our input stream is much more restricted. However, on the negative side we are able to modify the lower bound arguments from \cite{LinAlgStream} to show that no asymptotically exact streaming algorithm can exists when the input matrix is given in the turnstile model even if every entry is modified only once, and
even under the additional assumption on the divergence of the smallest singular value.\\

The rest of the paper is organized as follows. In Section \ref{prelim} we give some basic definitions and repeat results that we will use in our proofs. In Section \ref{section:F2}, we describe and analyze an asymptotically exact algorithm for estimating the second frequency moment. In Section \ref{section:Clustering}, we describe our clustering algorithm for the $k$-means and $k$-median problems with improving precision. In Section \ref{section:Regression}, we describe an asymptotically exact algorithm for regression. The analysis is adapted from \cite{LinAlgStream} to work in our setting. We also derive an algorithm with improving precision for matrix multiplication as a tool for solving the regression problem. Our lower bounds can be found in the corresponding sections and are mainly derived by reduction from communication complexity problems. We conclude our paper in Section \ref{conclusion}.


\section{Preliminaries}
\label{prelim}
Before we turn to our main results regarding the introduced problems, we state some preliminary definitions and tools that we will need in our proofs.

We will assume that the error parameter $\eps$ and the failure probability $\delta$ satisfy $0 < \eps,\delta < 1/2$. For any integer $n$ we will denote by $[n]=\{1,\ldots, n\}$ the set of all integers up to $n$. For values $a,b\in\REAL$ we will write $a\in (1\pm\eps) b$ meaning that $(1-\eps) b \leq a \leq (1+\eps) b$. For any two vectors $x,y\in\REAL^d$ let $\langle x,y \rangle=\sum_{i=1}^d x_iy_i$ denote the inner product of $x$ and $y$. Throughout the paper we will consider two different matrix norms, the Frobenius norm and the operator norm.

\begin{definition}[matrix norms]
For a matrix $A \in \mathbb{R}^{n \times d}$ the Frobenius norm is defined by $\normF{A}=(\sum_{i=1}^n \sum_{j=1}^d A_{ij}^2 )^{1/2}$ and the spectral norm is given by $\norm{A}=\sup_{x\in\mathbb{R}^d\setminus\{0\}} \frac{\normF{Ax}}{\normF{x}}\;.$
\end{definition}
Note that in the special case of a vector $y\in\mathbb{R}^d=\mathbb{R}^{d\times 1}$, both matrix norms coincide with the Euclidean vector norm or length of $y$, i.e., $\normF{y}=\norm{y}=( \sum_{i=1}^d y_{i}^2)^{1/2}.$

In this paper we will make use of a sketching method described in \cite{RandomProj} that has also been used in \cite{ImpLinAlg} and \cite{LinAlgStream}. This method is based on random linear maps and is an improvement of the so called Johnson-Lindenstrauss transform from \cite{JLT}, using matrices that only consist of appropriately rescaled random entries from $\br{\pm 1}$.

\begin{theorem}
\label{th:sketching}
Fix $k = \Theta(\frac{1}{\eps^2} \log\left(\frac{1}{\delta}\right) )$. Let $R$ be a $k \times n$ matrix whose entries are independent random variables, taking values $+1$ or $-1$ with probability $1/2$ each. Let $S=\frac{1}{\sqrt k} R$. Then for an arbitrary vector $x \in \REAL^n$ we have with probability $1 - \delta$ that
$$\norm{Sx}^2 \in (1 \pm \eps)\norm{x}^2.$$
The entries on the same row of $R$ only need to be $4$-wise independent.
\end{theorem}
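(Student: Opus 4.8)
The plan is to prove the statement by the standard moment (Markov-bound on a high-power polynomial) argument, adapted so that only $4$-wise independence of each row is used, and then amplifying the success probability by taking $k = \Theta(\eps^{-2}\log(1/\delta))$ independent rows.

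First I would fix a single row of $R$, say $r = (r_1,\dots,r_n)$ with the $r_i$ being $4$-wise independent $\pm 1$ variables, and set $Y = \tfrac{1}{\sqrt n}\langle r, x\rangle$ (I will rescale at the end). Then $\Ex{Y^2} = \tfrac1n \sum_{i,j} x_i x_j \Ex{r_i r_j} = \tfrac1n \norm{x}^2$ using pairwise independence and $\Ex{r_i}=0$, $\Ex{r_i^2}=1$. The key computation is the variance: expanding $\Ex{Y^4} = \tfrac{1}{n^2}\sum_{i,j,k,l} x_i x_j x_k x_l \Ex{r_i r_j r_k r_l}$, only terms where indices are matched in pairs survive (this is exactly where $4$-wise independence suffices, since every monomial involves at most four distinct $r$'s), giving $\Ex{Y^4} = \tfrac{1}{n^2}(3\norm{x}^4 - 2\sum_i x_i^4) \le \tfrac{3}{n^2}\norm{x}^4$, hence $\Var(Y^2) \le \tfrac{2}{n^2}\norm{x}^4$.

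Next I would average over $k$ independent rows: with $S = \tfrac1{\sqrt k} R$ scaled appropriately so that $\norm{Sx}^2 = \tfrac1k \sum_{t=1}^k Z_t$ where each $Z_t$ is an independent copy of $n\cdot Y^2$ (so $\Ex{Z_t} = \norm{x}^2$ and $\Var(Z_t) \le 2\norm{x}^4$). By linearity and independence, $\Ex{\norm{Sx}^2} = \norm{x}^2$ and $\Var(\norm{Sx}^2) \le \tfrac{2}{k}\norm{x}^4$. A plain Chebyshev bound then gives failure probability $O(1/(k\eps^2))$, which is only a constant unless $k$ is large; to get the clean $1-\delta$ with $k = \Theta(\eps^{-2}\log(1/\delta))$ I would instead run $\Theta(\log(1/\delta))$ independent groups of $\Theta(\eps^{-2})$ rows each, take the group with the median value of $\norm{Sx}^2$, and apply a Chernoff bound over the groups: each group is individually within $(1\pm\eps)\norm{x}^2$ with probability $\ge 2/3$ by Chebyshev, so the median is correct except with probability $2^{-\Omega(\log(1/\delta))} = \delta$. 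Folding these groups into one $k \times n$ matrix (with $k$ the total row count) gives the stated form, at the cost of replacing "$\norm{Sx}^2$" by a median estimator — alternatively, one keeps a single block and uses a Bernstein/higher-moment bound on the sum of the bounded-variance $Z_t$, which directly yields the $(1\pm\eps)$ guarantee with probability $1-\delta$ for $k = \Theta(\eps^{-2}\log(1/\delta))$.

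The main obstacle is purely bookkeeping: making sure the fourth-moment expansion is organized so that it is transparent that no monomial ever requires more than $4$-wise independence (the cross terms $\Ex{r_i r_j r_k r_l}$ with three or four distinct indices vanish by independence regardless of how we pair them), and then choosing the amplification route — median-of-means versus a single higher-moment tail bound — that most cleanly delivers exactly $k = \Theta(\eps^{-2}\log(1/\delta))$. Neither step is deep; the result is essentially the analysis of \cite{RandomProj} restricted to $4$-wise independent rows, which is all that is needed for the streaming applications in the sequel.
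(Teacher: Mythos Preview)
The paper does not actually prove Theorem~\ref{th:sketching}: it is stated in the preliminaries as a known result imported from \cite{RandomProj} (with the $4$-wise independence remark tracing back to \cite{ApproxFreqMom}), and is then used as a black box throughout. So there is no ``paper's own proof'' to compare against; your sketch is simply a proof of a cited lemma.

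That said, your outline is the standard argument and is essentially correct. Two small points. First, the $1/\sqrt{n}$ scaling you introduce for a single row is spurious (the normalization in $S$ is by $1/\sqrt{k}$, not $1/\sqrt{n}$); you silently undo it when you set $Z_t = n\cdot Y^2$, so nothing breaks, but it would be cleaner to drop it. Second, you are right to flag the tension between the two amplification routes: with only $4$-wise independence inside each row you control only the second moment of each $Z_t$, so the clean Chebyshev-then-median-of-means argument is the one that actually works under that hypothesis, whereas the direct tail bound on $\norm{Sx}^2$ with $k=\Theta(\eps^{-2}\log(1/\delta))$ rows needs more (e.g.\ full independence within rows, as in \cite{RandomProj}, to run an MGF argument). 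The theorem as written conflates these slightly --- the main statement assumes fully independent entries, while the last sentence relaxes to $4$-wise --- and your proposal handles both readings.
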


Using a technique from \cite{ImpLinAlg}, which consists in putting a grid of appropriate size on the unit ball and embedding the grid points, it was shown that we can have an embedding of a whole linear subspace using small size sketches. The space complexity of this task has been settled later in \cite{LinAlgStream} and \cite{NelsonN14}.
\begin{theorem}
\label{th:embedding}
Let $S$ be a $k \times n$ sketching  matrix as in Theorem \ref{th:sketching} but with $k = \Theta(\frac{d}{\eps^2}\log{\frac{1}{\delta}})$. Let $A$ be an arbitrary $n \times d$ matrix. Then with probability $1 - \delta$ we have that
\begin{align*}
\forall x \in \REAL^d: \quad \norm{SAx}^2 \in (1 \pm \eps) \norm{Ax}^2.
\end{align*}
\end{theorem}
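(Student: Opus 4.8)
The plan is to deduce this subspace statement from the single-vector guarantee of Theorem~\ref{th:sketching} via a standard $\gamma$-net argument over the column space of $A$. Let $U=\{Ax : x\in\REAL^d\}\subseteq\REAL^n$ be the column space of $A$, a linear subspace of some dimension $d'\le d$. Since both sides of the claimed inequality are homogeneous of degree $2$ in $x$ (equivalently, in $y=Ax$), it suffices to prove $\norm{Sy}^2\in(1\pm\eps)$ for every $y$ in the sphere $\mathcal{S}=\{y\in U:\norm{y}=1\}$. A volume argument gives a $\gamma$-net $\mathcal{N}\subseteq\mathcal{S}$ — every point of $\mathcal{S}$ lies within Euclidean distance $\gamma$ of some point of $\mathcal{N}$ — of size $|\mathcal{N}|\le(1+2/\gamma)^{d'}\le(1+2/\gamma)^{d}$; I would fix $\gamma$ to a small constant, say $\gamma=1/4$, so $|\mathcal{N}|\le 9^{d}$.

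Next I would apply Theorem~\ref{th:sketching} to each of the at most $9^{d}$ vectors of $\mathcal{N}$ with error parameter $\eps'=\eps/2$ and failure probability $\delta'=\delta/|\mathcal{N}|$, and take a union bound. This requires $k=\Theta\!\left(\frac{1}{\eps'^2}\log\frac{1}{\delta'}\right)=\Theta\!\left(\frac{1}{\eps^2}\bigl(d\log 9+\log\frac1\delta\bigr)\right)=\Theta\!\left(\frac{d}{\eps^2}\log\frac1\delta\right)$ rows (the last step uses $\delta<1/2$, so $d+\log\frac1\delta = O(d\log\frac1\delta)$), which is exactly the hypothesis. On the resulting event, which has probability at least $1-\delta$, we have $\norm{Sv}^2\in(1\pm\eps')\norm{v}^2$ simultaneously for all $v\in\mathcal{N}$.

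It then remains to transfer the guarantee from $\mathcal{N}$ to all of $\mathcal{S}$. Let $B$ denote the restriction of $S$ to $U$ (a linear map $U\to\REAL^k$) and set $G=B^{\top}B-I_{U}$, a symmetric operator on $U$; then for $y\in\mathcal{S}$ we have $\norm{Sy}^2-1=y^{\top}Gy$, and $\sup_{y\in\mathcal{S}}|\,\norm{Sy}^2-1\,|=\norm{G}$ in operator norm. The net bound says $|v^{\top}Gv|\le\eps'$ for all $v\in\mathcal{N}$. For arbitrary $y\in\mathcal{S}$ choose $v\in\mathcal{N}$ with $\norm{y-v}\le\gamma$ and note $y-v\in U$; writing $y^{\top}Gy-v^{\top}Gv=(y-v)^{\top}Gy+v^{\top}G(y-v)$ and bounding each term by $\norm{G}\,\norm{y-v}\le\gamma\norm{G}$ gives $|y^{\top}Gy|\le\eps'+2\gamma\norm{G}$. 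Taking the supremum over $y$ yields the self-bounding inequality $\norm{G}\le\eps'+2\gamma\norm{G}$, hence $\norm{G}\le\eps'/(1-2\gamma)$. With $\gamma=1/4$ and $\eps'=\eps/2$ this is $\norm{G}\le\eps$, i.e. $\norm{Sy}^2\in(1\pm\eps)$ for all $y\in\mathcal{S}$, and undoing the normalization gives $\norm{SAx}^2\in(1\pm\eps)\norm{Ax}^2$ for all $x\in\REAL^d$.

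The routine parts are the cardinality bound on the net and the arithmetic that collapses $d\log 9+\log\frac1\delta$ into $O(d\log\frac1\delta)$. The one genuinely delicate step is the net-to-sphere transfer: one must work with a supremum (equivalently an operator norm) that is a priori only known to be finite, set up the self-bounding inequality correctly, and use that $y-v$ remains in the subspace $U$ so that $G$ acts on it. Everything else is a black-box use of Theorem~\ref{th:sketching}; in particular the probabilistic structure — and the fact that only $4$-wise independence within each row of $R$ is required — is inherited verbatim from that theorem.
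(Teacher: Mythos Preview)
Your proof is correct and follows exactly the approach the paper indicates: the paper does not actually prove Theorem~\ref{th:embedding} itself but states it as a preliminary result from the literature, noting only that it follows from Theorem~\ref{th:sketching} by ``putting a grid of appropriate size on the unit ball and embedding the grid points'' (citing \cite{ImpLinAlg}, with the optimal dimension settled in \cite{LinAlgStream,NelsonN14}). Your $\gamma$-net plus self-bounding operator-norm argument is precisely a clean instantiation of that sketch, so there is nothing to contrast.
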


For our lower bounds we will use some standard results from two-party one-way communication complexity. Alice and Bob are given input strings $x, y \in \br{0,1}^n$. Their goal is to compute some boolean function $f(x,y) \in \br{0,1}$ by exchanging as little information as possible. In one-way protocols, Alice sends a message and then Bob must compute the output based on this message and on its own input string. Let $R_\delta(f)$ the minimum amount of communication for a randomized two-party one-way protocol that computes $f$ with error probability at most $\delta$.

We will mainly reduce from the following two problems. In the indexing ($IND$) problem Alice is given a string $x\in\br{0,1}^n$ and Bob has an index $i \in [n]$. Bobs task is to compute $x_i$ based on Alice's message. In the disjointness problem ($DISJ$). The bit strings $x, y \in \br{0,1}^n$ received by Alice and Bob are interpreted as subsets of $[n]$ where the subset $I(a)$ contains the element $i$ if and only if $a_i=1$. Bob's task is to decide whether the two sets are disjoint. We have the following results regarding their communication complexity.

\begin{theorem}(\cite{Disjointness})
\label{th:disjointness}
$R_{1/3}(DISJ) = \Omega(n).$
\end{theorem}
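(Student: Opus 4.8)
This is a classical result, and the plan is to reproduce the standard proof; I will describe the distributional (corruption) argument, which is in the spirit of the original lower bound, and indicate the information-complexity alternative. The first observation is that it suffices to lower bound the \emph{unrestricted} randomized communication complexity of $DISJ$, since a one-way protocol is in particular a two-party interactive protocol, so $R_{1/3}(DISJ)\ge R^{\leftrightarrow}_{1/3}(DISJ)$, where $R^{\leftrightarrow}$ denotes the complexity with arbitrary interaction. After boosting the success probability to an arbitrarily large constant (a standard $O(1)$-fold repetition and majority vote), Yao's minimax principle reduces the task to the following: fix a hard input distribution $\mu$ on $\{0,1\}^n\times\{0,1\}^n$ and show that every \emph{deterministic} protocol that computes $DISJ$ correctly with $\mu$-probability at least $1-\eps_0$, for a suitable small constant $\eps_0$, must communicate $\Omega(n)$ bits. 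I would use (a rescaled version of) Razborov's distribution: assuming $4\mid n$, sample $x,y$ so that $I(x),I(y)$ are uniform $(n/4)$-element subsets of $[n]$ conditioned on $|I(x)\cap I(y)|\le 1$, with probability $3/4$ on $|I(x)\cap I(y)|=0$ (a ``disjoint'' instance) and probability $1/4$ on $|I(x)\cap I(y)|=1$ (an ``intersecting'' instance), the unique common element being uniform among the remaining coordinates.

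The heart of the matter is a combinatorial-rectangle \emph{corruption lemma}: there are absolute constants $\alpha,\beta>0$ such that for every combinatorial rectangle $R=A\times B$, writing $\mu_1(R)$ and $\mu_0(R)$ for the mass $R$ puts on disjoint resp.\ intersecting instances, $\mu_1(R)\ge 2^{-\alpha n}$ forces $\mu_0(R)\ge\beta\,\mu_1(R)$ --- every rectangle carrying a non-negligible share of the disjoint mass is ``corrupted'' by proportionally much intersecting mass. Granting this, run a deterministic protocol of cost $c$: it partitions the input space into at most $2^c$ monochromatic rectangles. On each rectangle labelled ``disjoint'' the protocol errs on all intersecting instances it contains, so $\sum_{R\text{ labelled disjoint}}\mu_0(R)\le\eps_0$; by the corruption lemma the disjoint-labelled rectangles with $\mu_1(R)\ge 2^{-\alpha n}$ contribute at most $\eps_0/\beta$ of disjoint mass, while the remaining ($\le 2^c$) disjoint-labelled rectangles contribute at most $2^c\,2^{-\alpha n}$. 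Since the protocol correctly labels at least $3/4-\eps_0$ of the total mass as ``disjoint'', we get $2^c\,2^{-\alpha n}\ge 3/4-\eps_0-\eps_0/\beta$, which for small $\eps_0$ is a positive constant, hence $c=\Omega(n)$. Establishing the corruption lemma is the main obstacle: it is an isoperimetric/counting statement about large product sets inside the slice $\binom{[n]}{n/4}$, showing that such a product is spread out enough to be forced to meet the event $|I(x)\cap I(y)|=1$ with non-negligible probability; the standard route is an averaging / density-increment argument over the coordinates.

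An alternative that bypasses the combinatorial lemma is the information-complexity argument of Bar-Yossef, Jayram, Kumar and Sivakumar. One writes the intersection predicate as $1-DISJ(x,y)=\bigvee_{i=1}^n(x_i\wedge y_i)$, puts a ``collapsing'' coordinate-product distribution on $(x,y)$ (with an auxiliary bit per coordinate) under which every coordinate is a NO-instance of $\wedge$, applies a direct-sum theorem to lower bound the conditional information cost of the whole function by $n$ times the conditional information cost of a single $\wedge$-gadget, and finally bounds the latter below by a constant: a protocol with negligible information cost on the single gadget must make the transcript distributions on the inputs $(0,1)$ and $(1,0)$ close in Hellinger distance (triangle inequality through $(0,0)$), but the cut-and-paste property of private-coin protocols equates that distance with the Hellinger distance between the transcripts on $(0,0)$ and $(1,1)$, which correctness forces to be bounded away from zero --- a contradiction. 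Either way one obtains $R^{\leftrightarrow}_{1/3}(DISJ)=\Omega(n)$, hence $R_{1/3}(DISJ)=\Omega(n)$. Finally I note that the streaming reductions in this paper only invoke the one-way bound, and that this weaker statement already follows from the standard $\Omega(n)$ lower bound for one-way $IND$ by the trivial reduction: on Bob's index $i$ set $y$ to the indicator vector of $\{i\}$, so that $DISJ(x,y)=1-x_i$.
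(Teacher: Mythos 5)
The paper does not prove this statement at all --- it is quoted from \cite{Disjointness} (the Kalyanasundaram--Schnitger/Razborov lower bound) --- so there is no in-paper argument to compare against; what you have written is an outline of the standard literature proofs. Your corruption-argument sketch is faithful to Razborov's proof and the accounting at the end is correct (at most $2^c$ transcript rectangles, error mass bounds the $\mu_0$-mass of disjoint-labelled rectangles, corruption converts that into a bound on their $\mu_1$-mass, and the leftover small rectangles force $2^c\ge \text{const}\cdot 2^{\alpha n}$). But you explicitly leave the corruption lemma --- the entire technical content of the theorem --- unproven, and the information-complexity route is likewise only a roadmap; as a self-contained proof of the two-way $\Omega(n)$ bound the proposal therefore has a real gap, albeit one you flag honestly. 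The saving observation is your last sentence: the paper defines $R_{1/3}$ as \emph{one-way} complexity and only ever uses the one-way bound, and for that the trivial reduction from indexing (Bob sets $I(y)=\{i\}$, so the sets are disjoint iff $x_i=0$) combined with Theorem \ref{th:indexing} is a complete two-line proof. That is arguably the cleanest justification of the statement as the paper actually uses it, and it needs none of the heavy machinery; if you intend the full interactive lower bound, you must actually establish the corruption lemma (or the single-gadget Hellinger bound in the Bar-Yossef et al.\ route) rather than assert it.
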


\begin{theorem}(\cite{CommComplexity})
\label{th:indexing}
$R_{1/3}(IND) = \Omega(n).$
\end{theorem}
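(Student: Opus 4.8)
The plan is to prove this known bound by the standard information-theoretic argument: show that Alice's message must encode $\Omega(n)$ bits of information about her string $x$. First I would invoke Yao's minimax principle to reduce to deterministic one-way protocols whose error is at most $1/3$ on a carefully chosen hard distribution, namely $x$ uniform on $\br{0,1}^n$ and, independently, the queried index $j$ uniform on $[n]$; fixing the random coins of a worst-case-error randomized protocol keeps its average error on this distribution below $1/3$. Writing $M=M(x)$ for Alice's message and $g(M,j)$ for Bob's output bit, the message length is at least the Shannon entropy $H(M)$, so it suffices to show $H(M)=\Omega(n)$.

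The core of the argument is the chain of inequalities $H(M)\ge I(M;x)\ge\sum_{j=1}^n I(M;x_j)$. The first step holds because $M$ is a deterministic function of $x$; the second uses that the bits $x_1,\dots,x_n$ are mutually independent, so the chain rule gives $I(M;x)=\sum_j I(M;x_j\mid x_{<j})$, and since $x_j$ is independent of the prefix $x_{<j}$ one has $I(M;x_j\mid x_{<j})=H(x_j)-H(x_j\mid M,x_{<j})\ge H(x_j)-H(x_j\mid M)=I(M;x_j)$, using only that conditioning does not increase entropy.

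To finish I would apply Fano's inequality coordinate-wise: for fixed $j$, Bob predicts the single bit $x_j$ from $M$ with error $\delta_j$, so $H(x_j\mid M)\le h(\delta_j)$ where $h(p)=-p\log_2 p-(1-p)\log_2(1-p)$ is the binary entropy function, whence $I(M;x_j)=1-H(x_j\mid M)\ge 1-h(\delta_j)$. By the choice of the hard distribution $\tfrac1n\sum_j\delta_j\le 1/3$, so concavity of $h$ (Jensen's inequality) gives $H(M)\ge\sum_j(1-h(\delta_j))\ge n(1-h(\tfrac1n\sum_j\delta_j))\ge n(1-h(1/3))=\Omega(n)$, since $h(1/3)<1$. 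This yields $R_{1/3}(IND)\ge H(M)=\Omega(n)$.

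The only places that need genuine care are the reduction to a deterministic protocol over a product input distribution — this is exactly what makes the chain-rule/superadditivity step legitimate — and getting the direction of $I(M;x_j\mid x_{<j})\ge I(M;x_j)$ right, which crucially exploits independence of the coordinates together with ``conditioning reduces entropy''; the remaining manipulations (Fano, Jensen) are routine, so I would not expect a real obstacle here beyond bookkeeping. Since the statement is quoted from \cite{CommComplexity}, one may of course simply cite it, but the sketch above is the argument I would reconstruct.
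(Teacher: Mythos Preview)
Your proof sketch is correct and is the standard information-theoretic argument for the one-way randomized lower bound on indexing. The steps are all sound: Yao's principle to pass to a deterministic protocol on the uniform product distribution, superadditivity $I(M;x)\ge\sum_j I(M;x_j)$ via the chain rule together with independence of the coordinates, Fano's inequality coordinate-wise, and Jensen's inequality using concavity of the binary entropy function. The one point you flagged yourself --- getting the direction of $I(M;x_j\mid x_{<j})\ge I(M;x_j)$ right --- is handled correctly.

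As for the comparison with the paper: there is nothing to compare. The paper does not prove this theorem at all; it merely states it as a preliminary fact with a citation to \cite{CommComplexity} and then uses it as a black box in the reductions of Sections~\ref{section:Clustering} and~\ref{section:Regression}. You have supplied a complete argument where the paper supplies none, and you explicitly noted this possibility at the end of your write-up. So your proposal is strictly more than what the paper provides, and it is correct.
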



\section{Estimating the frequency moments}
\label{section:F2}

Our first problem is to maintain an estimate on the second frequency moment for an infinite stream $A=(a_i)_{i\in\mathbb{N}}$ of integers from $[N] = \{1, \ldots N\}$. Let $(e_i)_{i \leq N}$ denote the canonical basis of $\REAL^N$. We can associate the vector $e_{a_i}$ of this base with each entry $a_i$ of the input sequence. Then the frequency vector of $A$ can be expressed as $X = \sum_i e_{a_i}$ and its second frequency moment is given by $F_2=\sum_{i=1}^N x_i^2$. The second moment of the vector $X$ is equal to the squared Euclidean norm of $X$, which can be estimated within a fixed precision using the sketching matrices from Theorem \ref{th:sketching}. We will make use of this technique to develop an asymptotically exact streaming algorithm for estimating the second frequency moment.

In our first lemma we show that we will get a good estimate if we simply discard a small prefix of the input stream. More precisely, if the part that we discard is sufficiently smaller than the square root of the total size, then we will get a good estimate for the entire sequence.
\begin{lemma}
\label{lemma:F2}
Let $A$ be an input sequence of size $n$ composed as the concatenation of two parts $A^{(1)}$ and $A^{(2)}$ of size $n_1\leq \sqrt{n}$ and $n_2=n-n_1$. Let $X_2$ be the frequency vector of the subsequence $A_2$. Let $S$ be a sketching matrix according to Theorem \ref{th:sketching} with $m \geq C \frac{1}{\eps_2^2}\log{\frac{1}{\delta}}$ rows for some absolute constant $C$. Then with probability at least $1 - \delta$ we have that $\norm{SX_2}^2 \in (1 \pm \eps)\norm{X}^2,$ where $\eps = \eps_2 + \frac{3n_1}{\sqrt n}$.
\end{lemma}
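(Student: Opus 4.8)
The plan is to compare $\norm{SX_2}^2$ to $\norm{X}^2$ in two stages: first relate $\norm{SX_2}^2$ to $\norm{X_2}^2$ using the sketching guarantee, and then relate $\norm{X_2}^2$ to $\norm{X}^2$ by controlling how much the discarded prefix $A^{(1)}$ can change the squared norm of the frequency vector. For the first stage I would simply apply Theorem~\ref{th:sketching} to the fixed vector $X_2$: with the stated number of rows $m \geq C\eps_2^{-2}\log(1/\delta)$ we get $\norm{SX_2}^2 \in (1\pm\eps_2)\norm{X_2}^2$ with probability at least $1-\delta$. All remaining work is deterministic and concerns $\norm{X_2}^2$ versus $\norm{X}^2 = \norm{X_1+X_2}^2$, where $X_1$ is the frequency vector of $A^{(1)}$.

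The second stage is the heart of the argument. Writing $X = X_1 + X_2$, we have $\norm{X}^2 = \norm{X_2}^2 + 2\dotProd{X_1}{X_2} + \norm{X_1}^2$, so $\bigl|\norm{X}^2 - \norm{X_2}^2\bigr| \leq 2|\dotProd{X_1}{X_2}| + \norm{X_1}^2$. Now $X_1$ is a nonnegative integer vector whose coordinates sum to $n_1$, hence $\norm{X_1}^2 \leq n_1^2$ and $\normone{X_1} = n_1$; likewise $X_2$ has nonnegative coordinates bounded by $n_2 \leq n$, so $|\dotProd{X_1}{X_2}| \leq \normone{X_1}\cdot\max_i (X_2)_i \leq n_1 \cdot n$. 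This already gives $\bigl|\norm{X}^2 - \norm{X_2}^2\bigr| \leq 2 n_1 n + n_1^2 \leq 3 n_1 n$ (using $n_1 \leq n$), which is too weak. The sharper bound uses that $\norm{X}^2 \geq \norm{X_2}^2 \geq$ (something comparable to $n$ in the worst case is false — in fact $\norm{X}^2$ can be as small as $n$ when all frequencies are $1$), so we must instead compare the additive error to $\norm{X}^2$ itself. Since every coordinate of $X_2$ is at most $\norm{X_2}^2/1$ is not useful; the right observation is $|\dotProd{X_1}{X_2}| \le \sum_i (X_1)_i (X_2)_i \le \bigl(\max_i (X_1)_i\bigr)\normone{X_2}$ — no. The clean route: by Cauchy–Schwarz, $\dotProd{X_1}{X_2} \le \norm{X_1}\norm{X_2} \le n_1 \norm{X_2} \le n_1\norm{X}$, and $\norm{X_1}^2 \le n_1^2 \le n_1 \norm{X}$ whenever $\norm{X}\ge n_1$, which holds because $\norm{X} \ge \sqrt{\normone{X}} = \sqrt{n} \ge n_1$ by the hypothesis $n_1 \le \sqrt n$ (using $\norm{X}^2 = \sum x_i^2 \ge \sum x_i = n$). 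Hence $\bigl|\norm{X}^2 - \norm{X_2}^2\bigr| \le 2n_1\norm{X} + n_1\norm{X} = 3n_1\norm{X}$.

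Combining the two stages: on the event that the sketch succeeds,
\begin{align*}
\norm{SX_2}^2 &\le (1+\eps_2)\norm{X_2}^2 \le (1+\eps_2)\bigl(\norm{X}^2 + 3n_1\norm{X}\bigr) \le (1+\eps_2)\Bigl(1 + \tfrac{3n_1}{\norm{X}}\Bigr)\norm{X}^2,
\end{align*}
and similarly $\norm{SX_2}^2 \ge (1-\eps_2)\bigl(1 - \tfrac{3n_1}{\norm{X}}\bigr)\norm{X}^2$. Using $\norm{X} \ge \sqrt n$ gives $\tfrac{3n_1}{\norm{X}} \le \tfrac{3n_1}{\sqrt n}$, and then $(1\pm\eps_2)(1\pm\tfrac{3n_1}{\sqrt n}) \subseteq (1 \pm \eps)$ with $\eps = \eps_2 + \tfrac{3n_1}{\sqrt n}$ after discarding the nonnegative cross term on the lower side and using $\eps_2 \cdot \tfrac{3n_1}{\sqrt n} \ge 0$ on the upper side (a short check shows $(1+a)(1+b) \le 1 + (a+b) + ab$ and one absorbs $ab$ into the definition, or one simply notes $(1-a)(1-b) \ge 1-(a+b)$ directly and $(1+a)(1+b)$ needs the slightly more careful bound — this is the one routine inequality to verify). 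The main obstacle I anticipate is exactly this last comparison: getting the error to come out as the clean sum $\eps_2 + 3n_1/\sqrt n$ rather than a product-type expression requires the observation $\norm{X}^2 \ge n$, which is what lets $n_1 \le \sqrt n$ translate into the additive perturbation being small relative to $\norm{X}^2$; without that inequality the bound degrades badly.
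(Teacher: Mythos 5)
Your lower-bound argument is essentially the paper's: expand $\norm{X}^2=\norm{X_2}^2+2\dotProd{X_1}{X_2}+\norm{X_1}^2$, bound the cross term by Cauchy--Schwarz, and use $\norm{X_1}\le n_1$ together with $\norm{X}^2\ge\normone{X}=n$ (so $\norm{X}\ge\sqrt n$) to convert the additive perturbation into a relative one; the observation that $(n_1/\sqrt n)^2\le n_1/\sqrt n$ because $n_1\le\sqrt n$ is exactly what produces the constant $3$, and since the cross term $\eps_2\cdot 3n_1/\sqrt n$ only helps on the lower side, $(1-\eps_2)(1-3n_1/\sqrt n)\ge 1-\eps$ holds and that direction is correct.

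The gap is on the upper side. From $\norm{SX_2}^2\le(1+\eps_2)\norm{X_2}^2$ and $\norm{X_2}^2\le\norm{X}^2+3n_1\norm{X}$ you obtain $\norm{SX_2}^2\le(1+\eps_2)\bigl(1+3n_1/\sqrt n\bigr)\norm{X}^2$, which exceeds $(1+\eps)\norm{X}^2$ by the cross term $3\eps_2 n_1/\sqrt n\cdot\norm{X}^2$; you flag this yourself, but ``absorbing $ab$ into the definition'' is not available since $\eps$ is pinned down in the statement as exactly $\eps_2+3n_1/\sqrt n$. Moreover your upper estimate $\norm{X_2}^2\le\norm{X}^2+3n_1\norm{X}$ is strictly weaker than what is trivially true: $X_1$ and $X_2$ are entrywise nonnegative and $X=X_1+X_2$, so $\norm{X_2}\le\norm{X}$ deterministically. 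This is the paper's route, and it closes the gap in one line: $\norm{SX_2}^2\le(1+\eps_2)\norm{X_2}^2\le(1+\eps_2)\norm{X}^2\le(1+\eps)\norm{X}^2$, with no cross term. In short, the perturbation analysis is needed only for the lower bound; with that replacement on the upper side your proof is complete and coincides with the paper's.
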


\begin{proof}
First observe that using the result of Theorem \ref{th:sketching} we have with probability $1 - \delta$ that $\norm{SX_2}^2 \in (1 \pm \eps_2)\norm{X_2}^2.$ Thus, the upper bound of our claim follows immediately from $\norm{X_2} \leq \norm{X}$. For the lower bound, let $X_1$ denote the frequency vector of $A^{(1)}$ and observe that we have
\begin{align*}
\norm{SX_2} &\geq (1  - \eps_2)\norm{X_2}^2 \\
&\geq \norm{X_1 + X_2}^2 - 2\langle X_1, X_2\rangle - \norm{X_1}^2 - \eps_2\norm{X_2}^2\\ 
&=\norm{X}^2 \Big(1 - \underbrace{\frac{\norm{X_1}^2 + 2\langle X_1, X_2 \rangle + \eps_2 \norm{X_2}^2}{\norm{X}^2}}_{\eps}\Big).
\end{align*}
Now, note that $\sqrt{n} \leq \norm{X} \leq n$ and similarly $\sqrt{n_1} \leq \norm{X_1} \leq n_1$. This finally yields 
$$\eps \leq \eps_2 + \left( \frac{\norm{X_1}}{\norm{X}}\right)^2 + 2 \frac{\norm{X_1}}{\norm{X}} \leq \eps_2 + \left(\frac{n_1}{\sqrt{n}}\right)^2 + \frac{2 n_1}{\sqrt{n}} \leq \eps_2 + \frac{3 n_1}{\sqrt{n}}.$$
\end{proof}

We can already devise a quite simple algorithm from this lemma. Assume that we have already processed a first part $A_1$. We can continue to read a number of elements that is large enough to form the second part, such that the contribution of $A_1$ will be negligible. Then, we simply compute a sketch for $A_2$ with a better precision and use it as an approximation for the whole sequence. After reading enough elements, Lemma \ref{lemma:F2} ensures that we can bound the error that we make in the process.

Now, we would have a problem if we reached the point to report an estimate before the appropriate number of new elements has been read. The old sketch for $A_1$ would not meet the error bounds for the whole sequence because of the elements that have only been inserted into the second sketch and the sketch for $A_2$ would not use enough elements for the first part to be negligible. So instead, we continue to update both the first sketch and the new sketch with the elements of the sequence. If we meet the point to report before the second sketch contains enough elements, we just ignore it and return the estimate of the first sketch. This gives the following theorem.

\begin{theorem}
\label{th:F2}
There exists an asymptotically exact streaming algorithm that with probability at least $1-\delta$ maintains an estimate on the second frequency moment of a sequence within $(1 \pm \eps)$ relative error, where $\eps = \frac{1}{\log n}$. The algorithm uses $\bigO {\log^2 n \log(1/\delta)}$ memory words.
\end{theorem}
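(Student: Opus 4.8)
The plan is to iterate the one-shot trade-off of Lemma~\ref{lemma:F2} along an infinite sequence of epochs, keeping only a constant number of sketches alive at any time, following the informal description above. In epoch $j$ we maintain a matrix $S_j$ as in Theorem~\ref{th:sketching}, created at stream position $t_j$ and fed every element from position $t_j$ onward, together with $v_j = S_j X^{(j)}$, where $X^{(j)}$ is the frequency vector of the suffix starting at $t_j$. The number of rows of $S_j$ is chosen so that its intrinsic precision is $\eps_2^{(j)} = \tfrac{1}{10\log t_j}$, i.e. $m_j = \bigO{\log^2 t_j\,\log(1/\delta_j)}$ for a per-epoch failure probability $\delta_j$. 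The start times grow quadratically, $t_{j+1} = t_j^{2}$ --- this is exactly the rate that makes the discarded prefix of length $t_j-1$ fall below $\sqrt n$ (indeed below $\sqrt n/\log^2 n$) by the time $S_j$'s estimate is required, as Lemma~\ref{lemma:F2} demands. Setting $L_j := t_j^{2.5}$ and $U_j := t_j^{5}$, one checks for all large $j$ that on the window $n \in [L_j,U_j]$ the prefix term $3(t_j-1)/\sqrt n$ stays below $\tfrac{1}{2\log n}$ (because $n \ge L_j$) and $\eps_2^{(j)}$ stays below $\tfrac{1}{2\log n}$ (because $n \le U_j$), so by Lemma~\ref{lemma:F2} the error of the estimate $\norm{v_j}^2$ is below $\eps = 1/\log n$; moreover $L_{j+1} = U_j$, so the windows tile $[L_{j_0},\infty)$ for a constant $j_0$. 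At position $n$ the algorithm reports $\norm{v_j}^2$ for the unique $j$ with $n \in [L_j,U_j]$; the successor $S_{j+1}$ was created at $t_{j+1} \le L_j$ and is already being filled, so the handover when $n$ leaves $[L_j,U_j]$ is seamless --- this is precisely the ``first sketch / new sketch'' scheme above.

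Accuracy of each report is then immediate from Lemma~\ref{lemma:F2}, applied in epoch $j$ with the discarded prefix of length $n_1 = t_j-1 \le \sqrt n$ and parameter $\eps_2 = \eps_2^{(j)}$: with probability $1-\delta_j$ we get $\norm{v_j}^2 \in (1\pm\eps)\norm{X}^2$ with $\eps = \eps_2^{(j)} + 3(t_j-1)/\sqrt n \le 1/\log n$, where $\norm{X}^2 = F_2$ is the exact second moment of the length-$n$ substream; since $1/\log n \to 0$ this is exactly the per-step error needed for the asymptotically exact guarantee (the passage to all $n$ simultaneously is discussed below). For the space bound, at every position $n$ only a constant number of sketches are alive --- the one currently reporting, the one being filled, and at most one that has just been initialised --- and the largest of them is an $S_{j'}$ with $t_{j'} \le n^{2}$, since the reporting epoch has $t_j \le n^{1/2.5}$ and hence $t_{j+2} = t_j^{4} \le n^{2}$. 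Therefore each alive sketch has $\bigO{\log^2 n\,\log(1/\delta)}$ rows; by Theorem~\ref{th:sketching} the entries in a row need only be $4$-wise independent, so a row is stored in $\bigO{\log N}$ bits, the whole structure fits in $\bigO{\log^2 n\,\log(1/\delta)}$ memory words, and each update adds the appropriate column of every alive sketch to the corresponding $v_j$ within the same time bound.

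The step that genuinely requires care --- and that goes beyond a routine chaining of Lemma~\ref{lemma:F2} --- is to make the convergence hold \emph{for all large $n$ simultaneously} with probability $1-\delta$, over the unboundedly many sketches used along an infinite stream. This has two parts. First, the per-epoch failure probabilities must be summable, e.g. $\delta_j = \delta\cdot 6/(\pi^2 j^2)$, so that with probability $1-\delta$ no epoch ever fails; since the reporting epoch at position $n$ has index $j = \bigO{\log\log n}$, this adds only $\bigO{\log\log\log n}$ to $\log(1/\delta_j)$. Second, Theorem~\ref{th:sketching} guarantees the embedding of a \emph{fixed} vector, whereas $X^{(j)}$ changes with every element, so within an epoch one must control $\norm{v_j}^2$ at every position of $[L_j,U_j]$, not just one. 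One route is a union bound over those positions inside the amplification; a more economical one exploits that $\norm{S_j e_i}^2 = 1$ holds exactly for every coordinate $i$, whence both $F_2$ and the maintained estimate $\norm{v_j}^2$ change by at most a $(1+\bigO{1/\log^2 n})$ factor over any block of $\sqrt n/\log^2 n$ consecutive positions, so it suffices to certify the estimate at a sparse set of checkpoints and freeze the reported value between them. Carrying out this uniformisation carefully is the main obstacle; with it in hand, the construction and bounds above yield the claimed asymptotically exact algorithm within $\bigO{\log^2 n\,\log(1/\delta)}$ words.
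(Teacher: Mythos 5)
Your construction is essentially the same as the paper's: keep a small constant number of live AMS sketches, each started at a later point in the stream with higher precision, and report from the one for which Lemma~\ref{lemma:F2} certifies an $O(1/\log n)$ error. Your explicit epoch bookkeeping (start times $t_{j+1}=t_j^2$, report window $[t_j^{2.5},t_j^5]$, tiling argument, constant number of live sketches, $t_{j'}\leq n^2$) is more detailed than the paper's terse proof, but it is the same two-sketch scheme the paper describes informally and then proves by choosing $\eps_2=1/\log n_1$ and $n=(n_1/\eps_2)^2$. Your verification that the discarded-prefix term and $\eps_2^{(j)}$ each stay below $1/(2\log n)$ inside the window is correct.

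Where you go beyond the paper is the final paragraph. You are right that Theorem~\ref{th:sketching} is a statement about a \emph{fixed} vector, while the definition of an asymptotically exact algorithm asks for the ratio to converge along the entire run, i.e.\ a guarantee that holds simultaneously for all sufficiently large $n$; and you are right that the paper's proof does not address this. Your two-part remedy is the natural one, and the per-epoch failure budget $\delta_j=\Theta(\delta/j^2)$ is standard and correct (it only adds a $\log\log\log n$ to $\log(1/\delta_j)$). But the second part, the checkpoint idea, is not yet a proof. To keep the drift of both $F_2$ and $\norm{v_j}^2$ below $O(1/\log^2 n)$ between certified positions you need checkpoints spaced $\Theta(\sqrt n/\log^2 n)$ apart, which gives polynomially many checkpoints in a window $[L_j,U_j]$; a naive union bound over them would inflate $\log(1/\delta_j)$ by $\Theta(\log n)$ and hence the space to $\Theta(\log^3 n)$, exceeding the claimed $O(\log^2 n\log(1/\delta))$. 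So either one must be cleverer than a union bound over checkpoints (you gesture at this but do not give the argument), or one should accept the weaker per-$n$ reading of the theorem, in which case the uniformisation paragraph is not needed and your first two paragraphs already complete the proof. As written, you flag the gap honestly (``the main obstacle'') but do not close it; under the per-$n$ reading your argument matches the paper and is complete, while under the uniform-over-$n$ reading both your argument and the paper's are incomplete in the same place.
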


\begin{proof}
To prove the claim, we can choose $\eps_2 = \frac{1}{\log n_1}$, and $n = (\frac{n_1}{\eps_2})^2$. Then by applying Lemma \ref{lemma:F2}, we obtain an approximation of the second moment within $(1\pm\eps)$ relative error where $\eps = 4 \eps_2 \leq \frac{12}{\log n}$ because $\log n_1 \geq \log(n^{1/3}) = \frac{\log n}{3}$. Renaming $\eps$ and folding the constant factor into the memory requirements gives the required result. 
As noted in \cite{ApproxFreqMom}, the columns of the sketching matrices only need to be $4$-wise independent. Using this, each line of the matrix can be stored implicitly using only $O(\log N)$ bits of memory, i.e., one memory word. So, we only need $O(\frac{1}{\eps_i^2}\log(\frac{1}{\delta}))$ memory to store the sketching matrix at step $i$. Since each of the entries of the sketched vectors have values smaller than $n$, they can be stored in one memory word, so this is also the memory required to store the sketched vector. The memory used then follows from the fact that at any moment we only need to keep two sketching matrices.
\end{proof}

One drawback is that, in order to get a better estimate, we need to process a very large number of elements, namely more than $n_1^2$. We can reduce that by computing several sketches in parallel, at the cost of using slightly more memory. The algorithm is described in Algorithm \ref{algo:F2}. We use the bound of Lemma \ref{lemma:F2} to determine which sketch has the best error guarantee. Also note that after a certain number of steps, some of the sketches are no longer needed because some other sketch that we have started later has already reached a better error bound. In that case we can simply discard the old one instead of continuing to update it.

\begin{algorithm}
\DontPrintSemicolon 
\KwIn{A sequence $A=\{a_1, a_2, \ldots, a_n\}$ of integers, $a_i \in [N]$}
\KwOut{An approximation of $F_2$}
$i \gets 0$ \;
\While{not \emph{End of Stream}} {
  Start a new sketch with precision $\eps_i$\;
  Process the next $n_i=2^i n_0$ items for all sketches \;
  $i \gets i+1$ \;
}
\Return estimate with smallest error bound according to Lemma \ref{lemma:F2}\;
\caption{Improving algorithm for $F_2$ estimation}
\label{algo:F2}
\end{algorithm}

Compared to the previous method, where we keep only two sketches at any moment, the precision increases more often. Indeed, if one sketch is started after $n_1$ elements have been processed (these elements are dropped for this sketch), then it will require less than $(\frac{n_1}{\eps_2})^2=(n_1 \allowbreak \log n_1)^2$ more elements to become valid. The next sketch is started after $2n_1$ elements are processed, and it will require $(2n_1 \log 2 n_1)^2 \leq 8(n_1 \log n_1)^2$ more elements to become valid. So we get a better approximation every time the size of the input is multiplied by 8. However, we need to keep $\log n$ sketches at any time, so the memory used by this algorithm is increased by a factor of at most $\log n$.

\subsection{Lower bounds on estimating the frequency moments}
While we have a positive result on $F_2$-estimation, we now show that there exists no asymptotically exact streaming algorithm for $F_0$-estimation. This is particularly interesting since in the usual streaming setting both problems allow for $(1\pm \eps)$-approximation algorithms.

\begin{theorem}
\label{lem:fzero}
Any asymptotically exact streaming algorithm for estimating $F_0$ requires at least $\Omega(n)$ memory.
\end{theorem}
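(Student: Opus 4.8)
The plan is to reduce from the indexing problem $IND$ (Theorem~\ref{th:indexing}), exploiting the fact that an asymptotically exact algorithm for $F_0$ must, in particular, distinguish the number of distinct elements with multiplicative error $o(1)$ once the stream is long enough, which for integer-valued $F_0$ forces an \emph{exact} count. Suppose for contradiction that some one-pass algorithm $\mathcal A$ maintains an estimate $\widehat F_0^{(m)}$ with $\widehat F_0^{(m)}/F_0^{(m)} \to 1$ with probability $\geq 1-\delta$, using $s(m) = \log^{O(1)} m$ space after reading $m$ items. Since $F_0$ takes integer values and the ratio tends to $1$, there is a threshold $m_0$ beyond which, with probability $\geq 1-\delta$, the rounded estimate equals $F_0^{(m)}$ exactly; in particular $\mathcal A$ can be used to recover $F_0$ exactly on streams of length at least $m_0$ over a universe that we are free to choose as large as we like.

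Next I would embed an instance of $IND$ of size $n$ as follows. Alice holds $x \in \{0,1\}^n$; she feeds into the stream, over a universe $[2n]$ (padded up so the total length exceeds $m_0$ — e.g.\ by first inserting a block of fixed "filler" distinct elements from a disjoint universe, whose count Alice knows and can subtract off), the elements $\{\,2j-1 : x_j = 0\,\} \cup \{\,2j : x_j = 1\,\}$, i.e.\ for each coordinate she inserts exactly one of the two tokens associated with position $j$ according to $x_j$. She then sends the memory state of $\mathcal A$ to Bob; this costs $s(m) = \log^{O(1)} m$ bits. Bob, holding index $i \in [n]$, continues the stream by inserting the single token $2i-1$ (the "$x_i = 0$" token for coordinate $i$) and reads off $\mathcal A$'s estimate of $F_0$ for the combined stream. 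If $x_i = 0$ then Alice already inserted $2i-1$, so the distinct count is unchanged; if $x_i = 1$ then $2i-1$ is new, so the distinct count increases by exactly one. Hence from the exact value of $F_0$ before and after his insertion Bob recovers $x_i$ with probability $\geq 1-\delta$, giving a one-way protocol for $IND$ with communication $s(m)$. By Theorem~\ref{th:indexing}, $s(m) = \Omega(n)$; since $m = \Theta(n) + (\text{filler length})$ and the filler length can be taken polynomial in $m_0$ (a constant independent of $n$), we get $s(m) = \Omega(m)$, i.e.\ the algorithm needs linear space, a contradiction for $n$ large enough.

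A few details need care. First, Bob must learn $F_0$ both before and after his own insertion; this is fine because the "before" value can be communicated by Alice as part of her message (it is a single $O(\log n)$-bit number, negligible compared to $s(m)$), or alternatively we restructure so that Bob only needs the "after" value together with knowledge of whether the count stayed the same — but the cleanest route is just to have Alice append $F_0(x) := |\{j : \text{any token inserted}\}| = n$ (which is a constant!) so in fact Alice's token count before Bob acts is deterministically $n$ plus the known filler contribution, and Bob simply checks whether $\mathcal A$'s post-insertion exact count equals that value or exceeds it by one. Second, the success probability: we need $\mathcal A$ correct simultaneously at the one query point Bob cares about, so a single invocation with its native failure probability $\delta < 1/2$ suffices to beat the $1/3$ threshold in Theorem~\ref{th:indexing} (adjust constants, or amplify $\delta$ down beforehand, which only changes $s(m)$ by a constant factor). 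Third, making the total stream length exceed the (unknown a priori but existent) threshold $m_0$: since $m_0$ depends only on $\mathcal A$ and $\delta$, not on $n$, we pad with $m_0$ filler items once and for all; this does not affect the asymptotics in $n$.

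The main obstacle, and the step to get exactly right, is the passage from "$\widehat F_0/F_0 \to 1$ with probability $1-\delta$" to "$\widehat F_0$ is exactly correct past a fixed length with probability $1-\delta$." The subtlety is that the definition only guarantees the \emph{ratio} converges for each fixed stream; I must argue that in the reduction the relevant streams all have length $\geq m_0$ for a \emph{single} $m_0$ working across all $2^n$ Alice-inputs and all $n$ Bob-indices. This is handled by noting $m_0$ is determined by $\mathcal A$ and $\delta$ alone and that on every stream arising in the reduction the value of $F_0$ is between $1$ and $2n$, so "ratio within $1/(4n)$ of $1$" — which holds for all streams of length $\geq m_0$ for an appropriate $m_0 = m_0(\mathcal A, \delta, n)$ — already forces exactness; one then checks $m_0$ can be taken independent of $n$ by padding, or, if one is worried $m_0$ genuinely grows with $n$, observes that even $m_0 = n^{O(1)}$ is fine since $s(m_0) = \log^{O(1)} m_0 = \log^{O(1)} n = o(n)$ still contradicts the $\Omega(n)$ lower bound. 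Either way the space bound collapses.
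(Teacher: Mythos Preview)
Your core idea---reduce from a one-way communication problem and exploit asymptotic exactness to force an \emph{exact} determination of $F_0$---is the same as the paper's. The paper reduces from $DISJ$ rather than $IND$; either works and gives the same $\Omega(n)$ bound.

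However, your padding mechanism has a real gap. You have Alice insert $m_0$ \emph{distinct} filler elements before her $n$ tokens, so the $F_0$ the algorithm is estimating is at least $m_0+n$, not at most $2n$ as you later assert. Since the guarantee is multiplicative, the absolute error at the query point is $\eps(m)\cdot F_0 \approx \eps(m)\cdot(m_0+n)$, and rounding to the correct integer would require $\eps(m_0+n+1)<\frac{1}{2(m_0+n+1)}$, i.e.\ $\eps(m)=o(1/m)$---which asymptotic exactness does not promise. Even if you switch to a single repeated filler token (so $F_0\le n+2$), having \emph{Alice} pad means the communicated state has size $s(m_0+n)$. If the algorithm's error decays only like $1/\log m$, then achieving error $<1/(4n)$ forces $m_0=2^{\Theta(n)}$, and $\operatorname{polylog}(m_0)=n^{O(1)}$ need not be $o(n)$; your ``even $m_0=n^{O(1)}$ is fine'' step is unjustified and the contradiction with $R_{1/3}(IND)=\Omega(n)$ does not close.

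The paper avoids both issues by putting the stream-lengthening on Bob's side. Alice feeds only her set ($\le n$ items) and transmits the state, of size $s(n)$, together with $|I(a)|$. Bob inserts his set once and then \emph{repeats one of his own elements indefinitely}: this drives the stream length to infinity without changing $F_0$, so eventually $\eps(m)\cdot F_0\le \eps(m)\cdot n<1/2$ and the rounded estimate equals $F_0$ exactly, while the communication stays $s(n)+O(\log n)$. The bound $s(n)=\Omega(n)$ then follows directly, without even assuming $s$ is polylogarithmic. Your argument is easily fixed the same way: drop Alice's filler entirely, have Bob insert $2i-1$ and then repeat it until the error drops below $1/(2(n+1))$.
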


\begin{proof}
We reduce from the disjointness problem, which has linear communication complexity \cite{Disjointness}. Assume that we have an improving algorithm $\mathcal{A}$ to compute an approximation of $F_0$ using $s(n)$ memory. Then we create the following protocol for the disjointness problem. Alice and Bob interpret their input strings $a,b\in\{0,1\}^n$ as sets of integers $I(a) = \br{i \in [a], a_i = 1}$, resp. $I(b)$ and they have to decide whether $I(a)\cap I(b)=\emptyset$. Alice runs the algorithm on the integers from $I(a)$ and sends the memory of size at most $s(n)$ to Bob, as well as the number of elements $|I(a)|$ which can be coded in $\log n$ bits. Bob continues the execution of the algorithm, first inserting his own input set $I(b)$ once, and then inserting any element from $I(b)$ a great number of times. He will eventually reach a point where the error on the estimate $\tilde F_0$ returned by the algorithm satisfies $\tilde F_0 \in F_0 \left(1 \pm \frac{1}{2n}\right) \subseteq F_0 \pm \frac{1}{2}$, since the number of distinct elements is at most $n$. So Bob knows the exact number of distinct elements from the sequence. By comparing this quantity to $|I(a)| + |I(b)|$, Bob can decide whether the two sets are disjoint. This means that $s(n) + \log(n) = \Omega(n)$ implying $s(n) = \Omega(n)$.
\end{proof}

We can show a similar bound for $F_k$ estimation in the dynamic setting. This means in particular that our above algorithm can not be extended to work under insertions \emph{and} deletions.

\begin{theorem}
\label{lem:fk}
Any asymptotically exact streaming algorithm for estimating $F_k$, $k\geq 2$ under insertions and deletions requires at least $\Omega(n)$ memory.
\end{theorem}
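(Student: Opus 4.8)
The plan is to reduce from the disjointness problem, mimicking the structure of the proof of Theorem \ref{lem:fzero} but exploiting the deletion operation so that the argument works for every $k \geq 2$ rather than just for $k=0$. Suppose $\mathcal{A}$ is an asymptotically exact streaming algorithm for $F_k$ in the turnstile (insertion/deletion) model using $s(n)$ memory. Given inputs $a,b \in \{0,1\}^n$ for $DISJ$, interpreted as sets $I(a), I(b) \subseteq [n]$, Alice runs $\mathcal{A}$ on the stream that inserts each element of $I(a)$ exactly once and sends the resulting memory state, together with $|I(a)|$ encoded in $\log n$ bits, to Bob. Bob then continues the execution by \emph{deleting} every element of $I(b)$ once. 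After these updates the frequency vector has a coordinate equal to $0$ for $i \notin I(a)$, equal to $1$ for $i \in I(a)\setminus I(b)$, and equal to $-1$ (or, if one prefers to avoid negative frequencies, one can instead have Alice insert each element of $I(a)$ twice and Bob delete each element of $I(b)$ once, so the coordinates become $0$, $2$, or $1$) for $i \in I(a)\cap I(b)$. Either way, $F_k$ at this point is $|I(a)\setminus I(b)| + |I(a)\cap I(b)|$ in the first encoding, i.e. simply $|I(a)|$ if the sets intersect and also $|I(a)|$ if they do not — so we must force the two cases apart.

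To distinguish them, Bob proceeds as in the $F_0$ proof: he picks one fixed element, say the first element he deleted, and now \emph{inserts} it a very large number $M$ of times. If $I(a)\cap I(b)=\emptyset$, that element had frequency $0$ after the deletions, so after $M$ insertions its contribution is $M^k$ and $F_k = |I(a)| - 1 + M^k$ (under the second encoding the bookkeeping constant changes but the $M^k$ leading term is the same). If the sets intersect at that element, its frequency after the deletions was $-1$ (resp. $1$), so after $M$ further insertions it is $M-1$ (resp. $M+1$) and its contribution is $(M-1)^k$ (resp. $(M+1)^k$), giving $F_k = |I(a)| - 1 + (M\mp 1)^k$. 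The two candidate values of $F_k$ differ by $|M^k - (M\pm 1)^k| = \Theta(M^{k-1})$, while $F_k$ itself is $\Theta(M^k)$, so the relative gap between the two cases is $\Theta(1/M)$. Bob keeps inserting until the asymptotically exact guarantee forces the relative error below, say, $\frac{1}{4M}$ at the current stream length $n' = \Theta(M^k + n)$; at that point the estimate $\tilde F_k$ pins down which of the two values of $F_k$ is the true one, and hence whether the fixed element lies in $I(a)\cap I(b)$. This only tests one coordinate, so to decide disjointness of the whole sets Bob must repeat this for every index: after recovering the status of one element he can "reset" it (delete the $M$ extra copies and the remaining copy) and move to the next candidate element of $I(b)$, each time driving the error down far enough. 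Since Alice's single message of size $s(n) + \log n$ suffices for Bob to carry out this entire (unbounded, but that is free in the one-way model) computation and thereby solve $DISJ$ on $n$-bit inputs, Theorem \ref{th:disjointness} gives $s(n) + \log n = \Omega(n)$, hence $s(n) = \Omega(n)$.

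The main obstacle, and the step that needs the most care, is the last one: unlike in the $F_0$ proof — where a single amplification simultaneously revealed the \emph{entire} count of distinct elements and hence the full intersection picture — here one round of amplification only certifies the membership status of a single coordinate, because $F_k$ for $k\geq 2$ is dominated by the largest frequency and is insensitive to all the small ones. So one has to argue that Bob can sequentially probe all $n$ coordinates, resetting the frequency vector between probes, and that the asymptotically exact guarantee (which only promises error $o(1)$ in the limit, with no rate) can be invoked finitely many times, once per coordinate, each time by simply streaming in enough dummy updates until the promised ratio is close enough to $1$. It is worth spelling out that the guarantee holds at \emph{every} prefix length with probability $1-\delta$ over the algorithm's randomness, so a union bound over the $n$ probe points keeps the overall failure probability at most $n\delta$, which can be absorbed by a standard success-probability amplification or by noting the reduction only needs constant error $1/3$ as in Theorem \ref{th:disjointness}. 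The remaining arithmetic — checking $(M\pm1)^k - M^k = \Theta(M^{k-1})$ and that $n' = \mathrm{poly}(M)$ so that $1/\mathrm{polylog}(n')$ eventually beats $1/(4M)$ once $M$ is, say, $2^{\sqrt{\log n'}}$ — is routine and I would not belabor it.
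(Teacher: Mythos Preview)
The paper's proof is considerably simpler and avoids the per-coordinate probing entirely. There Bob \emph{inserts} the elements of $I(b)$ rather than deleting them, so that every nonzero frequency is $1$ or $2$; then
\[
F_k \;=\; |I(a)\triangle I(b)| + 2^k\,|I(a)\cap I(b)| \;=\; |I(a)| + |I(b)| + (2^k-2)\,|I(a)\cap I(b)|,
\]
which equals $|I(a)|+|I(b)|$ iff the sets are disjoint (this is where $k\ge 2$ enters). Bob then pads the stream by repeatedly inserting and deleting one fixed element: this leaves $F_k$ frozen while the stream length grows without bound, so the relative error eventually drops far enough to determine the integer $F_k$ exactly. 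A single recovery of $F_k$ decides disjointness in one shot.

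Your route can be made to work, but the write-up has two real problems. First, the frequency bookkeeping after Bob deletes $I(b)$ is reversed: for $i \in I(a)\cap I(b)$ the frequency is $1-1=0$, not $-1$; for $i \in I(b)\setminus I(a)$ it is $-1$, not $0$; so the two cases you later distinguish are swapped (and the parenthetical ``$0$, $2$, or $1$'' variant still has $m_i=-1$ for $i\in I(b)\setminus I(a)$). Second, the amplification step has a genuine gap. You need relative error below $\Theta(1/M)$, but the mechanism you describe is to keep inserting more copies of the probed element, so the stream length $n'$ and the target threshold $1/(4M)$ move together (also, $n'=\Theta(M+n)$, not $\Theta(M^k+n)$). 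Under the bare guarantee that the approximation ratio tends to one with no rate, there is no reason $\eps(n')$ ever falls below $c/M$ when $n'\approx M$; your closing remark then invokes a $1/\mathrm{polylog}(n')$ rate that the definition does not supply. The correct device is exactly the paper's: freeze $F_k$ (hence freeze $M$) and pad with insert--delete pairs to send $n'\to\infty$ independently. Once you adopt that device, the coordinate-by-coordinate loop becomes unnecessary, since a single frozen value of $F_k$ already encodes the intersection size.
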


\begin{proof}
The argument is very similar to the previous one. Again we reduce from the disjointness problem. Alice and Bob are given binary strings of size $n$ that they interpret as subsets of $[n]$. Then, Alice runs the algorithm on her input and communicates to Bob the memory of size $s(n)$ as well as $|I(a)|$, the number of elements in her set. Then Bob can continue the execution of the algorithm on his own input. Hereafter, he feeds a sequence consisting in repeatedly adding and removing the same element to the algorithm. Doing that a great number of times does not change the value of the frequency moments, but after some time, the error is small enough to determine $F_k$ exactly. Its value will be equal to $|I(a)| + |I(b)|$ if and only if the sets are disjoint. Thus, the number of bits exchanged by the two parties is at least $s(n) = \Omega(n)$.
\end{proof}

\section{Clustering}
\label{section:Clustering}

In this section we develop and analyze asymptotically exact streaming algorithms for the $k$-means and $k$-median clustering problems based on so called coresets. We also give lower bounds for these problems, as well as for the $k$-center problem. Let $P= (p_1, p_2, \ldots)$ be an infinite sequence of points and as previously let $P^{(n)}$ denote the first $n$ points of the sequence that acts as the input stream to our algorithm. The task is to find a set $C$ consisting of $k$ points that minimizes one of the following cost functions
\begin{align*}
&\sum_{i=1}^n \dist(p_i, C)^2 		&\text{($k$-means problem)} \\
&\sum_{i=1}^n \dist(p_i, C) 	&\text{($k$-median problem)} \\
&\max_{i=1}^n \dist(p_i, C) 	&\text{($k$-center problem)}. \\
\end{align*}
Here the distance of one point to a set of point is the minimal distance of this point to any point in the set, i.e., $\dist(p,C)=\min_{c\in C} \dist(p,c).$ 
One way to solve the above problems approximately within a fixed precision is to use coreset constructions.
\begin{definition}[\cite{HPClustering}]
Let $P$ be a set of points. A possibly weighted set of points $S$ is an $\eps$-coreset for $P$ if for any set $C$ of $k$ centers we have $\cost(C, S) \in (1 \pm \eps) \cost(C, P).$
\end{definition}
A coreset is often chosen to be a weighted subset of the original points and is typically smaller than $P$. So, a coreset acts as a summary that has approximately the same behaviour as the original point set regarding the considered problem.
The idea of a clustering algorithm based on coresets is that if a coreset contains very few points, then it is easy to compute the optimal for the coreset and then use this optimal as an approximate solution for the original point set. The property of the coreset then ensures that we have found a $(1+\eps)$-approximate solution to the original.
We now want to solve the $k$-means and $k$-median problems with a decreasing error using coresets. The algorithm is generic with regard to the actual coreset construction being used. Our negative results from Section \ref{section:negativeClustering} show that it is not possible to have asymptotically exact algorithms in the general case. However, we can still have an asymptotically exact algorithm if we impose some mild assumptions on our input point set.

In the following, we will focus on the $k$-means problem and then explain how the analysis can be adapted to work also for $k$-median. In order to simplify notations, the mean cost will be denoted by \emph{cost} omitting the subscript.

Algorithm \ref{algo:clustering} works as follows. It splits the input stream into blocks, where the block at step $i$ is of size $2^i$ and then applies a coreset construction to each of these blocks. Finally, as a summary for the whole sequence, the algorithm returns the union of the coresets obtained at each step. If the coreset construction that is used also has a failure probability, then we choose the size of the coreset in such a way that at step $i$ this failure probability is at most $\delta_i = \delta/c i^2$ for a large enough constant $c$. By the union bound, the probability that the algorithm succeeds for all steps is at least $1 - \delta$. We will denote by $P_i$ the set of points considered at step $i$, of size $n_i = 2^i$ and by $S_i$ the coreset built from $P_i$. Also, let $S = \bigcup S_i$ be the coreset for the whole point set $P^{(n)}$ after reading $n$ points and let $\eps_i = \frac{\eps_0}{\log n_i} = \frac{\eps_0}{i}$ be the error parameter at step $i$. 

\begin{algorithm}
\DontPrintSemicolon 
\KwIn{A sequence $P^{(n)}=\{p_1, p_2, \ldots, p_n \}$ of points, $p_i\in\REAL^d$}
\KwOut{A summary $S$ of these points}
$i \gets 0$\;
\While{not \emph{End of Stream}} {
  Compute a coreset $S_i$ for the next $n_i=2^i$ points with precision $\eps_i$ \;
  $i \gets i+1$\;
}
\Return $S=\bigcup S_i$\;
\caption{Algorithm for computing a coreset with improving precision}
\label{algo:clustering}
\end{algorithm}

We can apply the method of \cite{ExactKMeans} to get an optimal $k$-means clustering on the coreset in time polynomial in the size of the coreset. Now we want to show that if we have the optimal set of centers for the summary $S$, we get an approximate solution for the input point set with $(1+\eps)$ relative error for some $\eps=o(1)$ decreasing to zero as $n\to \infty$.

Given an infinite stream $P$ and an optimal solution $C^*$ to the $k$-means problem on $P^{(n)}$, we denote by $f(n)$ the minimum number of points assigned to one of the centers. More formally, if $C_1, \ldots, C_k$ are the different clusters, then $f(n)= \min_{i \in [k]} |C_i| $. We will prove the following theorem

\begin{theorem}
\label{th:improvingClustering}
Assume $f(n) \rightarrow \infty$. If $\tilde C$ is an optimal solution to $k$-means for the coreset $S$, then $\cost(\tilde C, P^{(n)}) \leq (1 + \eps) \cost(C^*, P^{(n)})$, where $\eps = O(\frac{1}{\log f(n)})$. The coreset $S$ contains at most $g(\frac{1}{\log n}, n) \log n$ points, where $g(\gamma, n)$ denotes the size of a $\gamma$-coreset on $n$ points.

\end{theorem}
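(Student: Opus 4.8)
The plan is to treat the output $S=\bigcup_i S_i$ as a coreset of \emph{non-uniform} precision. Since $S_i$ is an $\eps_i$-coreset of the $i$-th block $P_i$ with $|P_i|=2^i$ and $\eps_i=\eps_0/i$, and the cost is additive over the at most $\log_2 n+1$ blocks that have been read after $n$ points, for every set $C$ of $k$ centers one has $|\cost(C,S)-\cost(C,P^{(n)})|\le\sum_i\eps_i\,\cost(C,P_i)$. First I would run the standard coreset sandwich: from optimality of $\tilde C$ on $S$, applying the per-block inequality to $\tilde C$ and to $C^*$, $\cost(\tilde C,P^{(n)})\le(1+O(\eps_0))\cost(\tilde C,S)\le(1+O(\eps_0))\cost(C^*,S)\le(1+O(\eps_0))\cost(C^*,P^{(n)})$. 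This only yields a constant-factor guarantee, so the real work is to replace the $O(\eps_0)$ losses by $O(1/\log f(n))$, which after tracing through the inequalities reduces to bounding error sums of the form $\sum_i\eps_i\,\cost(C^*,P_i)$ (and, through $\cost(\tilde C,S)\le\cost(C^*,S)\le(1+\eps_0)\cost(C^*,P^{(n)})$, an analogous sum for $\tilde C$) by $O(1/\log f(n))\cdot\cost(C^*,P^{(n)})$.

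That estimate is the heart of the proof and the step I expect to be the main obstacle. Splitting the blocks into ``cheap'' ones ($\eps_i\le\eps$, which contribute at most $\eps\,\cost(C^*,P^{(n)})$) and ``expensive'' ones ($i\le\eps_0/\eps$, which together hold only $2^{O(\eps_0/\eps)}=o(f(n))$ points for a suitable constant in $\eps=\Theta(1/\log f(n))$) is not enough by itself, because a handful of points in the first few blocks can carry a constant fraction of $\cost(C^*,P^{(n)})$ — and it is precisely this phenomenon that the hypothesis $f(n)\to\infty$ excludes. The idea is to exploit $|C^*_j|\ge f(n)$ not through the size of the cost but through \emph{stability} of the optimal centers: grouping $\sum_i\eps_i\cost(C^*,P_i)=\sum_{j}\sum_{p\in C^*_j}\eps_{b(p)}\dist(p,c^*_j)^2$ by the optimal clusters (with $b(p)$ the block of $p$) and using the parallel-axis identity $\cost(C,P^{(n)})=\cost(C^{\mu},P^{(n)})+\sum_j|C_j|\,\|c_j-c^{\mu}_j\|^2$, where $C^\mu$ is the set of centroids of the clusters $C$ induces, one checks that an $\eps_i$-coreset displaces the centroid of $P_i$ by at most $2\sqrt{\eps_i\,\Var(P_i)}$, so the centroid of the points of $S$ falling in cluster $j$ differs from $c^*_j$ by only $O\big(|C^*_j|^{-1}\sum_i|P_i\cap C^*_j|\sqrt{\eps_i\,\Var(P_i)}\big)$; a Cauchy--Schwarz step plus the elementary estimate $\sum_i 2^i\eps_i=\sum_i 2^i\eps_0/i=O(\eps_0 n/\log n)$ converts this displacement, via the parallel-axis identity again, into a relative cost increase of $O(\eps_0/\log n)$ when all clusters have size $\Theta(n/k)$, and $O(\eps_0/\log f(n))$ in general, since only $o(f(n))$ of any cluster's points can lie in the $O(\log f(n))$ earliest blocks. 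The genuinely delicate point is that for $k\ge 2$ the assignment of points to centers induced by $\tilde C$ need not agree with that induced by $C^*$, so this accounting has to be carried out along the clustering $\tilde C$ induces on $P^{(n)}$ and compared with $C^*$ through the coreset guarantee; for $k=1$ there is no assignment and the computation above already closes the argument. For $k$-median the same plan works with the triangle inequality playing the role of the parallel-axis identity, and only additivity of the cost over blocks is used in the sandwich step.

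The size bound is immediate from monotonicity of $g$ in both arguments: after $n$ points there are at most $\log_2 n+1$ blocks, block $i$ has $2^i\le n$ points and precision $\eps_i=\eps_0/i\ge\eps_0/\log_2 n$, so $|S_i|\le g(\eps_i,2^i)\le g(\eps_0/\log_2 n,\,n)$, and summing over the blocks yields $|S|\le(\log_2 n+1)\,g(\eps_0/\log_2 n,\,n)=O\big(g(\tfrac{1}{\log n},n)\log n\big)$.
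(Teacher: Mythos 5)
Your overall strategy — keep $S=\bigcup_i S_i$ as a non-uniform-precision coreset, run the sandwich to get a constant-factor guarantee, and then use $f(n)\to\infty$ to argue stability of the centers and improve the constant to $O(1/\log f(n))$ — is the right shape, and your size bound is correct and matches the paper. However, the mechanism you propose for the improvement step differs from the paper's and has a real gap precisely where you flag it as ``delicate.''

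Two issues. First, the intermediate reduction to bounding $\sum_i\eps_i\cost(C^*,P_i)$ by $o(1)\cdot\cost(C^*,P^{(n)})$ is a dead end: that sum can genuinely be $\Theta(\eps_0)\cost(C^*,P^{(n)})$ even under the hypothesis. For instance, take $k=1$ with $n-1$ points at the origin and one early point at distance $L$; then $f(n)=n\to\infty$ but $\cost(C^*,P_1)\approx\cost(C^*,P^{(n)})$. The hypothesis $f(n)\to\infty$ does \emph{not} rule out early blocks carrying most of the cost — what it buys is stability of the center \emph{locations}, not of the cost mass. Second, and more seriously, your way of exploiting stability is to track per-cluster centroid displacements via the parallel-axis identity. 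For $k\ge 2$ this requires comparing the centroid of $S$ restricted to a Voronoi cell of $\tilde C$ with the centroid of $P^{(n)}$ restricted to the same cell. The coreset guarantee is global — $\cost(C,S_i)\in(1\pm\eps_i)\cost(C,P_i)$ for all center sets $C$ — and gives no control on the restriction of $S_i$ to a single cell; the formula $O\bigl(|C^*_j|^{-1}\sum_i|P_i\cap C^*_j|\sqrt{\eps_i\Var(P_i)}\bigr)$ you write down does not follow from anything the coreset property gives you. You acknowledge this as the hard case but do not show how to close it, and I do not see how the route closes without an additional, unproved per-cluster preservation statement.

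The paper takes a different and cleaner route that sidesteps this entirely. It never tries to bound the coreset error on the early blocks at all. Instead it proves (Lemma~\ref{lemma:distanceToCenters}) that $\alpha:=\max_j\dist(C^*_j,\tilde C)$ satisfies $\alpha^2\le 12\cost(C^*,P^{(n)})/f(n)$, using only that $\tilde C$ is a $2$-approximation and that every cluster of $C^*$ has $\ge f(n)$ points, and then (Lemma~\ref{lemma:ineqCostMean}) bounds the cost \emph{difference} $\cost(\tilde C,Q)-\cost(C^*,Q)$ for any prefix $Q$ by $|Q|\alpha^2+2\alpha\sqrt{|Q|\cost(C^*,Q)}$ — an additive error proportional to the number of early points, \emph{not} to their cost. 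This is exactly what makes the expensive blocks harmless even when they carry all the cost: the overshoot of $\tilde C$ over $C^*$ on $\cup_{i<i_0}P_i$ is $O(2^{i_0}\alpha^2+2^{i_0/2}\alpha\sqrt{\cost(C^*)})=O(2^{i_0}\cost(C^*)/\sqrt{f(n)})$, which is tuned to be $\le\eps\cost(C^*)$ by choosing $\eps=\Theta(1/\log f(n))$. Replacing your centroid-tracking by this pair of lemmas removes the unresolved step and works uniformly for all $k$.
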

\bigskip
Note, that since coresets are closed under union, we already know that $S$ is an $\eps_0$-coreset for the whole sequence. Thus we can assume without loss of generality that $\tilde C$ is already a $2$-approximation of the optimal. Before we move to the actual proof, we first use the assumption on $f(n)$ to show that $\tilde C$ and $C^*$ are \emph{close} in terms of distance as well as in terms of cost. We begin with a lemma, which shows that there is a point of $\tilde C$ near any center of $C^*$.
\begin{lemma}
\label{lemma:distanceToCenters}
We have the following inequality:
$\max_{i \in [k]} \dist(C^*_i, \tilde C)^2 \leq 12\cost(C^*, P^{(n)})/f(n).$
\end{lemma}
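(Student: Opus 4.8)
The plan is to argue by contradiction: suppose some center $C^*_i$ of the optimal solution is far from every point of $\tilde C$, say $\dist(C^*_i, \tilde C)^2 > 12\cost(C^*, P^{(n)})/f(n)$. The cluster $C_i$ associated with $C^*_i$ contains at least $f(n)$ input points by definition of $f$. The key observation is that these points are, on average, close to $C^*_i$: since $\cost(C^*,P^{(n)}) = \sum_{j} \dist(p_j, C^*)^2 \ge \sum_{p \in C_i} \dist(p, C^*_i)^2$, at most $f(n)/2$ of the points in $C_i$ can have $\dist(p,C^*_i)^2 > 2\cost(C^*,P^{(n)})/f(n)$ (a Markov-type counting argument). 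Hence at least $f(n)/2 \ge 1$ points $p \in C_i$ satisfy $\dist(p, C^*_i)^2 \le 2\cost(C^*,P^{(n)})/f(n)$.

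Now I would use the triangle inequality together with the fact that $\tilde C$ is a $2$-approximation of the optimal (noted just before the lemma, using that $S$ is an $\eps_0$-coreset). For such a close point $p$, its distance to $\tilde C$ satisfies $\dist(p, \tilde C) \ge \dist(C^*_i, \tilde C) - \dist(p, C^*_i)$, so $\dist(p,\tilde C)^2$ is at least a constant fraction of $\dist(C^*_i,\tilde C)^2$ once $\dist(C^*_i, \tilde C)^2$ dominates $\dist(p,C^*_i)^2$ — and our assumption gives exactly $\dist(C^*_i,\tilde C)^2 > 12\cost/f(n) \ge 6\cdot\dist(p,C^*_i)^2$, so $\dist(p,\tilde C) \ge (1 - 1/\sqrt 6)\dist(C^*_i,\tilde C)$, whence $\dist(p,\tilde C)^2 \ge \tfrac{1}{6}\dist(C^*_i,\tilde C)^2 > 2\cost(C^*,P^{(n)})/f(n)$. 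But then $\cost(\tilde C, P^{(n)}) \ge \dist(p,\tilde C)^2 > 2\cost(C^*,P^{(n)})/f(n)$; summing the contribution over all $f(n)/2$ such points $p \in C_i$ actually gives $\cost(\tilde C, P^{(n)}) > \cost(C^*, P^{(n)})$, contradicting the $2$-approximation bound (indeed even contradicting near-optimality). Tracking the exact constants — I chose $12$ so that the counting loses a factor $2$, the geometry loses a factor $6$, and the final summation over $f(n)/2$ points closes the gap — is what pins down the constant in the statement.

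The main obstacle is getting the constants to line up cleanly: one has to balance (i) how many points of $C_i$ are guaranteed to be close to $C^*_i$, (ii) how much of $\dist(C^*_i,\tilde C)^2$ survives the triangle inequality for those points, and (iii) the slack available from $\tilde C$ being only a $2$-approximation rather than optimal. A cleaner route that avoids summing over many points is to use a single close point $p \in C_i$ and the crude bound $\cost(\tilde C, P^{(n)}) \ge \dist(p,\tilde C)^2 \ge \tfrac16 \dist(C^*_i,\tilde C)^2$, then invoke $\cost(\tilde C,P^{(n)}) \le 2\cost(C^*,P^{(n)})$ to deduce $\dist(C^*_i,\tilde C)^2 \le 12\cost(C^*,P^{(n)})$ — but this only yields a bound without the $f(n)$ in the denominator, so the averaging step is essential to get the stated $1/f(n)$ decay, and care is needed to ensure $f(n)/2 \ge 1$ (which holds for $n$ large since $f(n)\to\infty$). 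I expect the write-up to be short once the three constants are fixed, with the triangle-inequality manipulation being the only slightly delicate computation.
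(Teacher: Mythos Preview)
Your approach is the right idea and is structurally the same as the paper's --- count points of the heavy cluster that sit near $C^*_{j_0}$, push them away from $\tilde C$ via the triangle inequality, and invoke the $2$-approximation bound on $\tilde C$. However, there is a genuine gap in your final step. You conclude $\cost(\tilde C, P^{(n)}) > \cost(C^*, P^{(n)})$ and say this ``contradict[s] the $2$-approximation bound (indeed even contradicting near-optimality)''. It does neither: $C^*$ is optimal, so $\cost(\tilde C)\ge\cost(C^*)$ is always true, and the $2$-approximation bound only asserts $\cost(\tilde C)\le 2\cost(C^*)$. The slip is that you weakened $(1-1/\sqrt 6)^2\approx 0.351$ to $1/6$. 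Keeping the sharper constant gives $\dist(p,\tilde C)^2 > (14-4\sqrt 6)\,\cost(C^*)/f(n)$, and summing over the $f(n)/2$ close points yields $\cost(\tilde C) > (7-2\sqrt 6)\cost(C^*)\approx 2.10\,\cost(C^*)$, which \emph{does} contradict the $2$-approximation. So your argument is salvageable, but only just --- the constant $12$ is essentially tight for the parameters you chose.

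The paper's proof avoids this knife-edge arithmetic by working directly rather than by contradiction, and by choosing the threshold radius more cleverly. It sets $\alpha=\tfrac12\max_j\dist(C^*_j,\tilde C)$, so that any point within $\alpha$ of $C^*_{j_0}$ is automatically at distance at least $\alpha$ from $\tilde C$ (no square roots, no ratio $\sqrt 6$). If $M$ denotes the number of such points, then on one hand $\cost(C^*)\ge (f(n)-M)\alpha^2$, and on the other $\cost(\tilde C)\ge M\alpha^2$; adding gives $\cost(\tilde C)\ge f(n)\alpha^2-\cost(C^*)$, and combining with $\cost(\tilde C)\le 2\cost(C^*)$ yields $\alpha^2\le 3\cost(C^*)/f(n)$, hence $(2\alpha)^2\le 12\cost(C^*)/f(n)$. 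The symmetric choice of threshold is what makes the constants fall out cleanly and removes the need for your Markov step and the delicate constant-tracking you flagged as the main obstacle.
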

\begin{proof}
Let $\alpha = \frac 1 2 \max_{j \in [k]} \dist(C^*_j, \tilde C) $ and let $j_0$ be the index for which this maximum is attained. We denote by $M$ the number of points that are in the Vorono\"i cell corresponding to $C^*_{j_0}$ and at a distance of at most $\alpha$ from $C^*_{j_0}$. Since there are $f(n) - M$ points at a distance of at least $\alpha$ from $C^*_{j_0}$ inside this cell, we have $\cost(C^*, P^{(n)}) \geq \alpha^2 \left( f(n) - M \right).$ This leads to a lower bound of $M \geq f(n) - \cost(C^*, P^{(n)})/\alpha^2.$
Note that all these points are at a distance of at least $\alpha$ from $\tilde C$. We thus have $\cost(\tilde C, P^{(n)}) \geq \alpha^2 M \geq f(n) \alpha^2 - \cost(C^*, P^{(n)}).$
Now, using the fact that $\tilde C$ is a $2$-approximate solution for $P^{(n)}$, this inequality can be rewritten as $\alpha^2 \leq 3 \cost(C^*, P^{(n)})/f(n)$
which implies our claim.
\end{proof}

We now show the following lemma, which bounds the difference of cost between two sets of centers depending on their distance
\begin{lemma}
\label{lemma:ineqCostMean}
Let $C^1$ and $C^2$ two set of centers and define $\alpha = \max_i \dist(C^2_i, C^1)$. Then $\cost(C^1, P^{(n)}) \leq \cost(C^2, P^{(n)}) + n \alpha^2 + 2\alpha \sqrt{n\cost(C^2, P^{(n)})}.$
\end{lemma}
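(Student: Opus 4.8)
The plan is to bound $\cost(C^1,P^{(n)})$ point by point. For each point $p_i$, let $c^2(p_i)\in C^2$ be the center of $C^2$ closest to $p_i$, so that $\dist(p_i,c^2(p_i))=\dist(p_i,C^2)$. By definition of $\alpha$ there is a center $c^1\in C^1$ with $\dist(c^2(p_i),c^1)\le\alpha$. Hence, using the triangle inequality,
\begin{align*}
\dist(p_i,C^1)^2 \le \dist(p_i,c^1)^2 \le \big(\dist(p_i,C^2)+\alpha\big)^2 = \dist(p_i,C^2)^2 + 2\alpha\,\dist(p_i,C^2) + \alpha^2.
\end{align*}
Summing this inequality over $i=1,\dots,n$ gives
\begin{align*}
\cost(C^1,P^{(n)}) \le \cost(C^2,P^{(n)}) + n\alpha^2 + 2\alpha\sum_{i=1}^n \dist(p_i,C^2).
\end{align*}

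The only remaining step is to control the cross term $\sum_{i=1}^n \dist(p_i,C^2)$ in terms of $\cost(C^2,P^{(n)})=\sum_{i=1}^n\dist(p_i,C^2)^2$. This is exactly a Cauchy--Schwarz estimate: viewing the vector $\big(\dist(p_1,C^2),\dots,\dist(p_n,C^2)\big)$ against the all-ones vector,
\begin{align*}
\sum_{i=1}^n \dist(p_i,C^2) \le \sqrt{n}\,\Big(\sum_{i=1}^n \dist(p_i,C^2)^2\Big)^{1/2} = \sqrt{n\,\cost(C^2,P^{(n)})}.
\end{align*}
Substituting this into the previous display yields $\cost(C^1,P^{(n)}) \le \cost(C^2,P^{(n)}) + n\alpha^2 + 2\alpha\sqrt{n\,\cost(C^2,P^{(n)})}$, which is the claim.

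There is no real obstacle here; the argument is elementary. The only points requiring a modicum of care are (i) making sure the triangle-inequality step picks, for each $p_i$, a center of $C^1$ that is close to the $C^2$-center serving $p_i$ (rather than trying to relate the two nearest-center assignments directly), and (ii) noting that the bound is symmetric in form but $\alpha$ is defined asymmetrically, so one must use $\dist(C^2_i,C^1)$ exactly as written. Everything then follows by summation and one application of Cauchy--Schwarz.
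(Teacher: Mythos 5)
Your proof is correct and is essentially identical to the paper's: both pick, for each point, the $C^1$-center nearest to that point's serving $C^2$-center, apply the triangle inequality, square and sum, and then bound the cross term $\sum_i \dist(p_i,C^2)$ by $\sqrt{n\cost(C^2,P^{(n)})}$ via Cauchy--Schwarz. The paper merely phrases the cross term as the $k$-median cost $\cost_{Med}(C^2,P^{(n)})$ before applying the same Cauchy--Schwarz step.
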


\begin{proof}
For any point $p$, let $C^2_p$ be the closest point of $C^2$ from $p$ and let $C^1_p$ be the closest point of $C^1$ to $C^2_p$. Then, applying the triangle inequality and the definition of $\alpha$ we get that
$\dist(p, C^1) \leq \dist(p, C^1_p) \leq \dist(p, C^2_p) + \alpha = \dist(p, C^2) + \alpha.$ Thus, taking the square and summing over all points, we have
$\cost(C^1,P^{(n)}) \leq \sum_{i=1}^n \left( \dist(p_i, C^2) + \alpha \right)^2 \leq \cost(C^2,P^{(n)}) + n \alpha^2 + 2\alpha\cost_{Med}(C^2, P^{(n)}).$
Note that the last term depends on the median cost instead of the mean cost.
Now, using Cauchy-Schwartz inequality the median cost satisfies
$\cost_{Med}(C^2, P^{(n)}) = \sum_i \dist(p_i, C^2) \leq \sqrt{\left( \sum_i \dist(p_i, C^2)^2 \right) \left( \sum_i 1 \right)} = \sqrt{n \cost(C^2,P^{(n)})}.$ Plugging this into the previous inequality concludes the proof.
\end{proof}

Now we have all the tools that we need and proceed with the proof of our theorem
\begin{proof}(of Theorem \ref{th:improvingClustering})
The idea is to separately analyze the points for which we have a small relative error and the others. To this end, we split the cost of the approximate into two parts. One part where we have sketches with a good enough precision and one part for which there are only few points. Then, we show for the first part, that the approximation is close enough from the optimum such that the total error that we make is small. Fix any $1/2>\eps>(\log n) ^{-1}$ let $i_0$ be the smallest index such that $\frac{1}{i_0} \leq \eps$, i.e., $i_0 = \ceil{\frac{1}{\eps}}$. For $i \geq i_0$, we have $\eps_i \leq \eps$. Now we split the cost as explained above. $$\cost(\tilde C, P^{(n)}) = \sum_{i = 1}^l \cost(\tilde C, P_i)  = \underbrace{ \cost(\tilde C,\cup_{i < i_0} P_i)}_{(*)} + \underbrace{\sum_{i \geq i_0} \cost(\tilde C, P_i)}_{(**)}.$$
Next, we bound the two terms separately. Recall that $n_i = 2^ i$ and let $\alpha = \max_i \dist(C^*_i, \tilde C)$. From  Lemma \ref{lemma:ineqCostMean} we have for the first term
\begin{align}
(*) & \leq \left(\sum_{i < i_0}  n_i\right) \alpha^2 + \cost(C^*, \cup_{i < i_0}P_i)  + 2 \alpha \sqrt{ \sum_{i < i_0} n_i  } \sqrt{\cost(C^*, \cup_{i < i_0}P_i)}  \notag \\
&\leq 2^{i_0}\alpha^2  + 2^{\frac{i_0} 2 +1}\alpha \sqrt{\cost(C^*, P^{(n)})} + \sum_{i < i_0} \cost(C^*, P_i) \notag \\
&\leq \frac{24\cdot 2^{i_0}\cost(C^*, P^{(n)})}{\sqrt{f(n)}} + \sum_{i < i_0} \cost(C^*, P_i) \label{proof:errorIneq*}
\end{align}
where the last inequality follows from Lemma \ref{lemma:distanceToCenters}. For the second term, we can apply the coreset property to each block.
\begin{align}
(**) & \leq \sum_{i \geq i_0} \frac{1}{1 - \eps} \cost(\tilde C, S_i) \leq \sum_{i \geq i_0} \cost(\tilde C, S_i) + 2\eps \cost(\tilde C, S) \notag\\
	&\leq \cost(\tilde C, S) - \sum_{i < i_0} \cost(\tilde C, S_i) + 8\eps \cost(C^*, P^{(n)}) \label{proof:errorIneq**1}
\end{align}
The last inequality is a consequence of the fact that $\tilde C$ is a $2$-approximation for $P$ and $S$ is a $1$-coreset. Thus we have $\cost(\tilde C, S) \leq 2 \cost(\tilde C, P^{(n)}) \leq 4 \cost(C^*, P^{(n)})$. Now we can leverage the fact that $\tilde C$ is an optimal solution for the set $S$ to bound the first term of the last line (\ref{proof:errorIneq**1}). Then we have
\begin{align*}
\cost(\tilde C, S) &\leq \, \cost(C^*,S) \\
	&\leq \sum_{i < i_0} \cost(C^*, S_i) + \sum_{i \geq i_0} \cost(C^*, S_i)\\
	&\leq \sum_{i < i_0} \cost(C^*, S_i) + (1 + \eps) \sum_{i \geq i_0} \cost(C^*, P_i) \\
	&\leq \sum_{i < i_0} \cost(C^*, S_i) + \sum_{i \geq i_0} \cost(C^*, P_i) + \eps \cost(C^*,P^{(n)}).
\end{align*}
Plugging this into inequality (\ref{proof:errorIneq**1}) yields
\begin{align}
(**) &\leq \sum_{i \geq i_0} \cost(C^*, P_i) + \underbrace{\cost(C^*, \cup_{i < i_ 0}S_i) - \cost(\tilde C, \cup_{i < i_0}S_i)}_{(***)} + 9\eps \cost(C^*, P^{(n)}) \label{proof:errorIneq**2}
\end{align}
Using Lemma \ref{lemma:distanceToCenters} on the middle term we get
\begin{align*}
(***) &\leq \alpha^2 \sum_{i < i_0} n_i  + 2\alpha \sqrt{\sum_{i < i_0}n_i}\sqrt{\cost(C^*, P^{(n)})} \\
&\leq \alpha^2 2^{i_0} +2^{\frac{i_0}{2}+1}\alpha \sqrt{\cost(C^*, P^{(n)})}  \leq \frac{24 \cost(C^*, P^{(n)}) 2^{i_0}}{\sqrt{f(n)}}.
\end{align*}
Putting this together with inequalities (\ref{proof:errorIneq*}) and (\ref{proof:errorIneq**2}) we have
$\cost(\tilde C, P^{(n)}) \leq \cost(C^*, P^{(n)})( 1+ 9 \eps + \frac{K \cdot 2^{i_0}}{\sqrt{f(n)}})$ for some absolute constant $K$.
Now, since we have $i_0 = \ceil{\frac{1}{\eps}} \leq \frac{1}{\eps} + 1$ if we assume that $ \eps$ satisfies $\frac{K \cdot 2^{1/\eps +1}}{\sqrt{f(n)}} \leq \eps$, then we have that $\cost(\tilde C, P^{(n)}) \leq (1 + 10 \eps) \cost(C^*, P^{(n)})$. It can easily be verified that there exists an $\eps = O(\frac{1}{\log f(n)})$ that satisfies the above condition. This concludes our proof by rescaling and renaming $\eps$.
\end{proof}

\subsection{Modifications for $k$-Median}
Now we show how the results for $k$-means can be adapted to work for the $k$-median problem. The calculations remain essentially the same up to some minor modifications. We have an inequality bounding the distance of the centers that can be derived similarly to Lemma \ref{lemma:distanceToCenters}.
\begin{lemma}
We have the following inequality: $\max_{i \in [k]} \dist(C^*_i, \tilde C) \leq 6\cost_{Med}(C^*, P^{(n)})/f(n).$
\end{lemma}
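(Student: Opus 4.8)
The plan is to carry out a line-by-line analogue of the proof of Lemma~\ref{lemma:distanceToCenters}, with every squared distance replaced by a plain distance, since the $k$-median cost $\cost_{Med}(C,P^{(n)})=\sum_i \dist(p_i,C)$ is linear rather than quadratic in the distances. Here $f(n)$ denotes the minimum number of points assigned to a center in an \emph{optimal $k$-median} solution $C^*$, and, exactly as in the $k$-means case, we may assume $\tilde C$ is a $2$-approximation for $P^{(n)}$ (this follows from $S=\bigcup S_i$ being an $\eps_0$-coreset for the $k$-median cost and $\tilde C$ being optimal for $S$; coresets are closed under union for whichever cost is fixed).

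First I would set $\alpha = \tfrac12 \max_{j\in[k]} \dist(C^*_j,\tilde C)$ and let $j_0$ attain this maximum, so $\dist(C^*_{j_0},\tilde C)=2\alpha$. Let $M$ be the number of input points that lie in the Voronoi cell of $C^*_{j_0}$ and within distance $\alpha$ of $C^*_{j_0}$. Since the cell of $C^*_{j_0}$ contains at least $f(n)$ points, at least $f(n)-M$ of them sit in that cell at distance $\geq\alpha$ from $C^*_{j_0}$, so $\cost_{Med}(C^*,P^{(n)}) \geq \alpha\,(f(n)-M)$, i.e.\ $M \geq f(n) - \cost_{Med}(C^*,P^{(n)})/\alpha$. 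By the triangle inequality each of those $M$ points is at distance at least $2\alpha-\alpha=\alpha$ from $\tilde C$, hence $\cost_{Med}(\tilde C,P^{(n)}) \geq \alpha M \geq f(n)\alpha - \cost_{Med}(C^*,P^{(n)})$.

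Finally I would invoke the $2$-approximation property, $\cost_{Med}(\tilde C,P^{(n)}) \leq 2\cost_{Med}(C^*,P^{(n)})$, to get $2\cost_{Med}(C^*,P^{(n)}) \geq f(n)\alpha - \cost_{Med}(C^*,P^{(n)})$, so $\alpha \leq 3\cost_{Med}(C^*,P^{(n)})/f(n)$, and therefore $\max_{j\in[k]} \dist(C^*_j,\tilde C)=2\alpha \leq 6\cost_{Med}(C^*,P^{(n)})/f(n)$, which is the claim.

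I do not expect a genuine obstacle here: the argument is structurally identical to Lemma~\ref{lemma:distanceToCenters}. The only points needing care are (i) confirming that the ``$\tilde C$ is a $2$-approximation'' reduction and the coreset-union property of the $k$-means section transfer verbatim to the median cost, and (ii) tracking the linear (rather than quadratic) dependence on $\alpha$ consistently, so that the final constant comes out as $6$ rather than $12$. The write-up should be only a few lines.
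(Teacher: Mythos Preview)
Your proposal is correct and follows exactly the approach the paper intends: the paper does not spell out a proof but only states that the inequality ``can be derived similarly to Lemma~\ref{lemma:distanceToCenters}'', and your line-by-line adaptation (replacing squared distances by plain distances and carrying the constant through to $6$) is precisely that derivation.
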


Our bound on the cost for the median case becomes stronger than Lemma \ref{lemma:ineqCostMean}, since we can remove one of the error terms, which was due to the squared distances.
\begin{lemma}
Let $C^1$ and $C^2$ two set of centers, and we note $\alpha = \max_i \dist(C^2_i, C^1)$, then we have the following inequality $\cost_{Med}(C^1, P^{(n)}) \leq n \alpha + \cost_{Med}(C^2, P^{(n)})$
\end{lemma}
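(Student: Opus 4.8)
The plan is to mimic the argument of Lemma \ref{lemma:ineqCostMean} but working directly with unsquared distances, which is exactly why one of the error terms disappears. First I would fix an arbitrary point $p$ of the stream and, as in the proof of Lemma \ref{lemma:ineqCostMean}, let $C^2_p$ denote the point of $C^2$ closest to $p$ and $C^1_p$ denote the point of $C^1$ closest to $C^2_p$. By the triangle inequality and the definition of $\alpha$,
\begin{align*}
\dist(p, C^1) \leq \dist(p, C^1_p) \leq \dist(p, C^2_p) + \dist(C^2_p, C^1_p) \leq \dist(p, C^2) + \alpha,
\end{align*}
since $\dist(C^2_p, C^1_p) = \dist(C^2_p, C^1) \leq \max_i \dist(C^2_i, C^1) = \alpha$.

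Next I would simply sum this pointwise inequality over all $n$ points of $P^{(n)}$:
\begin{align*}
\cost_{Med}(C^1, P^{(n)}) = \sum_{i=1}^n \dist(p_i, C^1) \leq \sum_{i=1}^n \left( \dist(p_i, C^2) + \alpha \right) = \cost_{Med}(C^2, P^{(n)}) + n\alpha,
\end{align*}
which is precisely the claimed bound.

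There is no real obstacle here: unlike the $k$-means case, no squaring is involved, so no cross term $2\alpha \sum_i \dist(p_i, C^2)$ appears and hence no Cauchy--Schwarz step is needed. The only point worth a sentence of care is the identification $\dist(C^2_p, C^1_p) = \dist(C^2_p, C^1)$, i.e., that $C^1_p$ really realizes the distance from $C^2_p$ to the set $C^1$, which is immediate from the choice of $C^1_p$. This lemma, together with the preceding distance bound for the median centers, then feeds into the $k$-median analog of Theorem \ref{th:improvingClustering} in the same way Lemmas \ref{lemma:distanceToCenters} and \ref{lemma:ineqCostMean} feed into the $k$-means case.
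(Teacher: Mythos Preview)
Your proposal is correct and matches exactly the approach the paper intends: it is the straightforward adaptation of the proof of Lemma~\ref{lemma:ineqCostMean} to unsquared distances, using the same pointwise triangle-inequality bound $\dist(p,C^1)\le \dist(p,C^2)+\alpha$ and summing over all $n$ points, so that no cross term and hence no Cauchy--Schwarz step arises. The paper does not spell out this proof but explicitly frames the lemma as the median analog of Lemma~\ref{lemma:ineqCostMean} with the squared-distance error term removed, which is precisely what you wrote.
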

Reusing the proof of Theorem \ref{th:improvingClustering} with the modified inequalities yields our result.

\begin{theorem}
Assume $f(n) \rightarrow \infty$. If $\tilde C$ is a $(1 + \eps)$-approximate solution to $k$-means for the coreset $S$, then $ \cost(\tilde C, P^{(n)}) \leq (1 + O(\eps)) \cost(C^*, P^{(n)})$
where $\eps = O(\frac{1}{\log f(n)})$. The coreset $S$ contains at most $g(\frac{1}{\log n}, n) \log n$ points, where $g(\gamma, n)$ denotes the size of a $\gamma$-coreset on $n$ points.
\end{theorem}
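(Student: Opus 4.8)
The plan is to rerun the proof of Theorem~\ref{th:improvingClustering} essentially verbatim, isolating the unique place where the \emph{optimality} of $\tilde C$ on $S$ was used and replacing it by the hypothesis that $\tilde C$ is merely $(1+\eps)$-optimal on $S$. The first step is to re-establish the standing assumption of that proof, namely that $\tilde C$ is a $2$-approximate $k$-means solution for $P^{(n)}$, which is what makes Lemma~\ref{lemma:distanceToCenters} and Lemma~\ref{lemma:ineqCostMean} applicable. Since $S=\bigcup_i S_i$ is a union of $\eps_i$-coresets with every $\eps_i\le\eps_0$, it is an $\eps_0$-coreset for $P^{(n)}$, so
\[ \cost(\tilde C,P^{(n)})\le\frac{\cost(\tilde C,S)}{1-\eps_0}\le\frac{(1+\eps)\cost(C^*,S)}{1-\eps_0}\le\frac{(1+\eps)(1+\eps_0)}{1-\eps_0}\cost(C^*,P^{(n)}), \]
which is below $2\cost(C^*,P^{(n)})$ once $\eps$ and $\eps_0$ are smaller than a fixed absolute constant (and indeed $\eps,\eps_0\le 1/\log n\to 0$). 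Hence Lemmas~\ref{lemma:distanceToCenters} and~\ref{lemma:ineqCostMean} hold for the pair $(C^*,\tilde C)$, the former possibly with the constant $12$ inflated to another absolute constant.

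Next I would reproduce the split $\cost(\tilde C,P^{(n)})=\cost(\tilde C,\cup_{i<i_0}P_i)+\sum_{i\ge i_0}\cost(\tilde C,P_i)=(*)+(**)$ with $i_0=\ceil{1/\eps}$, exactly as in the proof of Theorem~\ref{th:improvingClustering}. The bound on $(*)$ is unchanged, since it invokes only Lemmas~\ref{lemma:ineqCostMean} and~\ref{lemma:distanceToCenters} together with the geometric sum $\sum_{i<i_0}n_i\le 2^{i_0}$; it gives $(*)\le\sum_{i<i_0}\cost(C^*,P_i)+K_1\,2^{i_0}\cost(C^*,P^{(n)})/\sqrt{f(n)}$ for an absolute constant $K_1$. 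For $(**)$ the only change is inside the chain of inequalities bounding $\cost(\tilde C,S)$: where Theorem~\ref{th:improvingClustering} wrote $\cost(\tilde C,S)\le\cost(C^*,S)$ by optimality, we now have $\cost(\tilde C,S)\le(1+\eps)\cost(C^*,S)\le\cost(C^*,S)+\eps(1+\eps_0)\cost(C^*,P^{(n)})$, contributing one additional $O(\eps)\,\cost(C^*,P^{(n)})$ summand. Everything else in the treatment of $(**)$ --- applying the $\eps_i$-coreset property block by block, absorbing $\tfrac{1}{1-\eps}$ into a $1+2\eps$ factor, the estimate $\cost(\tilde C,S)\le 4\cost(C^*,P^{(n)})$ coming from $S$ being an $\eps_0$-coreset together with the $2$-approximation bound above, and bounding the leftover difference $\cost(C^*,\cup_{i<i_0}S_i)-\cost(\tilde C,\cup_{i<i_0}S_i)$ via Lemma~\ref{lemma:distanceToCenters} --- carries over word for word up to the names of the absolute constants.

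Collecting the two estimates yields $\cost(\tilde C,P^{(n)})\le\cost(C^*,P^{(n)})\bigl(1+c_1\eps+c_2\,2^{i_0}/\sqrt{f(n)}\bigr)$ for absolute constants $c_1,c_2$, with $2^{i_0}\le 2^{1/\eps+1}$. Imposing the same condition as in Theorem~\ref{th:improvingClustering}, that $c_2\,2^{1/\eps+1}/\sqrt{f(n)}\le\eps$, is satisfiable by some $\eps=O(1/\log f(n))$ via the identical elementary argument, and then $\cost(\tilde C,P^{(n)})\le(1+O(\eps))\cost(C^*,P^{(n)})$ after renaming $\eps$. The size bound is read off directly from Algorithm~\ref{algo:clustering}: at most $\log n$ blocks are created, and since $\eps_i=\eps_0/\log n_i\ge\eps_0/\log n$ each $S_i$ is an $(\eps_0/\log n)$-coreset, so $|S|\le\sum_i g(\eps_0/\log n,n_i)\le g(1/\log n,n)\log n$ after folding $\eps_0$ into the rescaling. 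I expect the only delicate point --- the main obstacle --- to be checking that the extra $O(\eps)$ error produced by approximate optimality does not change the optimal trade-off between the two error contributions; but it enters additively with an absolute-constant coefficient, precisely like the $9\eps$ term already present in the proof of Theorem~\ref{th:improvingClustering}, so the balance is still between a $\Theta(\eps)$ term and a $2^{1/\eps}/\sqrt{f(n)}$ term, and the same choice $\eps=O(1/\log f(n))$ works.
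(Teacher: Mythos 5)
Your argument correctly isolates and handles the change from ``$\tilde C$ optimal on $S$'' to ``$\tilde C$ a $(1+\eps)$-approximate solution on $S$'': re-deriving the $2$-approximation property of $\tilde C$ via the coreset inequalities, and replacing the single use of optimality $\cost(\tilde C,S)\le\cost(C^*,S)$ by $\cost(\tilde C,S)\le(1+\eps)\cost(C^*,S)$, which adds one more additive $O(\eps)\cost(C^*,P^{(n)})$ term and does not disturb the $\eps$-versus-$2^{1/\eps}/\sqrt{f(n)}$ trade-off. That part is sound.

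However, you have missed what the theorem is actually about. It sits in the subsection ``Modifications for $k$-Median,'' immediately after two replacement lemmas (the $k$-median analogues of Lemma~\ref{lemma:distanceToCenters} and Lemma~\ref{lemma:ineqCostMean}, which bound $\max_i\dist(C^*_i,\tilde C)$ linearly by $6\cost_{Med}(C^*,P^{(n)})/f(n)$ and give $\cost_{Med}(C^1,P^{(n)})\le n\alpha+\cost_{Med}(C^2,P^{(n)})$ with \emph{no} cross term), and the paper's one-line proof explicitly says ``reusing the proof of Theorem~\ref{th:improvingClustering} with the \emph{modified} inequalities.'' The appearance of ``$k$-means'' in the statement is a typo; the cost is meant to be $\cost_{Med}$, and the reason $\tilde C$ is only $(1+\eps)$-approximate rather than optimal is precisely that exact $1$-median is not computable, as the paragraph following the theorem explains. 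Your proof never invokes the two $k$-median lemmas and stays entirely within the $k$-means setting, so it proves a $k$-means statement rather than the intended $k$-median one. The correct route is to rerun the Theorem~\ref{th:improvingClustering} skeleton with $\cost_{Med}$ throughout, with $\alpha$ now bounded (not $\alpha^2$) by $O(\cost_{Med}(C^*,P^{(n)})/f(n))$, and with the $(*)$ and $(***)$ estimates using the \emph{linear} bound $n\alpha$ in place of $n\alpha^2+2\alpha\sqrt{n\cost}$ — which in fact simplifies the calculation, since the Cauchy--Schwarz term disappears — plus the $(1+\eps)$-approximation adjustment you correctly identified.
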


Contrarily to the $k$-means case, \cite{Bajaj} showed that it is not even possible to compute the $1$-median of a set of points exactly in the usual model of computation. So instead, we can use a brute-force method similar to the one used for approximating $1$-median in \cite{oneMedian} to build a grid of possible centers of size $\mathrm{poly}(n)$ and then enumerate all possible $k$ tuples from this set to get a $(1+\eps)$-approximation. By using the \emph{centroid sets} from \cite{HPClustering} one can even reduce number of possible centers to $\mathrm{polylog}(n)$ and therefore reduce the running time of the exhaustive search.

\subsection{Lower bounds on clustering problems}
\label{section:negativeClustering}
We were able to give positive results for the $k$-means and $k$-median clustering problems under the mild assumptions that the number of points in each Vorono\"i cell goes to infinity. Here we show that there is no hope for such algorithms when we drop this assumption. Furthermore, we show that in contrast to the other objectives, no asymptotically exact algorithm can exist for the $k$-center problem. We begin with the following lemma on computing the exact solution to the $k$-median and $k$-means problems.

\begin{lemma}
\label{lower:exact}
Any streaming algorithm solving $2$-means or $2$-median exactly in dimension $d=2$ uses $\Omega(n)$ memory
\end{lemma}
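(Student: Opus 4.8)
The plan is to reduce from the one-way communication problem $IND$ (indexing), which has communication complexity $\Omega(n)$ by Theorem~\ref{th:indexing}. The key geometric idea is that the exact optimum of a $2$-clustering problem in the plane can encode a single bit chosen from a long string held by Alice. Concretely, Alice will turn her string $x\in\{0,1\}^n$ into a set of points in $\REAL^2$ whose optimal $2$-means (resp.\ $2$-median) partition is forced to group points in a way that is sensitive to the individual bits $x_j$; Bob, holding an index $i\in[n]$, appends a small gadget of points that makes the exact optimal cost (or the exact optimal center set) differ according to whether $x_i=0$ or $x_i=1$. If a streaming algorithm could solve the problem exactly using $s(n)$ memory, Alice runs it on her points, sends the memory contents (together with $O(\log n)$ auxiliary bits, e.g.\ the count of her points) to Bob, who finishes the stream and reads off $x_i$; hence $s(n)+O(\log n)=\Omega(n)$, so $s(n)=\Omega(n)$.

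First I would design Alice's point configuration. A natural choice is to place $n$ points along a line (say the $x$-axis), at positions $j$ for $j\in[n]$, but with a tiny vertical perturbation: point $j$ sits at height $+h$ if $x_j=1$ and $-h$ if $x_j=0$, for a suitably small $h$. On its own this set has a trivial optimal $2$-clustering (split the line segment roughly in half), so Bob's gadget must "select" coordinate $i$. The idea is for Bob to add a large cluster of $T$ coincident points far to the right, positioned so that in the optimal solution one of the two centers is dragged almost entirely onto that heavy gadget, leaving the other center to serve \emph{all} of Alice's points plus possibly a few transition points; then the residual cost, or the exact optimal position of the free center, depends on the full multiset of Alice's heights — which is not enough to isolate $x_i$. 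So instead Bob should add points that, in the exact optimum, split Alice's line at exactly coordinate $i$, so that the partition of point $i$ itself (left cluster vs.\ right cluster) toggles with $x_i$ because of the vertical perturbation; the exact optimal cost then differs by a computable amount between the two cases. Tuning $h$ and the gadget weights so that (a) the split point is forced to be exactly at $i$ and (b) the two candidate costs are distinct rational numbers the algorithm must distinguish, is the crux of the construction.

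The main obstacle I expect is precisely this quantitative tuning: one must choose the perturbation scale $h$ and the gadget multiplicities so that the \emph{exact} optimum is rigid — i.e.\ the optimal partition and the optimal centers are uniquely determined and provably locate the "cut" at index $i$ — while keeping all coordinates of polynomial bit-length so that the memory-plus-auxiliary message is genuinely $O(s(n)+\log n)$ bits and the reduction is faithful. A secondary subtlety is that the algorithm is \emph{exact}, not approximate, so I do not need an $\eps$-gap; but I do need to rule out ties and alternate optima, which for $k$-means typically requires an argument via the closed form of the optimal center (the centroid of each cluster) and for $k$-median a similar case analysis on the weighted median. I would handle $2$-means and $2$-median in parallel, using the same point layout and adjusting only the gadget to account for squared versus unsquared distances, and then remark that the $\Omega(n)$ bound on exact computation is what will be combined with the divergence assumption failing to yield the promised impossibility of an asymptotically exact algorithm in the unrestricted case.
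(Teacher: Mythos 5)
Your high-level plan — reduce from $IND$, have Alice stream a point set that encodes her string, let Bob append a gadget that forces the exact optimum to reveal $x_i$ — matches the paper's strategy. But your concrete construction has a gap in the decoding step that you flag but do not resolve, and I believe it cannot be resolved as described.

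The problem is your decoding mechanism. You want Bob's gadget to force the optimal partition of Alice's collinear points to split exactly at index $i$, so that the cluster membership of $p_i$ toggles with the sign of its $\pm h$ perturbation. There are two difficulties. First, the perturbation is \emph{vertical} while the split is (to first order) a \emph{vertical line}: the perpendicular bisector of two centers both lying near the $x$-axis is nearly vertical, so moving $p_i$ up or down by $h$ does not change which side of the bisector it lies on. You would need the bisector tilted, which requires the two centers to have different $y$-coordinates, which in turn depends on the assignment you are trying to determine — a circularity that is genuinely hard to break. Second, and more fundamentally, even if you could force $p_i$ to switch clusters with $x_i$, Bob cannot read $x_i$ off the exact optimum: the $y$-coordinate of, say, the left center is $\frac{1}{|C_L|}\sum_{j\in C_L}(\pm h)$, which depends on \emph{all} the bits $x_j$ for $j\in C_L$, not just $x_i$. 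The same objection applies to the optimal cost. Your construction never isolates a quantity that is a function of $x_i$ alone.

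The paper sidesteps both issues with a different geometry. Alice places her $n$ points on the unit circle at the $n$th roots of unity, with the bit $a_j$ encoded \emph{radially}: $p_j = (1-a_j\delta)\omega^j$. Bob, who wants bit $j$, floods the direction $\omega^j$ with a heavy cluster at $A=\omega^j$ and a second heavy cluster at an interior point $B$ chosen so that the resulting Vorono\"i cell of the center near $A$ contains \emph{only} the gadget points at $A$ together with the single Alice-point $p_j$, while every other $p_\ell$ falls to the $B$-side. Then the center near $A$ is the centroid (resp.\ median) of a huge pile at $\omega^j$ plus the lone point $(1-a_j\delta)\omega^j$, so its distance from the origin equals $1$ iff $a_j=0$ and is strictly less than $1$ iff $a_j=1$. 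Because only one of Alice's points participates, the dependence on the other bits vanishes entirely — this is precisely the isolation your line construction lacks. If you want to salvage a line-based variant you would need a gadget that captures exactly $p_i$ and no other $p_\ell$ in one cluster, which on equally-spaced collinear points is hard because there is no direction in which $p_i$ is distinguished from its neighbours; the circle gives each index its own direction, which is what makes the selection possible.
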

\begin{proof}
We reduce from the indexing problem. We denote by $a \in \br{0,1}^n$ the input string received by Alice. From $a$, Alice produces $n$ points $p_i=(1 - a_i \delta) \omega^i$ where $\omega^i$ denote the $n^{th}$ unit roots and $\delta>0$ is a constant chosen to be small enough. Alice feeds these points to the algorithm and then communicates the memory of size $s(n)$ to Bob. Given an index $j$, Bob will put points in the direction given by $\omega^j$. He puts a large number of points in the position $A$ and $B$; see Figure (\ref{figure:lowerBoundClustering}). Here the idea is that the position of one of the centers will only depend on the value $a_j$, while the other will prevent the other bits from having any influence. If the number of points at $A$ and $B$ is large enough, then the optimal centers will move close to these points. Let $c_A,c_B$ be the centers close to $A$ and $B$ respectively. Moreover, all $p_i$ with $i\neq j$ will be in the Vorono\"i cell of $c_B$. Thus, the Vorono\"i cell of $c_A$ contains all the points at $A$ and also $p_j$. Now, if $a_j=0$ then $\norm{c_A}=1$ and otherwise $\norm{c_A}<1$ since $\delta>0$. Thus, Bob can distinguish between the two cases and solve the indexing problem. Consequently $s(n)=\Omega(n)$.
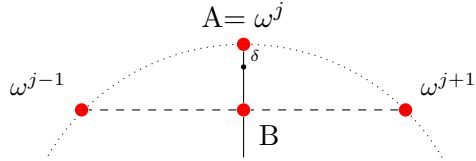
\begin{figure}
\center
\begin{tikzpicture}[scale=3]
\tikzstyle{point} = [shape=circle,inner sep =0pt, minimum size = 5, fill, red]
\tikzstyle[every node] = [draw, shape=circle, fill=red]
\node[point, label=north:{A$=\omega^j$}] (a) at (0,1) {} ;

\node[point, label=south east:B] (b) at (0,0.71) {} ;

\node[point, minimum size = 2, black] at (0,0.9) {} ;

\node[label=center:\tiny $\delta$] at (0.05,0.95) {} ;

\node[point, label=north east:\small {$\omega^{j+1}$}] (c) at (0.71,0.71)  {};

\node[point, label=north west:\small {$\omega^{j-1}$}] (d) at (-0.71,0.71) {} ;

\draw[dashed] (c) -- (b) -- (d) ;
\draw (0,0.5) -- (b) -- (a) ;
\draw[dotted] (0.87,0.5) arc [start angle = 30, end angle = 150, radius = 1];
\end{tikzpicture}
\caption{Construction for the lower bound on clustering.
}
\label{figure:lowerBoundClustering}
\end{figure}
\end{proof}

While $1$-means is trivial, in the median case, we even have the following stronger result.
\begin{lemma}
\label{lemma:1median}
Any streaming algorithm solving the $1$-median problem exactly uses $\Omega(n)$ memory.
\end{lemma}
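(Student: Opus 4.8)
The plan is to reduce from the indexing problem, mirroring the construction in the proof of Lemma~\ref{lower:exact} but exploiting the fact that the $1$-median optimum is a single point whose location can encode one bit of Alice's string even without a second ``anchoring'' center. First I would have Alice interpret her input $a \in \br{0,1}^n$ and place a collection of points on (or near) the $n$-th unit roots $\omega^1,\ldots,\omega^n$, perturbed inward by $\delta$ exactly at the coordinates where $a_i = 1$; i.e. $p_i = (1 - a_i\delta)\omega^i$, together with a large cluster of $T$ identical ``anchor'' points at some fixed location (say the origin, or a point on the negative real axis). Alice runs the streaming $1$-median algorithm on these $n + T$ points and sends the memory contents, of size $s(n)$, to Bob, together with $|I(a)|$ in $O(\log n)$ bits.

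Next, given his index $j$, Bob continues the stream: he appends a huge number $M$ of copies of the point $\omega^j$ (or a point slightly past $\omega^j$ in that direction). For $M$ sufficiently large relative to $n$, $T$, and $\delta$, the optimal $1$-median must lie very close to the segment between the anchor cluster and $\omega^j$; in particular, whether the true optimum value (or the optimal point's distance to $\omega^j$) is exactly the ``$a_j = 0$'' value or the strictly smaller ``$a_j = 1$'' value is a gap that survives the perturbation, because only $p_j$ among the unit-root points lies in the relevant direction and all other $p_i$ contribute a fixed amount independent of $a_j$ once the median is pinned down. Bob reads off the exact optimal cost (or the exact optimal center) from the algorithm's output, compares it against the two precomputed candidate values, and thereby recovers $a_j$, solving $IND$. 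Since $R_{1/3}(IND) = \Omega(n)$ by Theorem~\ref{th:indexing}, we get $s(n) + O(\log n) = \Omega(n)$, hence $s(n) = \Omega(n)$.

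The main obstacle I expect is the quantitative geometry: one must choose $\delta$ small and $M$, $T$ large so that (i) the optimal $1$-median is forced into a small neighborhood where the ``active'' point is $p_j$ and the combinatorial structure of the Voronoi/assignment is fixed, and (ii) the cost difference between the $a_j=0$ and $a_j=1$ configurations is a genuine, exactly computable gap rather than something blurred by the contributions of the other $n-1$ points. Since the $1$-median objective is a sum of Euclidean distances — and, as the paper notes via \cite{Bajaj}, the exact optimizer may be irrational — I would phrase Bob's decision in terms of \emph{comparing} the reported optimal cost to the two candidate optima rather than extracting coordinates, which sidesteps representability issues: it suffices that these two candidate values are distinct and that the true optimum equals one of them. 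A secondary point to get right is that an \emph{exact} streaming algorithm is assumed, so there is no $\eps$ to fight; the only slack is in choosing the construction parameters, and those depend only on $n$ (and the fixed $\delta$), so they can be fixed in advance and need not be communicated.

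One clean way to avoid fiddly case analysis is to keep the two-anchor idea of Lemma~\ref{lower:exact}: Bob places large equal clusters at two points $A = \omega^j$ and $B$ slightly inside along the same ray, so that the optimal median is pinned between them and its exact position along that short segment is determined solely by whether $p_j$ sits at $\omega^j$ (if $a_j = 0$) or strictly inside (if $a_j = 1$); since $1$-median of a point set on a line segment is a weighted-median-type quantity, the two cases yield distinguishable optimal costs, and all $p_i$ with $i \neq j$ lie off to the side contributing a quantity independent of $a_j$. Either realization works; the write-up should pick one and verify the single inequality that the $a_j=0$ and $a_j=1$ optimal costs differ, which is the crux.
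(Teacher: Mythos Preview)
Your reduction from indexing is the right idea, but the construction has a genuine gap, and the paper's own proof avoids it by working in one dimension rather than two.

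The gap is this: you repeatedly say that the contribution of the points $p_i$ with $i\neq j$ to the optimal cost (or to the optimal center's location) is ``independent of $a_j$'', which is true, but what Bob actually needs is to \emph{know} that contribution in order to compare against the two candidate values. That contribution depends on all of the bits $a_i$, $i\neq j$, which Bob does not have; sending $|I(a)|$ does not fix this, since moving $p_i$ inward by $\delta$ along the ray $\omega^i$ changes its distance to a point near $\omega^j$ by an amount that depends on the angle between $\omega^i$ and $\omega^j$, not merely on how many bits are set. The same objection applies to the ``position of the exact center'' variant: the Fermat--Weber point of the full configuration moves when any $p_i$ is perturbed, so its location along the $\omega^j$ ray is a function of the whole string $a$, not of $a_j$ alone. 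In short, neither the optimal cost nor the optimal center separates into a part Bob can compute plus a part determined by $a_j$; so your ``compare to two precomputed candidates'' step does not go through as stated.

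The paper sidesteps all of this with a one-dimensional construction. Alice places $p_i = i + \tfrac{(-1)^{a_i}}{4}$ on the real line and streams them in; Bob then pads with points at $0$ and at $n+1$ so that the total count is odd and exactly $j-1$ of Alice's points lie to the left of $p_j$. The key fact is that in $\REAL^1$ with an odd number of points the $1$-median is \emph{exactly} the order-statistic middle point, a purely combinatorial statement with no irrationality issues and no dependence on the other $a_i$. Thus the exact streaming algorithm must output $p_j$ itself, from which Bob reads off $a_j$ by checking whether $p_j$ lies to the left or right of the integer $j$. This is both simpler and immune to the ``Bob doesn't know the other bits'' obstruction that breaks your two-dimensional plan.
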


\begin{proof}
Again, we reduce from the indexing problem. Let $a\in\br{0,1}^n$ be the string received by Alice. She produces the points $p_i = i + \frac{(-1)^{a_i}}{4}, i\in[n]$
and feeds them to the algorithm. Then, she sends the memory of size $s(n)$ to Bob. Now, we use the fact that in one dimension, if there is an odd number of points, then the median is the unique point from the input stream, such that there is an equal number of points to its right and to its left. So, by just adding the correct number of points located at $0$ and $n+1$, it is possible for Bob to select the desired point $p_j$, and thus, to retrieve the value of $a_j$. Consequently, $s(n)=\Omega(n)$ follows.
\end{proof}

Using these lemmas we can derive our negative result concerning asymptotically exact streaming algorithms.

\begin{theorem}
\label{lower:kmeanmed}
Any asymptotically exact streaming algorithm for $k$-median for $k \geq 2$ resp. $k$-means for $k \geq 3$ uses $\Omega(n)$ memory.
\end{theorem}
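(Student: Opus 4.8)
The plan is to reduce from the indexing problem, reusing the point configurations of Lemma~\ref{lower:exact} (for $k$-means) and Lemma~\ref{lemma:1median} (for $k$-median), but now upgrading the argument to cope with the fact that an asymptotically exact algorithm only returns a $(1\pm\eps)$-approximation with $\eps=\eps(n)\to 0$, not an exact optimum. The two ingredients are: (1) give the algorithm \emph{more} centers than the hard base instance needs and use the surplus to absorb far-away ``padding'' clusters; (2) show that once $\eps$ drops below a fixed threshold, a $(1+\eps)$-approximate solution still reveals the embedded bit. This also explains the arithmetic in the statement: the $2$-means base of Lemma~\ref{lower:exact} needs exactly $2$ centers and the $1$-median base of Lemma~\ref{lemma:1median} needs exactly $1$, while our reduction always dedicates at least one further center to padding; hence $k\ge 3$ for means and $k\ge 2$ for median, and $1$-means is excluded because it admits no linear lower bound at all.

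\textbf{The construction.} Given an algorithm $\mathcal A$ using $s(n)$ memory, Alice encodes her string $a\in\{0,1\}^n$ into the base point set (the points $p_i=(1-a_i\delta)\omega^i$ in the means case, resp.\ $p_i=i+\tfrac{(-1)^{a_i}}{4}$ in the median case), feeds them to $\mathcal A$, and sends the memory state together with $O(\log n)$ auxiliary counters to Bob. Bob first inserts his index-dependent points exactly as in the relevant lemma (the $M_0$ copies at $A$ and $B$; resp.\ the $\Theta(n)$ points at $0$ and $n+1$), where $M_0$ is chosen large enough that the base instance's optimal cluster structure is pinned down. Then Bob appends $k-2$ (resp.\ $k-1$) \emph{padding clusters}: the $\ell$-th is a block of $M$ identical points at a location $z_\ell$ placed at pairwise distance, and distance from the base region, much larger than the diameter of everything inserted so far. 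The key invariant is that for every fixed $M_0$ and every sufficiently large $M$, the optimal $k$-clustering places one center on each $z_\ell$ at cost $0$ and uses the remaining two (resp.\ one) centers to solve the base instance exactly; in particular $\cost(C^*,\cdot)=\cost^{base}_{opt}$ is a fixed quantity, independent of $M$.

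\textbf{Reading off the bit.} Let $\tilde C$ be the $(1+\eps)$-approximate solution $\mathcal A$ maintains after Bob's stream. One argues in three steps. (a) For $M$ large, each padding cluster has a center of $\tilde C$ within $O\big(\sqrt{\cost^{base}_{opt}/M}\big)$ of $z_\ell$, since otherwise that cluster alone contributes more than $(1+\eps)\cost^{base}_{opt}$; hence those $k-2$ (resp.\ $k-1$) centers are far from the base region. (b) Consequently the two (resp.\ one) remaining centers serve all base points and form a $(1+\eps')$-approximation for the base instance with $\eps'\le\eps$. (c) By a quantitative version of Lemmas~\ref{lower:exact}/\ref{lemma:1median}, there is a threshold $\eps^\ast>0$ depending only on $n,\delta,M_0$ (not on $M$) such that for $\eps<\eps^\ast$ the combinatorial structure is forced (which center serves $p_j$, which points lie in which Voronoi cell) and the relevant center of $\tilde C$ lies within the fixed gap that separates the case $a_j=0$ from $a_j=1$; in the means case this uses strong convexity of $\cost$ on a fixed cluster assignment (cost overhead $\ge (M_0{+}1)\,\|\tilde c_A-c_A^\ast\|^2$), in the median case the piecewise-linear objective whose slope just off the median is at least $1$. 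Since $\eps(L)\to 0$, Bob picks $M$ so large that the stream length $L$ satisfies $\eps(L)<\eps^\ast$, recovers $a_j$, and solves indexing; by Theorem~\ref{th:indexing} this forces $s(n)+O(\log n)=\Omega(n)$, i.e.\ $s(n)=\Omega(n)$.

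\textbf{Main obstacle.} The routine parts are the padding bookkeeping in steps (a)--(b); the real work is step (c): converting the exact-recovery arguments of Lemmas~\ref{lower:exact} and~\ref{lemma:1median} into approximate-recovery statements whose error threshold $\eps^\ast$ is bounded away from $0$ \emph{uniformly in $M$}. This requires (i) stability of the optimal cluster assignment under near-optimal perturbations, which holds because all the decisive cost gaps in those constructions are $\Omega(1)$ relative to the (fixed) instance parameters, and (ii) a quantitative stability estimate for the optimal center position given that assignment. The care needed is to verify that every threshold depends only on $n,\delta,M_0$ and never on the padding count $M$, so that letting $M\to\infty$ genuinely drives $\eps$ below $\eps^\ast$ while leaving the ``signal gap'' that encodes $a_j$ intact.
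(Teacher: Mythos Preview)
Your proposal is correct and follows essentially the same route as the paper: absorb the extra centers with far-away padding points so that the remaining centers are forced to solve the base instance of Lemma~\ref{lower:exact} (means) resp.\ Lemma~\ref{lemma:1median} (median), and then let the stream length grow until $\eps$ drops below the discrete gap that encodes the indexed bit. The paper's version is terser---it phrases the reduction as a single $k\to k{+}1$ padding step and simply asserts that once $\eps$ is small enough the centers are recovered exactly---whereas you spell out the multiple padding clusters for general $k$ and make the quantitative stability argument in step~(c) explicit; but the underlying idea is the same.
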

\begin{proof}
The idea is given an input stream of points for which we want to compute a $k$-clustering, give this stream to an asymptotically exact streaming algorithm for $(k+1)$-clustering. Then choose one point at a large enough distance from any input point and repeatedly feed the algorithm with this point. By doing this, one of the centers will move to this distant point and the other $k$ centers will provide a $k$-clustering of the points from the initial stream that does not change any more. If we repeat the insertion a large enough number of times, we will be able to retrieve the position of the centers up to an arbitrary small error implying that the solution is optimal. Now, Lemma \ref{lower:exact} and \ref{lemma:1median} yield the linear lower bound.
\end{proof}

We close this section with the lower bound on $k$-center clustering.

\begin{theorem}
\label{lower:kcenter}
Any asymptotically exact streaming algorithm for $k$-center clustering uses $\Omega(n)$ bits of space.
\end{theorem}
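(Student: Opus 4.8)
The plan is to reduce from the indexing problem $IND$ (Theorem~\ref{th:indexing}), using the following observation, which is the analogue of the ``repeat to exactness'' device in the proofs of Theorems~\ref{lem:fzero} and \ref{lower:kmeanmed}: if a point is fed to an asymptotically exact $1$-center algorithm and then repeated arbitrarily often, the underlying point \emph{set}, and hence its optimal $1$-center (the minimum enclosing ball), does not change, so the maintained cost tends to the true optimum $r^\ast$; since the minimum enclosing ball is unique and the max-distance objective is continuous and coercive, the maintained center also tends to the true center $c^\ast$. Thus Bob can recover $(c^\ast,r^\ast)$ to arbitrary precision. It therefore suffices to encode Alice's string $a\in\{0,1\}^n$ into $n$ points in the plane from which, for each index $j$, Bob can recover the exact position of $p_j$, which will reveal $a_j$.

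Concretely, I would let Alice place $p_i=(1+a_i\delta)\,\omega^i$ for $i\in[n]$, where $\omega^i=(\cos\tfrac{2\pi i}{n},\sin\tfrac{2\pi i}{n})$ and $\delta=n^{-3}$, run the algorithm on $p_1,\dots,p_n$, and send its memory to Bob. Given $j$, Bob appends the single point $q_j=-M\,\omega^j$ with $M=n^3$ and then repeats $q_j$ many times. The key geometric claim is that the minimum enclosing ball of $\{p_1,\dots,p_n,q_j\}$ is exactly the ball with diameter $[q_j,p_j]$. To see this one checks, first, that $p_j$ strictly minimizes $\langle p_i,-\omega^j\rangle$ over $i\in[n]$: indeed $\langle p_j,-\omega^j\rangle=-(1+a_j\delta)\le-1$, while for $i\ne j$ one has $\langle p_i,-\omega^j\rangle\ge-(1+\delta)\cos\tfrac{2\pi}{n}$, and $\delta=n^{-3}$ forces $(1+\delta)\cos\tfrac{2\pi}{n}<1$, so the gap $\eta:=1-(1+\delta)\cos\tfrac{2\pi}{n}$ is positive of order $n^{-2}$. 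Second, expanding $(p_i-q_j)\cdot(p_i-p_j)=|p_i|^2-p_i\cdot p_j-M\big(\langle p_i,-\omega^j\rangle-\langle p_j,-\omega^j\rangle\big)\le 4-M\eta$, which is negative since $M\eta=\Omega(n)$; hence every $p_i$ with $i\ne j$ lies strictly inside that ball, while $q_j,p_j$ lie on its boundary, so it is the (unique) minimum enclosing ball. A short computation gives $|q_j-p_j|^2=(M+1+a_j\delta)^2$, so $r^\ast=\tfrac12(M+1+a_j\delta)$ and $c^\ast=\tfrac12(q_j+p_j)$; knowing $M,\delta$ and $q_j$, Bob reads off $a_j$ from the recovered radius (or from $2c^\ast-q_j=p_j$).

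Putting it together, after repeating $q_j$ often enough, Bob's maintained solution is within a factor $1+o(n^{-6})$ of $r^\ast$, which is below half the relative gap $\Theta(n^{-6})$ between the two possible values of $r^\ast$, so Bob outputs $a_j$ correctly whenever the algorithm succeeds, i.e.\ with probability at least $1-\delta$ (amplified below $1/3$ by running a constant number of independent copies if needed). This is a one-way protocol for $IND$ whose communication is the algorithm's memory $s(n)$, so Theorem~\ref{th:indexing} gives $s(n)=\Omega(n)$ bits; since this already rules out $k=1$ in the plane, the general $k$ case follows by parking the remaining $k-1$ centers at a single distant, repeated point exactly as in the proof of Theorem~\ref{lower:kmeanmed}. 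The step I expect to be the real obstacle is precisely the isolation argument of the second paragraph — choosing $\delta$ small enough relative to the angular spacing so that, regardless of $a$, shooting $q_j$ to infinity in the outward normal direction at $p_j$ makes $p_j$, and only $p_j$, the extremal point determining the minimum enclosing ball; the rest is routine computation together with the by-now standard reduction from asymptotic exactness to exactness.
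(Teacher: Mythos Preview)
Your proof is correct and follows the same high-level strategy as the paper: reduce from $IND$, place Alice's points on the vertices of a regular $n$-gon in the plane, have Bob shoot an antipodal point in the direction of the queried index, and then repeat a point that does not change the optimal minimum enclosing ball so that the asymptotic exactness drives the error below the gap between the two cases. The geometric verification that the ball on diameter $[q_j,p_j]$ is the MEB is clean and correct.

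The differences with the paper are only in the encoding/decoding details. The paper encodes $a_j$ by \emph{presence or absence} of the vertex $v_j$ (Alice inserts $v_j$ iff $a_j=1$), uses Bob's point at the fixed location $-3v_j$, and repeats the \emph{origin} (a neutral interior point) to shrink the error; Bob then decides by looking at the distance from the returned center to his own point $-3v_j$, for which the two cases are separated by a constant-times-$n^{-2}$ gap. You instead encode $a_j$ by a \emph{radial perturbation} of all vertices, push Bob's point to $-n^3\omega^j$, repeat that point, and decode via the recovered radius or center. The paper's route avoids the parameter balancing ($\delta$ versus $M$ versus the angular gap) and gives a larger relative gap ($\Theta(n^{-2})$ rather than $\Theta(n^{-6})$), so fewer repetitions suffice; your route makes the MEB explicit (the diametral ball) and is more self-contained. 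One small point worth stating explicitly in your write-up: the algorithm only outputs a center, not a cost, so Bob cannot ``read the radius'' directly --- your alternative $2\tilde c - q_j \to p_j$ (via uniqueness of the MEB and continuity/coercivity of the objective) is the argument that actually does the work, and it is sound.
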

\begin{proof}
We show the claim already holds for the case $k=1$. 
The proof is conducted by reduction from the indexing problem. Let $a\in \{0,1\}^n$ be Alice's input and $i\in [n]$ be the index of the bit that Bob is supposed to report. Let $v_j, j\in [n]$ be the vertices of a regular $n$-polygon in clockwise order, centered at the origin and starting with $v_1=(1/2,0,\ldots)$. This construction does not depend on the choice of $a$ and $i$ but only on the size $n$. Now, suppose there is an asymptotically exact streaming algorithm for 1-center that uses space $s(n)=o(n)$. Alice inserts the vertices $v_j$ for all bits $a_j=1$ into the input stream. Then she communicates the memory of size $s(n)$ to Bob. Bob can then continue to simulate the algorithm and inserts the point $p_i=-3 \cdot v_i$. If $a_i=1$, we know that the 1-center must be located at $-v_i$ and is therefore at unit distance from $p_i$ and also from $v_i$. In particular this implies that any $(1+\varepsilon)$-approximation is at distance at least $1-\varepsilon$ from $p_i$. If on the other hand $a_i=0$, the center will appear within $(1+\varepsilon)(1-\delta)$ from $p_i$ for some fixed $\delta>0$. Bob's intention is to distinguish between these two cases. To this end, he 
 continues to insert a large number of points located at the origin. These points do not affect the optimal center in any of the cases but, since we have an asymptotically exact streaming algorithm, the approximation improves and at some point the error decreases to $\varepsilon < \delta/2$. Then we have that $(1+\varepsilon) (1-\delta) < (1+\varepsilon) (1-2\varepsilon) < (1-\varepsilon).$ This means that the possible regions for the $1$-center depending on Alice's bit are disjoint and thus Bob can recover and report the correct solution to the indexing problem. The lower bound of $s(n)=\Omega(n)$ follows.
\end{proof}

\section{Regression}
\label{section:Regression}

Next, we develop an asymptotically exact algorithm for the regression problem. Let $A$ and $b$ the input matrix and target vector obtained by processing the first $n$ items from the input stream. One way to solve this problem for a fixed precision using random sign matrices for sketching is as follows. Sketch both, the matrix $A$ and the vector $b$ with an appropriately rescaled sign matrix $S$ and solve the regression problem in the sketch space. That is, let $\tilde x$ be the optimal solution to $\min_{x \in \REAL ^d} \norm{SAx - Sb}$. It has been shown in \cite{LinAlgStream} that it is sufficient to have $\Theta(\frac{d}{\eps}\log(\frac{1}{\delta}))$ as the target dimension of the sketching matrix, such that with probability $1-\delta$ we have $\norm{A\tilde x - b} \leq (1 + \eps) \min_{x \in \REAL ^d}\norm{A x -b}$.

Now we want to compute a sketch of the matrix $A$ and the vector $b$ such that the approximate solution will have a $(1 + \eps)$ relative error, with $\eps$ decreasing to zero as the size of the input goes to infinity. We first describe how the sketching will be performed according to our algorithm and then analyze this method. In the following, we will assume that the matrix $A$ and the vector $b$ are given row-by-row. Given a matrix $D$, we will denote by $D^{(n_1)}$ the sub-matrix consisting of the first $n_1$ rows of $D$. The algorithm is given in Figure \ref{algo:sketching}. It splits the input matrix into blocks of $n_i=2^i n_0$ rows at step $i$. Then it computes a sketch of size $m_i$ for the block $i$ using rescaled sign matrices for sketching. The resulting sketch will simply be the concatenation of the single sketches at each step.

\begin{algorithm}
\DontPrintSemicolon 
\KwIn{Matrix $A\in\REAL^{n\times d}$ given row-by-row}
\KwOut{A sketch $SA$ of $A$}
$i \gets 0$\;
\While{not \emph{End of Stream}} {
  Sketch the next $n_i=2^i n_0$ rows of A with a sketching matrix $S_i$ with $m_i$ rows\;
  $i \gets i+1$\;
}
\Return concatenation of the sketches at each step\;
\caption{Improving sketching algorithm for regression}
\label{algo:sketching}
\end{algorithm}

The procedure described by the algorithm is equivalent to multiplying the input matrix $A$ and the vector $b$ by the block diagonal matrix $S = \mathrm{diag}(S_1, \ldots, S_l)$ where $S_i$ is the matrix used at step $i$ for sketching the input block. The exact number of rows $m_i$ for each of the $S_i$ will be determined later. It will be parametrized by $\eps_i$ and $\delta_i$, the error bound and failure probability of $S_i$. Now we want to prove that solving the problem for $SA$ and $Sb$ will give a good approximate solution.
In the following, we will choose $\eps_i = \frac{1}{\log n_i} = \frac{\eps_0}{i}$ and $\delta_i = \frac{\delta}{ci^2}$, where $c$ is a constant chosen to be large enough, such that $ \sum \delta_i \leq \delta$ holds. Following the outline of \cite{LinAlgStream} we begin with analyzing our sketches for some simpler problems, namely subspace approximation and matrix multiplication. We will use these results as tools or building blocks in the analysis for regression.

\subsection{Subspace approximation}
Consider the problem of subspace approximation.
\begin{problem}[\emph{Subspace approximation}]
Given a matrix $A$ the task is to find a matrix $\tilde A$ of smaller size such that for all $x \in \REAL ^d$ we have $\norm{\tilde A x }^2 \in (1 \pm  \eps) \norm{Ax}^2.$
\end{problem}

Theorem \ref{th:embedding} shows that this problem can be solved using random sign matrices for any fixed precision. Namely, if $S_0$ is a random sign matrix with an appropriate number of rows, then $S_0A$ is a solution for the subspace approximation problem with high probability. Now, it is natural to ask whether we have a  similar property for our block-diagonal matrix $S$.

We denote by $\sigma_d$ the smallest singular value of the matrix $A$.
\begin{lemma}
\label{lemma:improvedSubspaceEmbedding}
Assume that $\sigma_d \rightarrow \infty$ as the size of the input goes to infinity. Let $m_i \geq \frac{d C}{\eps_i^2} \log(\frac{1}{\delta})$ for some absolute constant $C$. Then $SA$ is a $(1 \pm \eps)$-sketch for subspace approximation with $\eps \rightarrow 0$ as $n$ goes to infinity. More precisely, if $\sigma_d^2 \geq f(n)$ for some positive, monotone function $f(n) \rightarrow \infty$, then we have $\eps = \bigO{\frac{1}{\log f(n)}}.$
\end{lemma}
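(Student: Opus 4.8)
The plan is to mimic the prefix-discarding idea from Lemma~\ref{lemma:F2}: the block-diagonal matrix $S=\mathrm{diag}(S_1,\ldots,S_l)$ satisfies $\norm{SAx}^2=\sum_i\norm{S_iA^{(i)}x}^2$ where $A^{(i)}$ is the block of rows processed at step $i$. Fix an index $i_0$; by Theorem~\ref{th:embedding}, for every $i\geq i_0$ and simultaneously over all $x\in\REAL^d$ we have $\norm{S_iA^{(i)}x}^2\in(1\pm\eps_{i_0})\norm{A^{(i)}x}^2$ with probability $1-\delta_i$ (since $\eps_i\leq\eps_{i_0}$ for $i\geq i_0$ and $m_i$ is chosen large enough), while for the early blocks $i<i_0$ we only have the crude two-sided bound $\norm{S_iA^{(i)}x}^2\in(1\pm\eps_0)\norm{A^{(i)}x}^2$. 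Summing, I get
\[
\norm{SAx}^2 \;\geq\; (1-\eps_{i_0})\sum_{i\geq i_0}\norm{A^{(i)}x}^2 \;=\; (1-\eps_{i_0})\Big(\norm{Ax}^2-\sum_{i<i_0}\norm{A^{(i)}x}^2\Big),
\]
and symmetrically $\norm{SAx}^2\leq(1+\eps_0)\sum_{i<i_0}\norm{A^{(i)}x}^2+(1+\eps_{i_0})\norm{Ax}^2$. So everything reduces to showing that the contribution of the first $i_0$ blocks, $\sum_{i<i_0}\norm{A^{(i)}x}^2=\norm{A^{(i<i_0)}x}^2$, is a vanishing fraction of $\norm{Ax}^2$, uniformly in $x$.

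The key step is exactly this uniform domination. The number of rows in the first $i_0$ blocks is $\sum_{i<i_0}2^i n_0 = n_0(2^{i_0}-1)\le n_0 2^{i_0}$, so $\norm{A^{(i<i_0)}x}^2\le n_0 2^{i_0}\,\max_{j<n_0 2^{i_0}}\langle a_j,x\rangle^2$ where $a_j$ are the early rows; more simply, $\norm{A^{(i<i_0)}x}^2\le\normF{A^{(i<i_0)}}^2\norm{x}^2$ is too weak, so instead I bound it by the operator-norm of that submatrix: $\norm{A^{(i<i_0)}x}^2\le\norm{A^{(i<i_0)}}^2\norm{x}^2$. On the other side, $\norm{Ax}^2\ge\sigma_d^2\norm{x}^2\ge f(n)\norm{x}^2$ by the hypothesis on the smallest singular value. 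Hence
\[
\frac{\norm{A^{(i<i_0)}x}^2}{\norm{Ax}^2}\;\le\;\frac{\norm{A^{(i<i_0)}}^2}{f(n)}
\]
uniformly in $x$. The operator norm of a $(n_0 2^{i_0}\times d)$ matrix whose entries come from the first few rows of the stream is at most its Frobenius norm, which is some fixed quantity depending on $i_0$ (but not on $n$) — call it $M(i_0)$. Choosing $i_0$ to grow slowly with $n$, say $i_0=\Theta(\log f(n))$ (so that $\eps_{i_0}=\eps_0/i_0=\bigO{1/\log f(n)}$), I still have $M(i_0)^2/f(n)\to 0$ as long as $i_0$ grows slowly enough relative to $f$; since $f(n)\to\infty$ I can always pick such an $i_0$, and then combining with $\eps_{i_0}$ gives a total error $\eps=\bigO{1/\log f(n)}$.

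The main obstacle is making the bound on the early-block contribution genuinely \emph{uniform over all $x$} rather than pointwise: a naive application of Theorem~\ref{th:sketching} to individual vectors is not enough because we need the distortion to hold for the whole $d$-dimensional subspace. This is why I route the argument through the operator norm $\norm{A^{(i<i_0)}}$ and the smallest singular value $\sigma_d$ — the ratio $\norm{A^{(i<i_0)}x}^2/\norm{Ax}^2$ is then controlled by a ratio of operator norms independent of $x$. A secondary subtlety is that $\norm{A^{(i<i_0)}}$ is itself data-dependent and could in principle grow with $n$ if later we are forced to take $i_0$ large; one must be slightly careful to let $i_0\to\infty$ slowly enough that $\norm{A^{(i<i_0)}}^2=o(f(n))$, which is possible precisely because $f$ is monotone and diverges. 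Finally a union bound over the steps $i\ge i_0$, using $\delta_i=\delta/(ci^2)$, keeps the total failure probability below $\delta$, and renaming $\eps$ absorbs the constants.
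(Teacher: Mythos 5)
Your decomposition and thresholding at $i_0$ are essentially the paper's proof; the paper just outsources the thresholding step to Lemma~\ref{lemma:series} in the appendix, with $a_i = \norm{A_ix}^2$, while you inline it. The operator-norm routing you use ($\norm{A^{(i<i_0)}x}^2 \le \norm{A^{(i<i_0)}}^2\norm{x}^2$ against $\norm{Ax}^2\ge\sigma_d^2\norm{x}^2$) is the same calculation the paper does, written in slightly different language. So the overall structure is correct.

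The genuine gap is in the quantitative conclusion. You observe, correctly, that $M(i_0) := \norm{A^{(i<i_0)}}$ is a fixed quantity once $i_0$ is fixed, and that ``as long as $i_0$ grows slowly enough relative to $f$'' the ratio $M(i_0)^2/f(n)$ tends to zero. That establishes $\eps\to 0$, but it does \emph{not} give the stated rate $\eps = \bigO{1/\log f(n)}$. To make the rate work you must simultaneously have $\eps_{i_0} = \eps_0/i_0 = \bigO{1/\log f(n)}$ --- forcing $i_0 = \Omega(\log f(n))$ --- \emph{and} $M(i_0)^2/f(n) = \bigO{1/\log f(n)}$. The second condition fails without a quantitative bound on how fast $M(i_0)$ can grow with $i_0$; if, say, $M(i_0)$ grew doubly exponentially in $i_0$, you would be forced to take $i_0 = O(\log\log f(n))$, yielding only $\eps = \bigO{1/\log\log f(n)}$. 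The paper closes this hole by explicitly invoking the bounded-precision-of-entries assumption (``entries stored with $O(\log n)$ bits''), which gives $\normF{A_i} \le c\, n_i^{\Delta+1}$ and hence $M(i_0) \le c' 2^{\Delta' i_0}$; this single-exponential bound is exactly what lets one take $i_0 = c''\log f(n)$ with $c''$ small enough that $2^{2\Delta' i_0}/f(n) = f(n)^{2\Delta'c''-1}$ is still $o(1/\log f(n))$. You need to state and use this assumption (or an equivalent polynomial bound on the row norms) to justify the claimed $\bigO{1/\log f(n)}$; monotonicity and divergence of $f$ alone do not suffice.
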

\begin{proof}
By linearity, it is enough to prove the inequality for all $x$ such that $\norm x = 1$. We only show the upper bound, the lower bound can be treated similarly. Applying Theorem \ref{th:embedding} to each of the blocks, we get
$$\norm{SAx}^2 = \sum_{i=1}^l \norm{S_iA_ix}^2 \leq \sum_{i = 1}^l (1 + \eps_i) \norm{A_ix}^2  = \norm{Ax}^2\left( 1 + \frac{\sum_{i=1}^l \eps_i\norm{A_ix}^2}{\norm{Ax}^2}\right)$$
Now, since $\norm x = 1$, and since we can assume that the entries of the matrix can be stored by a logarithmic number of bits, we have that $\norm{A_ix} \leq \normF{A_i} \leq c n_i^{\Delta+1} $. So applying Lemma \ref{lemma:series} from the appendix, we have that
$$\eps = \sup_{x} \frac{\sum_{i=1}^l \eps_i\norm{A_ix}^2}{\norm{Ax}^2} \rightarrow 0.$$
In particular, using the assumption on the smallest singular value, we have that $\norm{Ax}^2 \geq \sigma_d^2 \geq f(n)$ and so there exists a constant $C$ such that
$\eps \leq \frac{C}{\log f(n)}$.
\end{proof}

 Actually, for the proof of our main theorem regarding the regression problem, we will only need that $S$ is a subspace embedding for $A$ with a fixed error $\eps_0$ (without improving precision). This is given by the following corollary. 
\begin{corollary}
\label{corollary:simpleSubspaceEmbedding}
Let $m_i \geq \frac{d C}{\eps_0^2} \log(\frac{1}{\delta})$ for some absolute constant $C$. Then $SA$ is a $(1 \pm \eps_0)$-sketch for subspace approximation.
\end{corollary}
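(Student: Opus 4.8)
The plan is to follow the proof of Lemma~\ref{lemma:improvedSubspaceEmbedding}, but since we now only ask for a \emph{fixed} precision $\eps_0$ on each block, no assumption on the smallest singular value is needed and the argument collapses to a block-wise application of Theorem~\ref{th:embedding} glued together by a union bound.

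First I would recall the structure of the sketch: $S=\mathrm{diag}(S_1,\dots,S_l)$ acts on $A$ by sketching the $i$-th block $A_i$ (the rows processed at step $i$) with $S_i$, so that for every $x\in\REAL^d$ the vector $SAx$ is the concatenation of the vectors $S_iA_ix$, and likewise $Ax$ is the concatenation of the $A_ix$. Hence $\norm{SAx}^2=\sum_{i=1}^l\norm{S_iA_ix}^2$ and $\norm{Ax}^2=\sum_{i=1}^l\norm{A_ix}^2$, which reduces the claim to controlling each block separately.

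Next I would apply Theorem~\ref{th:embedding} to each block $A_i$ with error parameter $\eps_0$ and failure probability $\delta_i$; the hypothesis $m_i\geq \frac{dC}{\eps_0^2}\log\frac1\delta$, with $C$ large enough to absorb the $\log(ci^2)$ term coming from $\log\frac1{\delta_i}$, guarantees that with probability at least $1-\delta_i$ we have $\norm{S_iA_ix}^2\in(1\pm\eps_0)\norm{A_ix}^2$ for \emph{all} $x\in\REAL^d$ simultaneously. Taking a union bound over the at most $l=\bigO{\log n}$ blocks and using $\sum_i\delta_i\leq\delta$, all these events hold at once with probability at least $1-\delta$. On that event, summing the per-block inequalities gives
$$\norm{SAx}^2=\sum_{i=1}^l\norm{S_iA_ix}^2\le\sum_{i=1}^l(1+\eps_0)\norm{A_ix}^2=(1+\eps_0)\norm{Ax}^2$$
for every $x\in\REAL^d$, and the matching lower bound $\norm{SAx}^2\ge(1-\eps_0)\norm{Ax}^2$ follows in exactly the same way, which is precisely the statement.

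The only point needing a little care is the union bound over a number of blocks that grows with $n$: this is why the $\delta_i$ are chosen to decay like $\delta/(ci^2)$ and why $m_i$ carries a mild $\log\log n$ overhead hidden inside the constant. This is not really an obstacle; the genuine difficulty --- controlling the weighted average $\sum_i\eps_i\norm{A_ix}^2/\norm{Ax}^2$ when the $\eps_i$ vary with $i$ --- only arises in Lemma~\ref{lemma:improvedSubspaceEmbedding} and is entirely avoided here because the precision is the same fixed $\eps_0$ on every block.
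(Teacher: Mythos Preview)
Your proof is correct and follows exactly the approach the paper takes: apply Theorem~\ref{th:embedding} block-wise as in the proof of Lemma~\ref{lemma:improvedSubspaceEmbedding}, then observe that when every block has the same precision $\eps_0$ the weighted average of errors collapses to $\eps_0$. The paper's own proof is a one-liner to this effect; your more explicit handling of the union bound and the $\delta_i$ is fine and matches the conventions set up earlier in the section.
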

\begin{proof}
The claim follows immediately by applying Theorem \ref{th:embedding} block-wise as in the previous proof and bounding every $\eps_i$ by $\eps_0$.
\end{proof}

\subsection{Matrix multiplication}
Before we get back to the analysis for regression, we investigate a second problem that we will need in the analysis, namely the matrix multiplication problem. Given two matrices, we want to compute their product. In the streaming setting, we can again use random sign matrices to sketch the input matrices and get a matrix that approximates the product using little space. The following result is proved in \cite{LinAlgStream}.
\begin{theorem}
\label{th:matrixMultiplication}(\cite{LinAlgStream})
Let $A$, $B$ be matrices of size $n \times d$ and $n \times d'$ respectively. If $S$ is a normalized random sign matrix of size $n \times k$ with $k \geq \frac{C}{\eps^2}\log(\frac{dd'}{\delta})$ for some constant $C$, then with probability $1 - \delta$ we have that $\normF{A^T S^T S B - A^T B} \leq \eps \normF{A} \normF{B}.$
\end{theorem}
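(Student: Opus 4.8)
The plan is to reduce the Frobenius-norm guarantee to an entrywise statement about preserved inner products and then apply the single-vector bound of Theorem~\ref{th:sketching} through the polarization identity. Writing $A_1,\dots,A_d$ for the columns of $A$ and $B_1,\dots,B_{d'}$ for those of $B$, the $(i,j)$ entry of $A^{T}S^{T}SB-A^{T}B$ is exactly $\dotProd{SA_i}{SB_j}-\dotProd{A_i}{B_j}$, so
\[
\normF{A^{T}S^{T}SB-A^{T}B}^2=\sum_{i=1}^{d}\sum_{j=1}^{d'}\bigl(\dotProd{SA_i}{SB_j}-\dotProd{A_i}{B_j}\bigr)^2,
\]
and it suffices to control each of the $dd'$ summands and add up the errors.

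For a fixed pair $(i,j)$ I would first dispose of the case $A_i=0$ or $B_j=0$ (whose contribution is $0$) and otherwise normalize: with $u=A_i/\norm{A_i}$ and $v=B_j/\norm{B_j}$ we have $\dotProd{SA_i}{SB_j}-\dotProd{A_i}{B_j}=\norm{A_i}\norm{B_j}\bigl(\dotProd{Su}{Sv}-\dotProd{u}{v}\bigr)$. The polarization identity $\dotProd{u}{v}=\tfrac14(\norm{u+v}^2-\norm{u-v}^2)$ together with $S(u\pm v)=Su\pm Sv$ gives
\[
\dotProd{Su}{Sv}-\dotProd{u}{v}=\tfrac14\Bigl[\bigl(\norm{S(u+v)}^2-\norm{u+v}^2\bigr)-\bigl(\norm{S(u-v)}^2-\norm{u-v}^2\bigr)\Bigr].
\]
Now apply Theorem~\ref{th:sketching} to the two fixed vectors $u+v$ and $u-v$ with error parameter $\eps$ and failure probability $\delta'=\delta/(2dd')$; this is precisely what forces $k=\Theta(\eps^{-2}\log(dd'/\delta))$. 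On the resulting success event both $\norm{S(u\pm v)}^2$ lie in $(1\pm\eps)\norm{u\pm v}^2$, and since the parallelogram law gives $\norm{u+v}^2+\norm{u-v}^2=2\norm{u}^2+2\norm{v}^2=4$, we obtain $\bigl|\dotProd{Su}{Sv}-\dotProd{u}{v}\bigr|\le\eps$, hence $\bigl|\dotProd{SA_i}{SB_j}-\dotProd{A_i}{B_j}\bigr|\le\eps\,\norm{A_i}\norm{B_j}$.

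A union bound over the at most $2dd'$ invocations of Theorem~\ref{th:sketching} makes the last inequality hold for all pairs simultaneously with probability at least $1-\delta$; summing the squared errors then yields
\[
\normF{A^{T}S^{T}SB-A^{T}B}^2\le\eps^2\sum_{i,j}\norm{A_i}^2\norm{B_j}^2=\eps^2\Bigl(\sum_{i}\norm{A_i}^2\Bigr)\Bigl(\sum_{j}\norm{B_j}^2\Bigr)=\eps^2\,\normF{A}^2\normF{B}^2,
\]
and taking square roots gives $\normF{A^{T}S^{T}SB-A^{T}B}\le\eps\,\normF{A}\normF{B}$, the constant $C$ in the statement absorbing the implied constants of Theorem~\ref{th:sketching} and the $\log 2$ hidden in $\delta'$.

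I expect the only delicate point to be the bookkeeping rather than any genuine obstacle: one must charge the failure probability to exactly the $2dd'$ norm-preservation events so that $\log(dd'/\delta)$ --- and nothing larger --- enters $k$, and check that the $4$-wise independence within rows required by Theorem~\ref{th:sketching} is inherited by each application, which it is, since we never use independence across the different vectors $u\pm v$ (the randomness of $S$ is shared, and that is exactly what the union bound handles). As an aside, a direct second-moment computation of $\Ex{\normF{A^{T}S^{T}SB-A^{T}B}^2}$ followed by Markov's inequality would remove the logarithmic factor entirely, at the cost of only a constant success probability that one would then amplify separately.
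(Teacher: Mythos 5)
Your proof is correct and uses exactly the technique the paper itself employs: the paper cites Theorem~\ref{th:matrixMultiplication} to \cite{LinAlgStream} without reproducing a proof, but it proves the closely related Lemma~\ref{lemma:simpleMatrixMultiplication} by the same route --- reduce the Frobenius norm to entrywise inner-product errors, invoke the ``parallelogram rule'' (i.e.\ the polarization identity you spell out) on $a_i\pm b_j$ to get $|\dotProd{Sa_i}{Sb_j}-\dotProd{a_i}{b_j}|\le\eps\norm{a_i}\norm{b_j}$ per pair, union-bound over the $dd'$ pairs, and sum the squares to factor $\normF{A}^2\normF{B}^2$. You simply unpack the steps the paper compresses into one sentence: the normalization to $u,v$ with $\norm{u+v}^2+\norm{u-v}^2=4$, the explicit $\tfrac14\cdot 4\eps=\eps$ computation, and the bookkeeping $\delta'=\delta/(2dd')$ that produces the $\log(dd'/\delta)$ in $k$, so the two arguments are essentially identical.
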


Now we show that we can get an improving bound on the error using the same sketching method as we have used for the subspace approximation. We would like to show that we can get a $(1 + \eps)$ approximation for improving $\eps$ by sketching the matrices $A$ and $B$ with our block diagonal sketching matrix $S$. Before we prove the decreasing bound, we show the following lemma, which states that the total relative error is at most the maximum error on one of the blocks. That is, using the matrix $S$, we can get an approximation within a fixed error.

\begin{lemma}
\label{lemma:simpleMatrixMultiplication}
Let $A$ and $B$ be two matrices of size $n \times d $ and $n \times d'$ respectively. If $m_i \geq \frac{C}{\eps_1^2} \log(\frac{d d'}{\delta_i})$ then with probability $1 - \delta$ we have 
$
\normF{A^T S^T S B - A^T B} \leq \eps_1 \normF{A}\normF{B}
$
\end{lemma}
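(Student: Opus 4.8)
The plan is to reduce Lemma~\ref{lemma:simpleMatrixMultiplication} to the fixed-precision matrix multiplication bound of Theorem~\ref{th:matrixMultiplication} applied block by block, exactly mirroring the way Lemma~\ref{lemma:improvedSubspaceEmbedding} was reduced to Theorem~\ref{th:embedding}. Write $A = (A_1; \ldots; A_l)$ and $B = (B_1; \ldots; B_l)$ for the partition of the rows into the $l$ blocks, so that $A^T B = \sum_{i=1}^l A_i^T B_i$, and since $S = \mathrm{diag}(S_1, \ldots, S_l)$ is block diagonal we also get $A^T S^T S B = \sum_{i=1}^l A_i^T S_i^T S_i B_i$. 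Hence by the triangle inequality for the Frobenius norm,
\begin{align*}
\normF{A^T S^T S B - A^T B} \leq \sum_{i=1}^l \normF{A_i^T S_i^T S_i B_i - A_i^T B_i}.
\end{align*}

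Next I would apply Theorem~\ref{th:matrixMultiplication} to each block $i$: since $m_i \geq \frac{C}{\eps_1^2}\log(\frac{dd'}{\delta_i})$, with probability $1 - \delta_i$ we have $\normF{A_i^T S_i^T S_i B_i - A_i^T B_i} \leq \eps_1 \normF{A_i}\normF{B_i}$. By the union bound and the choice $\sum_i \delta_i \leq \delta$, all these events hold simultaneously with probability at least $1 - \delta$. Conditioning on this, the sum above is bounded by $\eps_1 \sum_{i=1}^l \normF{A_i}\normF{B_i}$, and then Cauchy--Schwarz gives
\begin{align*}
\sum_{i=1}^l \normF{A_i}\normF{B_i} \leq \sqrt{\sum_{i=1}^l \normF{A_i}^2}\sqrt{\sum_{i=1}^l \normF{B_i}^2} = \normF{A}\normF{B},
\end{align*}
using that $\normF{A}^2 = \sum_i \normF{A_i}^2$ and likewise for $B$. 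Combining the three displays yields $\normF{A^T S^T S B - A^T B} \leq \eps_1 \normF{A}\normF{B}$, as claimed.

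I do not anticipate a genuine obstacle here; the statement is essentially the matrix-multiplication analogue of Corollary~\ref{corollary:simpleSubspaceEmbedding}, and the only points requiring a little care are (i) checking that the block-diagonal structure of $S$ really does turn $A^T S^T S B$ into the sum of the per-block products with no cross terms (this is where the block-diagonality, as opposed to an arbitrary sketch, is used), and (ii) making sure the failure probabilities $\delta_i$ telescope to at most $\delta$ under the union bound, which is exactly the reason for the choice $\delta_i = \delta/(ci^2)$ fixed earlier. The Cauchy--Schwarz step is what replaces the need for the series-convergence Lemma~\ref{lemma:series} that was needed in the improving-precision version; here, because $\eps_1$ is a single fixed parameter pulled out of the sum, no convergence argument is required.
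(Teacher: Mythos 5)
Your proof is correct, but it takes a genuinely different route from the paper's. The paper never decomposes over the row blocks at all: it writes the squared error $\normF{A^T S^T S B - A^T B}^2$ entry-wise as $\sum_{i,j}(\dotProd{Sa_i}{Sb_j}-\dotProd{a_i}{b_j})^2$ over the columns $a_i$ of $A$ and $b_j$ of $B$, bounds each entry by $\eps_1\norm{a_i}\norm{b_j}$ using the parallelogram (polarization) identity applied to the composite block-diagonal sketch $S$ on the vectors $a_i\pm b_j$, and then takes a union bound over the $dd'$ entries; the factor $\log(dd'/\delta_i)$ in the hypothesis is there precisely to pay for that union bound. You instead exploit the block-diagonality of $S$ to write $A^T S^T S B - A^T B=\sum_i(A_i^T S_i^T S_i B_i - A_i^T B_i)$, invoke Theorem~\ref{th:matrixMultiplication} as a black box on each block (where $S_i$ genuinely is a plain i.i.d.\ sign matrix, so the theorem applies verbatim), and finish with the triangle inequality and Cauchy--Schwarz over the row partition. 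Your version is more modular and exactly parallels Corollary~\ref{corollary:simpleSubspaceEmbedding}, and it cleanly avoids a point the paper glosses over, namely that the entry-wise argument needs the \emph{composite} block-diagonal $S$ to preserve inner products of arbitrary $n$-dimensional vectors (which itself requires a per-block union bound that the paper does not spell out). What the paper's route buys in exchange is a direct proof that the composite sketch $S$ preserves inner products, a property reused implicitly elsewhere; your route never establishes anything about $S$ as a whole. Both arguments yield the same bound $\eps_1\normF{A}\normF{B}$ with failure probability $\sum_i\delta_i\le\delta$ (your accounting is in fact slightly tighter than the paper's, which ends at $1-2\delta$).
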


\begin{proof}
Denote by $a_i$ (resp. $b_i$) the columns of $A$ and $B$. By definition we have 
\begin{align*}
&\normF{A^T S^T S B - A^T B}^2 = \sum_{i \in [d], j\in [d']} (\dotProd{Sa_i}{Sb_j} - \dotProd{a_i}{b_j})^2
\end{align*}
Now, by the parallelogram rule \cite{ArriagaV06}, applying the sketching matrix $S$ on vectors $a_i + b_j$ and $a_i - b_j$, we have with probability $1 - \frac{2\delta}{dd'}$
$
|\dotProd{Sa_i}{Sb_j} - \dotProd{a_i}{b_j}| \leq \eps_1 \norm{a_i} \norm{b_j}.
$
So, taking the union bound over all terms of the sum yields the claim with probability $1 - 2 \delta$ since
$$
\normF{A^T S^T S B - A^T B}^2 \leq \sum_{i\in[d], j\in [d']} \eps_1^2 \norm{a_i}^2 \norm{b_j}^2  = \eps_1^2 \sum_{i=1}^{d}  \norm{a_i}^2  \sum_{j=1}^{d'} \norm{b_i}^2   = \eps_1^2  \normF{A}^2 \normF{B}^2.
$$
\end{proof}

Now, our result to get an improving bound for matrix multiplication will follow from this lemma. However, as in the case of subspace embedding, we have to impose some mild assumptions on the input.

\begin{lemma}
\label{lemma:ImprovedmatrixMultiplication}
Let $(A_n)$ and $(B_n)$ be a series of matrices of size $n \times d $ and $n \times d'$ respectively such that
$\frac{\|A_n^{(n_1)}\|_F}{\normF{A_n}} \leq \frac{cn_1^\Delta}{f(n)}$
for any $n_1$ and absolute constants $c,\Delta$. Now, for any constant $\alpha>0$ and some constant $C$, put $m_i = \frac{C}{\eps_i^{2\alpha}} \log(\frac{dd'}{\delta_i})$. Then we have that
$\normF{A_n^T S^T S B_n - A_n^T B_n} \leq \eps \normF{A}\normF{B},$
where $\eps = \bigO{\frac{1}{\log f(n)}}$.
\end{lemma}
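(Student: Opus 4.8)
The plan is to reduce the improving bound for matrix multiplication to the fixed-error bound of Lemma~\ref{lemma:simpleMatrixMultiplication}, in exactly the same way that the improving subspace embedding (Lemma~\ref{lemma:improvedSubspaceEmbedding}) was reduced to its fixed-error counterpart (Corollary~\ref{corollary:simpleSubspaceEmbedding}). First I would split the sketching error block-wise: since $S = \mathrm{diag}(S_1,\ldots,S_l)$ and both $A_n$ and $B_n$ are processed row-by-row in the same blocks, we have $A_n^T S^T S B_n - A_n^T B_n = \sum_{i=1}^l \big( A_i^T S_i^T S_i B_i - A_i^T B_i \big)$, where $A_i, B_i$ are the $i$-th row-blocks of size $n_i \times d$ and $n_i \times d'$. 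By the triangle inequality for the Frobenius norm, $\normF{A_n^T S^T S B_n - A_n^T B_n} \leq \sum_{i=1}^l \normF{A_i^T S_i^T S_i B_i - A_i^T B_i}$.

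Next I would apply Lemma~\ref{lemma:simpleMatrixMultiplication} (or equivalently Theorem~\ref{th:matrixMultiplication}) to each individual block with its own precision parameter $\eps_i$: with $m_i$ rows chosen as $\frac{C}{\eps_i^{2\alpha}}\log(\frac{dd'}{\delta_i})$, and noting that for the relevant range $\eps_i^{2\alpha} \le \eps_i^{2}$ when $\alpha \ge 1$ (or one simply re-reads the constant in the lemma for the regime $\alpha<1$), we get with probability $1-\delta_i$ that $\normF{A_i^T S_i^T S_i B_i - A_i^T B_i} \leq \eps_i \normF{A_i}\normF{B_i}$. A union bound over the (at most $\log n$ many) blocks, using $\sum_i \delta_i \le \delta$, gives that all these events hold simultaneously with probability $1-\delta$. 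Hence $\normF{A_n^T S^T S B_n - A_n^T B_n} \le \sum_{i=1}^l \eps_i \normF{A_i}\normF{B_i}$, and we want to compare this to $\eps \normF{A_n}\normF{B_n}$.

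The main work, as in the subspace case, is to bound the relative quantity
\[
\eps = \frac{\sum_{i=1}^l \eps_i \normF{A_i}\normF{B_i}}{\normF{A_n}\normF{B_n}}.
\]
Here I would use the hypothesis $\frac{\|A_n^{(n_1)}\|_F}{\normF{A_n}} \leq \frac{c n_1^\Delta}{f(n)}$, which controls the Frobenius mass of any prefix of the rows of $A_n$; applying it with $n_1 = n_i = 2^i n_0$ bounds $\normF{A_i}/\normF{A_n}$ (each $A_i$ is contained in the prefix of the first $n_{i+1}$ rows, so $\normF{A_i} \le \|A_n^{(n_{i+1})}\|_F$). For the $B$ factor one uses the trivial bound $\normF{B_i} \le \normF{B_n}$. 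Plugging in $\eps_i = \eps_0/i$ and summing a geometrically-weighted series of the form $\sum_i \frac{1}{i} \cdot \frac{c\, 2^{i\Delta}}{f(n)}$, one feeds the tail into the convergence estimate of Lemma~\ref{lemma:series} from the appendix, exactly the same technical lemma invoked in the proof of Lemma~\ref{lemma:improvedSubspaceEmbedding}. This yields $\eps \to 0$ and, tracking the dependence carefully, $\eps = O(1/\log f(n))$.

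\textbf{Main obstacle.} The delicate point is the same trade-off that appears in Lemma~\ref{lemma:improvedSubspaceEmbedding}: the prefix bound only gives polynomially-growing control $c n_1^\Delta$ on the numerator, while the $\eps_i$ decay only like $1/i = 1/\log n_i$, which is far slower than the $2^{i\Delta}$ growth of the block norms. The extra $f(n)$ in the denominator of the hypothesis is exactly what is needed to beat this mismatch, and the whole point of Lemma~\ref{lemma:series} is to certify that the weighted series still tends to zero and to extract the $O(1/\log f(n))$ rate. I would make sure the statement of Lemma~\ref{lemma:series} is applied in a form that matches $\sum_i (\eps_0/i)\, c\, n_i^\Delta / (f(n)\normF{A_n})$ against $\normF{A_n}$ correctly; everything else (the block decomposition, the triangle inequality, and the union bound on failure probabilities) is routine.
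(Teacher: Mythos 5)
Your route is genuinely different from the paper's. The paper does not sum per-block errors and does not invoke Lemma~\ref{lemma:series} in this proof at all. It splits the blocks into a prefix $\{i<i_0\}$ and a suffix $\{i\ge i_0\}$ with $i_0=\lceil\eps^{-1/\alpha}\rceil$, applies Lemma~\ref{lemma:simpleMatrixMultiplication} once to each \emph{group} (so there are only two terms), bounds the prefix term by $\normF{A_{(0)}}\normF{B_{(0)}}$ using merely $\eps_1^\alpha\le 1$ and the suffix term by $\eps\normF{A_{(1)}}\normF{B_{(1)}}$, and then controls $\normF{A_{(0)}}/\normF{A}$ via the prefix hypothesis, mirroring the two-part split used in Theorem~\ref{th:improvingClustering}. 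Your plan instead transplants the block-by-block argument of Lemma~\ref{lemma:improvedSubspaceEmbedding}. This is a legitimate alternative, and arguably more uniform with the subspace-embedding proof, but it has a step you have glossed over.

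You write that plugging the prefix bound into the sum gives something of the form $\sum_i \frac{1}{i}\cdot\frac{c\,2^{i\Delta}}{f(n)}$ and that ``the extra $f(n)$ in the denominator is exactly what is needed to beat this mismatch.'' By itself it is not: for each fixed $n$ the factor $1/f(n)$ is a constant and the partial sums of $2^{i\Delta}/i$ grow, so the worst term alone is of order $2^{l\Delta}/(l\,f(n))$, which need not be small. What makes Lemma~\ref{lemma:series} work is the \emph{pair} of hypotheses: the polynomial per-term bound $a_i\le c2^{i\Delta}$ controls the head, but the total-mass bound $\sum_i a_i\le f(l)$ is what controls the tail, and the lemma balances the two by choosing $i_0\sim\log f(l)$. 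In the subspace-embedding proof the total-mass bound is free, since $\sum_i\norm{A_ix}^2=\norm{Ax}^2$ is an identity. Here it is not: you must add the Cauchy--Schwarz step over blocks,
\[
\sum_i \normF{A_i}\normF{B_i}\ \le\ \Big(\sum_i\normF{A_i}^2\Big)^{1/2}\Big(\sum_i\normF{B_i}^2\Big)^{1/2}\ =\ \normF{A_n}\normF{B_n}.
\]
With that in place, take $a_i = f(n)\,\normF{A_i}\normF{B_i}/(\normF{A_n}\normF{B_n})$; then $a_i\le c'2^{i\Delta}$ from the prefix hypothesis and $\sum_i a_i\le f(n)$ from the display, and Lemma~\ref{lemma:series} applies as you intended. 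A secondary loose end: with $m_i=C\eps_i^{-2\alpha}\log(dd'/\delta_i)$ each block actually contributes error $\eps_i^\alpha$, not $\eps_i$, so for $\alpha<1$ you need a $1/i^\alpha$-weighted variant of Lemma~\ref{lemma:series} (yielding $O((1/\log f(n))^\alpha)$, which is what the paper's own calculation produces); the paper's two-group split handles all $\alpha>0$ uniformly through its choice of $i_0$.
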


Here $\alpha$ is an additional parameter that allows to modify the memory used depending on the convergence rate we want to achieve. To get a $(1 + \eps)$-approximate solution for regression, we will only need to apply this lemma to get a $ \sqrt \eps$-approximation for matrix multiplication, i.e., we only need $\alpha = 1/2$ removing the square in the embedding complexity. See \cite{LinAlgStream} for details.

\begin{proof}
We fix an $\eps > 0 $ to be determined later. Let $i_0$ be the smallest index such that $(\frac{1}{i_0})^\alpha \leq \eps$, i.e., $i_0 = \ceil{\frac{1}{\eps^{\frac{1}{\alpha}}}}$. By our choice, for $i \geq i_0$ we have $\eps_i^\alpha \leq \eps$. We denote by $S_{(0)}$ (resp. $S_{(1)}$) the sub-matrix of $S$ composed of the blocks with index smaller than $i_0$ (resp. greater or equal than $i_0$). Thus, we can rewrite
\begin{align*}
S = \begin{pmatrix}
S_{(0)} & \\
&	S_{(1)}
\end{pmatrix}\ A = \begin{pmatrix} A_{(0)} \\ A_{(1)} \end{pmatrix}
 \ B = \begin{pmatrix} B_{(0)} \\ B_{(1)} \end{pmatrix},
\end{align*}
where $A_{(0)}$ and $B_{(0)}$ contain $\sum_{i < i_0} n_i$ lines. Then by splitting the terms we have 
\begin{align*}
&\normF{A^T S^T S B - A^T B} \leq \normF{A_{(0)}^T S_{(0)}^T S_{(0)} B_{(0)} - A_{(0)}^T B_{(0)}} + \normF{A_{(1)}^T S_{(1)}^T S_{(1)} B_{(1)} - A_{(1)}^T B_{(1)}} 
\end{align*}

Applying Lemma \ref{lemma:simpleMatrixMultiplication} to each term separately, we have with probability $1 - \delta$ 
\begin{align}
\normF{A^T S^T S B - A^T B} &\leq \eps_1^{\alpha} \normF{A_{(0)}} \normF{B_{(0)}} + \eps \normF{A_{(1)}} \normF{B_{(1)}} \notag\\& \leq \normF{A} \normF{B} \left(\frac{\normF{A_{(0)}}}{\normF{A}} + \eps \right) \label{proof:improveMatrixMultiplication}
\end{align}
since all the blocks in $S_{(1)}$ have at least $\frac{C}{\eps^2}\log(\frac{dd'}{\delta})$ lines.
Now, using the assumption, we have that
\begin{align*}
\frac{\normF{A_{(0)}}}{\normF{A}} &\leq \frac{c(\sum_{i < i_0} n_i)^{\Delta}}{f(n)} \leq \frac{c2^{\Delta i_0}}{f(n)}.
\end{align*}

Using a similar argument as in the proof of Theorem \ref{th:improvingClustering}, we can bound the last expression by $\eps$ where $\eps = (\eps^{1/\alpha})^\alpha = \left(\frac{C'}{\log(f(n))}\right)^\alpha$ for some constant $C'$. Plugging this into inequality (\ref{proof:improveMatrixMultiplication}) yields the claim since
$\normF{A^T S^T S B - A^T B} \leq 2\eps \normF{A} \normF{B}.$
\end{proof}

\paragraph{Back to least squares regression:}
We now have all the necessary tools to prove an improving error bound for regression. In the following let $A\in\REAL^{n\times d}$ and $b\in\REAL^n$ be the input to the regression problem after the first $n$ rows have been read from the stream and let
\begin{align*}
 \tilde x &= \argmin_{x \in \REAL^d} \norm{SAx - Sb}  \\
 x^* &= \argmin_{x \in \REAL^d} \norm{Ax - b}
\end{align*}
be the optimal solution to the sketched problem and to the original problem at that point. We have the following theorem.
\begin{theorem}
\label{th:improveRegression}
Assume that the smallest singular value $\sigma_d$ of $A$ satisfies $\sigma_d^2 \geq f(n)$ for a positive monotonous function $f$ with $f(n) \rightarrow \infty$. If the blocks $S_i$ of the sketching matrix $S$ consist of $m_i \geq \frac{Cd}{\eps_i} \log(\frac{1}{\delta_i})$ rows for some absolute constant $C$, then with probability at least ${1 - \delta}$ it holds that $\norm{A\tilde x - b} \leq (1 + \eps) \norm{Ax^* - b},$
where $\eps = O(\frac{1}{\log f(n)})$.
\end{theorem}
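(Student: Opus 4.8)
### Proof proposal

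The plan is to follow the standard sketch-and-solve analysis for least squares regression as in \cite{ImpLinAlg, LinAlgStream}, but to track the error parameters carefully so that they vanish as $n \to \infty$. Write $A = U\Sigma V^T$ for the singular value decomposition of the $n \times d$ matrix $A$, so that $U$ has orthonormal columns spanning the column space of $A$. The key structural fact used in the classical proof is that two conditions on the sketching matrix $S$ suffice: (i) $S$ is a $(1 \pm \eps_0)$-subspace embedding for the column span of $[A \mid b]$ (equivalently for the $(d+1)$-dimensional space spanned by the columns of $U$ together with the residual direction $b - Ax^*$), and (ii) $S$ approximately preserves the relevant matrix product, namely $\|U^T S^T S (b - Ax^*) - U^T(b - Ax^*)\|_2 \le \sqrt{\eps}\, \|U\|_F \|b - Ax^*\|_2 = \sqrt{\eps d}\,\|b - Ax^*\|_2$ (after rescaling $\eps$ by $d$). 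Given these two properties, a short calculation (which I will only sketch, citing \cite{LinAlgStream}) using the normal equations shows $\norm{A\tilde x - b} \le (1 + O(\eps)) \norm{Ax^* - b}$.

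First I would establish property (i): by Corollary~\ref{corollary:simpleSubspaceEmbedding}, with $m_i \ge \frac{Cd}{\eps_0^2}\log\frac{1}{\delta}$ the block-diagonal matrix $S$ is a $(1\pm\eps_0)$-subspace embedding for $A$; since $m_i \ge \frac{Cd}{\eps_i}\log\frac{1}{\delta_i} \ge \frac{Cd}{\eps_0^2}\log\frac{1}{\delta}$ holds once $i$ is large enough and $\eps_0$ a fixed small constant, and since we may absorb the $(d+1)$-st direction by a constant, this is satisfied. Note here we do \emph{not} need the subspace embedding error to improve — a fixed $\eps_0$ suffices, which is exactly why Corollary~\ref{corollary:simpleSubspaceEmbedding} was isolated. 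Second I would establish property (ii) with an \emph{improving} error: apply Lemma~\ref{lemma:ImprovedmatrixMultiplication} with $\alpha = 1/2$ (so that $m_i = \frac{C}{\eps_i}\log\frac{dd'}{\delta_i}$, matching the stated row counts up to the factor $d$), to the pair of matrices $A$ and the vector $b - Ax^*$. For the hypothesis of that lemma I need a bound of the form $\frac{\|A^{(n_1)}\|_F}{\|A\|_F} \le \frac{c n_1^\Delta}{f(n)}$: this is where the assumption $\sigma_d^2 \ge f(n)$ enters, since $\|A\|_F^2 \ge \sigma_d^2 \ge f(n)$ while $\|A^{(n_1)}\|_F^2 \le c' n_1^{2\Delta + 1}$ because entries are stored with logarithmically many bits. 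The analogous bound for $b - Ax^*$ follows from the same entry-size argument together with $\|A\|_F \ge \sigma_d \to \infty$. This yields that property (ii) holds with $\eps = O(\frac{1}{\log f(n)})$.

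Finally I would combine the two properties via the normal-equations argument: since $\tilde x$ minimizes $\norm{SAx - Sb}$, we have $A^T S^T S(A\tilde x - b) = 0$, i.e.\ $U^T S^T S(A\tilde x - b) = 0$; writing $A\tilde x - b = (A\tilde x - Ax^*) + (Ax^* - b)$ and using that $Ax^* - Ax = Ay$ lies in the column span of $U$, the subspace embedding (i) controls $\|S(A\tilde x - Ax^*)\|_2$ in terms of $\|A\tilde x - Ax^*\|_2$, while the product bound (ii) controls $\langle S(A\tilde x - Ax^*), S(Ax^* - b)\rangle$. Solving the resulting quadratic inequality in $\|A\tilde x - Ax^*\|_2 / \|Ax^* - b\|_2$ gives $\|A\tilde x - Ax^*\|_2 = O(\sqrt{\eps})\|Ax^* - b\|_2$, and then the Pythagorean identity $\norm{A\tilde x - b}^2 = \norm{A\tilde x - Ax^*}^2 + \norm{Ax^* - b}^2$ (valid because $Ax^* - b$ is orthogonal to the column span) yields $\norm{A\tilde x - b} \le (1 + O(\eps))\norm{Ax^* - b}$. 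Rescaling and renaming $\eps$ completes the proof; the union bound over the failure probabilities $\delta_i$ and the constant-probability events from (i) costs only a constant factor, absorbed into $\delta$.

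The main obstacle I anticipate is verifying the hypothesis of Lemma~\ref{lemma:ImprovedmatrixMultiplication} for the residual vector $b - Ax^*$ rather than for a fixed input matrix: unlike $A$, whose rows arrive directly from the stream, the residual depends on the current optimum $x^*$, which changes as $n$ grows, so the ratio $\|(b-Ax^*)^{(n_1)}\|_F / \|b - Ax^*\|_F$ must be bounded uniformly. This requires arguing that the denominator does not shrink — which again follows from the divergence of $\sigma_d$ combined with the bounded bit-length of the entries, but must be done with some care, and is essentially the place where the extra "arguments regarding the column basis given by the SVD" mentioned in the introduction are needed.
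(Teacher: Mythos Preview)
Your overall architecture is right---a constant-error subspace embedding (Corollary~\ref{corollary:simpleSubspaceEmbedding}) plus an improving matrix-multiplication bound (Lemma~\ref{lemma:ImprovedmatrixMultiplication}), combined via the normal-equations argument of \cite{LinAlgStream}. The gap is in \emph{which} matrix you feed to Lemma~\ref{lemma:ImprovedmatrixMultiplication}. You state property~(ii) correctly in terms of $U$, but then propose to apply the lemma to the pair $(A,\, b-Ax^*)$ and to verify its hypothesis for $A$. That mismatch is fatal: applying the lemma to $A$ yields only
\[
\normF{A^T S^T S w - A^T w}\ \le\ \eps\,\normF{A}\,\norm{w},\qquad w=b-Ax^*,
\]
and since $\normF{A}$ grows without bound with $n$, this does not give the bound $\sqrt{\eps d}\,\norm{w}$ you need; tracing it through the normal-equations argument introduces a factor $\normF{A}/\sigma_d$, essentially the condition number of $A$, which is uncontrolled.

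The paper instead applies Lemma~\ref{lemma:ImprovedmatrixMultiplication} to $U$ itself, and the real work is verifying the non-concentration hypothesis for $U$, i.e.\ that $\normF{U^{(n_1)}}/\normF{U}$ is bounded by $c n_1^{\Delta}/\tilde f(n)$ for some diverging $\tilde f$. This is exactly where the SVD and the assumption on $\sigma_d$ enter: from $A v_i = \sigma_i u_i$ one gets
\[
\norm{u_i^{(n_1)}}^2 \;=\; \frac{1}{\sigma_i^2}\,\norm{A^{(n_1)} v_i}^2 \;\le\; \frac{1}{\sigma_d^2}\,\normF{A^{(n_1)}}^2 \;\le\; \frac{c^2 n_1^{2\Delta+1}}{f(n)},
\]
and summing over $i\in[d]$ gives $\normF{U^{(n_1)}}^2 \le d\,c^2 n_1^{2\Delta+1}/f(n)$ while $\normF{U}^2=d$, so the ratio is at most $c\, n_1^{\Delta+1/2}/\sqrt{f(n)}$. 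Note also that Lemma~\ref{lemma:ImprovedmatrixMultiplication} imposes its hypothesis on only \emph{one} of the two matrices, so once it is verified for $U$ nothing further is required of the residual $b-Ax^*$; the obstacle you anticipated about the residual is therefore not the crux.
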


\begin{proof}
Following the proof of Theorem 3.2 from \cite{LinAlgStream}, our claim is basically the consequence of two results. We have a subspace embedding with constant distortion using $S$. This is given by Corollary \ref{corollary:simpleSubspaceEmbedding}. Moreover, we have a result on matrix multiplication with improving precision in Lemma \ref{lemma:ImprovedmatrixMultiplication}. Let $A = U \Sigma V^T$ be the singular value decomposition of the input matrix, then Lemma \ref{lemma:ImprovedmatrixMultiplication} is applied to $U$ as one of the matrices. So it remains to justify that $U$ satisfies the assumptions of this lemma, i.e., that the norm of $U$ is not concentrated on the first $n_1$ rows. 
We denote by $u_i$ and $v_i$ the columns of $U$ and $V$ respectively and by $\sigma_i$ the $i^{th}$ singular value of $A$, which also corresponds to the $i^{th}$ diagonal coefficient of $\Sigma$. We have that $A v_i = \sigma_i u_i$. So considering only the $n_1$ first rows, we have \begin{align}
{u_i^{(n_1)}}^2 = \frac{1}{\sigma_i^2} \norm{A^{(n_1)} v_i}^2 \leq \frac{1}{\sigma_d^2} \normF{A^{(n_1)}}^2\leq c^2 n_1^{2 \Delta + 1}/\sigma_d^2. \label{proof:ineqhypmult}
\end{align}
The last inequality holds since we can assume that the entries of the matrix can be stored by a logarithmic number of bits. So we get $\norm{u_i^{(n_1)}}^2 \leq \frac{c^2n_1^{2 \Delta + 1 }}{f(n)}$ and we can rewrite $$\normF{U^{(n_1)}}^2 = \sum_{i=1}^d \norm{u_i^{(n_1)}} \leq \frac{d c^2 n_1^{2 \Delta + 1}}{f(n)}$$
Now, since $\normF{U}^2=d$, we have the required inequality and therefore all assumptions are satisfied to apply Lemma \ref{lemma:ImprovedmatrixMultiplication}.
\end{proof}

We can get the following corollary for unique updates. Here we do not enforce a row-by-row order of the streaming data, but only assume that we have at most one update per entry. Our assumptions on the smallest singular vector is still needed for the proof.

\begin{corollary}
There exists an asymptotically exact streaming algorithm to compute a $(1 + \eps)$-approximate solution to the least squares regression problem under unique updates assuming that the smallest singular value of the input matrix diverges.
\end{corollary}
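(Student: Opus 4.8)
The plan is to re-run the proof of Theorem~\ref{th:improveRegression} essentially verbatim; the only ingredient sensitive to the order of arrival is the block-diagonal sketch $S=\mathrm{diag}(S_1,\dots,S_l)$, so that is what I would reorganize first. Under unique updates the data arrives as triples $(j,k,v)$ with each position $(j,k)$ appearing at most once, so I would stop cutting the row-index set into the blocks according to \emph{when} rows are read and instead cut it according to the row index itself, e.g.\ assigning row $j$ (and coordinate $b_j$) to block $i=\ceil{\log_2(j/n_0)}$, so that block $i$ is a fixed, a~priori known set of $\Theta(2^i n_0)$ rows. The block of a row is then computable from $j$ alone, so no per-row bookkeeping is needed, and since $SA$ and $Sb$ are linear in the updates they can still be maintained incrementally: on receiving $(j,k,v)$, add $v$ times the $j$-th column of $S_{i_j}$ to column $k$ of the $i_j$-th block of the sketch. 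As in Theorem~\ref{th:F2}, the $S_i$ are used only with limited independence and stored implicitly, keeping the space $\operatorname{polylog}(n)$.

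Next I would verify that the two facts used in the proof of Theorem~\ref{th:improveRegression} survive this re-indexing. First, $S$ is still a $(1\pm\eps_0)$-subspace embedding for $A$: the proof of Corollary~\ref{corollary:simpleSubspaceEmbedding} only uses that the blocks partition the rows of $A$ and that each $S_i$ has enough rows, both still true. Second, the improving-precision matrix-multiplication bound (Lemma~\ref{lemma:ImprovedmatrixMultiplication} applied to $U$ from the SVD $A=U\Sigma V^T$) needs, writing $U_{(0)}$ for the restriction of $U$ to the rows in the first $i_0$ blocks, that $\normF{U_{(0)}}/\normF{U}=\normF{U_{(0)}}/\sqrt d$ is small.

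The key point is that once the target $\eps$ is fixed, $i_0$ is a fixed constant, hence so is the set $R_{(0)}$ of rows in the first $i_0$ blocks: a fixed finite set with $|R_{(0)}|\le 2^{i_0}n_0$, independent of $n$. Since $U=AV\Sigma^{-1}$, row $j$ of $U$ has norm at most $\|A_{j,\cdot}\|/\sigma_d$, so $\normF{U_{(0)}}^2\le\normF{A_{(0)}}^2/\sigma_d^2$, where $A_{(0)}$ is the submatrix of $A$ on $R_{(0)}$. As $R_{(0)}$ is a fixed finite set of rows with bounded entries, $\normF{A_{(0)}}\le K_\eps$ for a constant $K_\eps$ independent of $n$ (this is where, in the row-by-row proof, the bound $\normF{A^{(n_1)}}^2\le c^2 n_1^{2\Delta+1}$ entered). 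Since $\sigma_d\to\infty$, for all $n$ large enough $\normF{U_{(0)}}/\sqrt d\le K_\eps/(\sqrt d\,\sigma_d)\le\eps$, which is exactly what is needed for the matrix-multiplication step of Theorem~\ref{th:improveRegression} (the $\alpha=1/2$ case of Lemma~\ref{lemma:ImprovedmatrixMultiplication}). Plugging this and the constant-distortion embedding into the Sarl\'os-style argument of \cite{LinAlgStream}, exactly as in the proof of Theorem~\ref{th:improveRegression}, gives $\norm{A\tilde x-b}\le(1+\eps')\norm{Ax^*-b}$ with $\eps'=O(\operatorname{poly}(\eps))$. As this holds for every fixed $\eps>0$ once $n$ is large enough, $\eps'\to 0$ as $n\to\infty$, which is asymptotic exactness.

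I expect the main obstacle to be exactly this re-choice of the block partition: it must be chosen so that (i) the block of a row is cheap to compute online (ruling out ``block $=$ phase of first appearance'', which would cost linear space for arbitrary row indices), (ii) every row of $A$ lies in a single block (otherwise $SA$ embeds a row-split surrogate, not $A$ itself), and (iii) the early part $A_{(0)}$ stays a fixed-size object whose norm does not grow with $n$, so that bare divergence of $\sigma_d$, with no rate, already drives the error to zero. The remainder is the same bookkeeping as in the row-by-row case.
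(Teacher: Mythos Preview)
Your proposal is correct and follows essentially the same route as the paper: re-index the blocks by row index rather than arrival time, store the $S_i$ implicitly via limited independence, and rerun the Theorem~\ref{th:improveRegression} argument using Corollary~\ref{corollary:simpleSubspaceEmbedding} together with the improving matrix-multiplication bound. The only cosmetic difference is that the paper packages your observation ``the fixed early block $A_{(0)}$ eventually stabilizes under unique updates'' through an auxiliary function $g(i)$ (the last update time touching the first $i$ rows) and states a variant of Lemma~\ref{lemma:ImprovedmatrixMultiplication} with hypothesis $\|A_n^{(n_1)}\|_F/\|A_n\|_F\le g(n_1)/f(n)$, which is equivalent to your direct bound $\normF{A_{(0)}}\le K_\eps$.
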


The proof is almost identical to the previous theorem, so we will only describe the differences. Note that we do not get an explicit bound on the error $\eps$, but only prove that it goes to zero. The input $A$ is an $n \times d$ matrix and the number of columns $d$ is fixed at the beginning of the algorithm. Given an index $i$, we denote by $g(i)$ the largest $j$, such that the $j^{th}$ update modifies a coordinate above the $i^{th}$ row. Since there can only be a finite number of such updates, $g(i)$ is finite.
The lemma on the improving matrix multiplication is modified in the following way.
\begin{lemma}
Assume that we have a series of matrices $(A_n)$ and $(B_n)$ that satisfy
$\frac{\|A_n^{(n_1)}\|}{\normF{A_n}} \leq \frac{g(n_1)}{f(n)}$.
Then we have $\normF{A^T S^T S B - A^T B } \leq \eps \normF{A} \normF{B}$
with $\eps \rightarrow 0$.
\end{lemma}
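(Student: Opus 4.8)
The plan is to replay the proof of Lemma~\ref{lemma:ImprovedmatrixMultiplication} almost verbatim; the only genuinely new point is the step that previously matched the low-index blocks of the sketch with a row-prefix of the input matrix, which must now be carried out without a row-by-row arrival order, using the function $g$.

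First I would fix an arbitrary $\eps>0$ and pick the threshold index $i_0$ exactly as in the row-ordered proof, so that every block $S_i$ with $i\ge i_0$ has error parameter at most $\eps$ (the $m_i$ being taken large enough for this). Set $N_0=\sum_{i<i_0}n_i$, the number of stream updates consumed by the first $i_0-1$ blocks, and note the crucial point that $N_0$ depends only on $\eps$ (and $n_0$), not on $n$. Because updates are unique, the parts of $A$ and $B$ read in distinct blocks occupy disjoint coordinates, so --- exactly as in the proof of Lemma~\ref{lemma:ImprovedmatrixMultiplication} --- the block-diagonal sketch splits as $S=S_{(0)}\oplus S_{(1)}$ with corresponding decompositions $A=A_{(0)}+A_{(1)}$, $B=B_{(0)}+B_{(1)}$, and
\[
A^TS^TSB-A^TB=\bigl(A_{(0)}^TS_{(0)}^TS_{(0)}B_{(0)}-A_{(0)}^TB_{(0)}\bigr)+\bigl(A_{(1)}^TS_{(1)}^TS_{(1)}B_{(1)}-A_{(1)}^TB_{(1)}\bigr).
\]
Taking the usual union bound over all blocks with $\delta_i=\delta/(ci^2)$ and applying Lemma~\ref{lemma:simpleMatrixMultiplication} to each summand (crudely for the first, whose sketch has error bounded by an absolute constant, and with parameter $\eps$ for the second) gives $\normF{A^TS^TSB-A^TB}\le\bigl(\normF{A_{(0)}}/\normF{A}+\eps\bigr)\normF{A}\normF{B}$.

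The new ingredient is the bound on $\normF{A_{(0)}}/\normF{A}$. Since $N_0$ is a fixed finite number, the first $N_0$ updates of the stream touch a fixed finite set of rows; let $n_1$ be the largest row index among them. Then $A_{(0)}$ is supported inside the first $n_1$ rows of $A_n$, so $\normF{A_{(0)}}\le\normF{A_n^{(n_1)}}$, and the hypothesis of the lemma gives $\normF{A_n^{(n_1)}}/\normF{A_n}\le g(n_1)/f(n)$. Here $n_1$, and hence $g(n_1)$, are fixed once $\eps$ is fixed (they depend on the arrival order of the stream, not on $n$), whereas $f(n)\to\infty$; thus $\normF{A_{(0)}}/\normF{A}\to 0$ and in particular is $\le\eps$ for all $n$ large enough, so $\normF{A^TS^TSB-A^TB}\le 2\eps\,\normF{A}\normF{B}$. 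As $\eps>0$ was arbitrary, the ratio tends to $0$, which is the claim.

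The main obstacle --- and the reason the statement only asserts $\eps\to 0$ with no explicit rate, in contrast to Lemma~\ref{lemma:ImprovedmatrixMultiplication} --- is precisely this last step: without a row-by-row order the first $i_0-1$ blocks cannot be identified with a row-prefix whose length is a clean function of $i_0$, so the row index $n_1$ that governs the ``bad'' mass, and therefore the threshold on $n$ beyond which the bound takes effect, depend on the arrival order of the stream through $g$. The finiteness of $g$ is exactly what saves the argument: although over adversarial streams the bad prefix may reach arbitrarily large row indices, it is nonetheless a fixed object once $\eps$ is fixed, so the divergence of $f(n)$ can absorb it. Beyond this, the remaining work is routine bookkeeping inherited from the row-ordered proof: checking that uniqueness of updates makes the two-block decomposition exact, that the crude bound on the first summand is legitimate, and that the per-block failure probabilities sum to at most $\delta$.
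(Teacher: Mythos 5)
Your skeleton is the right one and is what the paper intends (the paper omits this proof, saying only that it mirrors Lemma~\ref{lemma:ImprovedmatrixMultiplication}): fix $\eps$, choose $i_0$, split the error into the contributions of the low-index and high-index blocks, apply Lemma~\ref{lemma:simpleMatrixMultiplication} to each with a union bound over the $\delta_i$, and kill the low-index contribution using the hypothesis together with $f(n)\to\infty$. However, the one step that is genuinely new --- identifying $A_{(0)}$ and invoking the hypothesis on it --- is where your argument breaks, because you partition the stream by \emph{update time} while the sketch is partitioned by \emph{rows}. The matrix $S=\mathrm{diag}(S_1,\ldots,S_l)$ acts on row-blocks of $A$ of sizes $n_i$ in the unique-update model exactly as in the row-ordered one; this is precisely why the paper remarks that the $S_i$ must now be stored (a late update to an early row must still be sketched with the early block's matrix), and it is the only reading under which Corollary~\ref{corollary:simpleSubspaceEmbedding} ``remains true without modification''. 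Hence $A_{(0)}$ is the row-prefix $A^{(N_0)}$ with $N_0=\sum_{i<i_0}n_i$, the decomposition $A=\begin{pmatrix}A_{(0)}\\ A_{(1)}\end{pmatrix}$ is a vertical concatenation, and the two-term identity for $A^TS^TSB-A^TB$ is exact. You instead take $A_{(0)}$ to be the matrix formed by the first $N_0$ \emph{updates} (``the number of stream updates consumed by the first $i_0-1$ blocks'', and $A=A_{(0)}+A_{(1)}$ as a sum of disjointly supported matrices). Under that reading the identity fails: $SA_{(0)}\neq S_{(0)}A_{(0)}$ because $A_{(0)}$ may have entries in rows belonging to high-index blocks, and the cross terms $A_{(0)}^TB_{(1)}+A_{(1)}^TB_{(0)}$ do not vanish --- disjointness of the \emph{entry} supports says nothing about inner products of columns. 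Your subsequent bound is also unsound on its own terms: the largest row index $n_1$ touched by the first $N_0$ updates need not exceed $N_0$, and rows with index at most $n_1$ can be filled by updates arriving after time $N_0$, so neither version of $A_{(0)}$ is supported in the first $n_1$ rows.

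The repair is shorter than your detour: keep $A_{(0)}=A^{(N_0)}$ and apply the hypothesis at $n_1=N_0$, giving $\normF{A_{(0)}}/\normF{A}\leq g(N_0)/f(n)$. Since $N_0$ depends only on $\eps$ (and $n_0$) and $g(N_0)$ is finite --- this is where uniqueness of updates and the definition of $g$ actually enter: all updates to rows $1,\ldots,N_0$ occur among the first $g(N_0)$ of them, which is also how the paper verifies the hypothesis for $U$ via $\normF{A^{(n_1)}}^2\leq n_1\,g(n_1)^{2\Delta+1}$ --- the ratio tends to $0$ as $n\to\infty$, and the rest of your argument goes through verbatim. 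This also corrects your closing diagnosis: the low-index blocks \emph{are} a row-prefix of length $N_0$ determined by $i_0$; the reason no explicit rate is stated is that $g(N_0)$ is stream-dependent and cannot be expressed through $f(n)$ alone.
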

The proof of this lemma is similar to the proof of Lemma \ref{lemma:ImprovedmatrixMultiplication} and is therefore omitted. It remains to prove that the assumptions of this lemma are still satisfied. Again, this can be done similarly to our previous proof using $\|A^{(n_1)}\|_F^2 \leq n_1 g(n_1)^{2 \Delta + 1}$ in inequality (\ref{proof:ineqhypmult}). The subsequent calculations are still correct up to immediate modifications and thus we have the required bound for the assumptions of the modified matrix multiplication lemma to hold. Corollary \ref{corollary:simpleSubspaceEmbedding} remains true without any modification.

Note, that since we do not get the entries of the matrix row-by-row any more, we now have to store the matrices $S_i$. It was shown in \cite{LinAlgStream} that the entries in the rows of the sketching matrix need only limited independence and so, these matrices can be stored implicitly using only polylogarithmic space.

\subsection{Lower bounds on regression}

We also have negative results concerning regression. These results justify the assumptions that the input is given row-by-row and that the smallest singular value diverges. We prove the following theorem.
\begin{theorem}
\label{lemma:negativeRegression}

We have the following series of results for the regression problem:
\begin{itemize}
\item Any asymptotically exact streaming algorithm for regression working under turnstile updates uses $\Omega(nd)$ memory.
\item Any asymptotically exact streaming algorithm for regression working under unique turnstile updates uses $\Omega(nd)$ memory. Here, we are allowed to modify one coordinate of the input matrix only once.
\item Any asymptotically exact streaming algorithm for regression working under turnstile updates, with the additional assumption that the smallest singular value diverges, uses $\Omega(nd)$ memory.
\end{itemize}
\end{theorem}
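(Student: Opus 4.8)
The plan is to establish all three lower bounds by reduction from the indexing problem $IND$, mirroring the strategy used in Lemma~\ref{lower:exact} and Theorem~\ref{lower:kcenter} but now exploiting the $d$-dimensionality of the regression instance to obtain an $\Omega(nd)$ rather than $\Omega(n)$ bound. First I would set up the basic encoding: Alice receives a string $a\in\{0,1\}^{nd}$, which she views as the entries of a matrix $A$ (or, more conveniently, as a rectangular block of entries that she writes into $A$ via turnstile updates), and Bob receives an index $(r,c)\in[n]\times[d]$; Bob must recover the bit $a_{(r,c)}$. The key design principle, exactly as in the earlier proofs, is that Bob should be able to append further updates so that the optimal regression solution is forced to reveal the single coordinate $a_{(r,c)}$ while all other coordinates of $a$ are rendered irrelevant to the optimum. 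A natural construction: let the target vector $b$ and the remaining structure of $A$ be such that the least-squares solution $x^*$ has a component whose value is $0$ if $a_{(r,c)}=0$ and bounded away from $0$ (by a fixed constant independent of $n$) if $a_{(r,c)}=1$, so that a $(1+\eps)$-approximate solution with $\eps$ small enough distinguishes the two cases.

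Next I would handle the three items in turn. For the first item (general turnstile) the construction is the least constrained, so Bob may freely add rows, delete rows, or overwrite entries; I would have Bob zero out or dominate all rows except those interacting with coordinate $(r,c)$, then pad with a large number of ``neutral'' updates (rows contributing nothing to the optimum, e.g. all-zero rows or rows consistent with the current optimum) so that, because the algorithm is asymptotically exact, after enough such updates its reported solution is within $(1+\eps)$ for $\eps$ below the separation threshold $\delta/2$. Since the message Alice sends has size $s(nd)$ plus $O(\log(nd))$ bits for bookkeeping and $R_{1/3}(IND)=\Omega(nd)$ by Theorem~\ref{th:indexing}, we conclude $s(nd)=\Omega(nd)$. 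For the second item (unique turnstile, each entry modified once), I must redesign so that Alice writes each of the $nd$ entries exactly once and Bob's subsequent updates also touch fresh coordinates only; this likely forces the ``neutral padding'' to be new all-zero rows appended below (each a virgin coordinate), which keeps the optimum unchanged while still driving $\eps\to 0$ — the role of appended zero rows is precisely that they are free to add and do not perturb $x^*$ or the residual. For the third item, I need the extra property that the instance family has smallest singular value $\sigma_d\to\infty$; I would scale or augment the construction — e.g. add a fixed multiple of a well-conditioned block, or scale the padding rows so their contribution to $A^TA$ grows — in a way that boosts $\sigma_d$ without destroying the separation, again controlled by constants so that for $n$ large the approximation error drops below the threshold.

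The main obstacle I expect is the third item: one must simultaneously (i) keep the two cases $a_{(r,c)}\in\{0,1\}$ separated by a constant gap in the optimal objective ratio, (ii) allow an unbounded sequence of ``harmless'' updates that make the asymptotically exact algorithm's error tend to zero, and (iii) force $\sigma_d\to\infty$. Requirements (ii) and (iii) are in tension, because the natural padding (zero rows) leaves $\sigma_d$ unchanged, while padding that increases $\sigma_d$ risks changing $x^*$ or shrinking the relative gap. The resolution I would aim for is to have Bob add padding rows of the form $\lambda_t e_j^T$ with target entry $\lambda_t \xi_j$ chosen to be exactly consistent with the current optimum (so the residual and $x^*$ are untouched) while $\sum_t \lambda_t^2\to\infty$ drives every singular value up; one checks that the ``signal'' gap, being a fixed additive constant in the squared residual, still yields a multiplicative gap that — relative to the now-growing total — shrinks, but only polynomially slowly, so there is still a regime where the asymptotically-exact guarantee $\eps=o(1)$ forces $\eps$ below the (also shrinking but at a controlled rate) threshold. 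Verifying this balance is the delicate calculation; the rest is bookkeeping analogous to Theorems~\ref{lem:fzero} and \ref{lower:kcenter}.
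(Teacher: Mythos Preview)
Your proposal builds the reduction from scratch, encoding $nd$ bits directly into $A$ and then padding. The paper takes a much shorter route: it invokes the existing Clarkson--Woodruff lower bound (Theorem~3.14 of \cite{LinAlgStream}), which already gives $\Omega(d^2/\eps)$ space for $(1+\eps)$-approximate regression with any fixed $\eps\ge d/(36n)$, and then shows how Bob can drive $\eps$ below any threshold. Concretely, the paper embeds the CW instance $(A,b)$ into
\[
A'=\begin{pmatrix}A&0\\0&\alpha\end{pmatrix},\qquad b'=\begin{pmatrix}b\\0\end{pmatrix},
\]
so that modifying the scalar $\alpha$ leaves the optimum and its value unchanged; Bob repeatedly updates $\alpha$, the asymptotically exact guarantee forces $\eps\to 0$, and setting $\eps_0=d/(36n)$ in the $\Omega(d^2/\eps_0)$ bound yields $\Omega(nd)$. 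Item~2 replaces $\alpha$ by a column vector so each update hits a fresh coordinate. This reuse of \cite{LinAlgStream} is what makes the argument short; your direct encoding is plausible for items~1 and~2 but demands that you reconstruct the bit-recovery machinery (in particular, you gloss over why a $(1+\eps)$-approximate \emph{objective value} lets Bob read off a single coordinate of $x^*$).

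For item~3 your plan has a genuine gap. You propose that Bob add padding rows $\lambda_t e_j^T$ with target $\lambda_t\xi_j$ ``exactly consistent with the current optimum'' so as to boost $\sigma_d$ without perturbing $x^*$. But $\xi_j=x^*_j$ depends on Alice's hidden bit --- that is precisely the quantity Bob is trying to learn --- so Bob cannot choose such padding. If instead he pads only the coordinates of $x^*$ he does know, the unpadded direction keeps $\sigma_d$ bounded; if he pads that direction with a bit-independent value, the padding contributes differently to the residual in the two cases and can swamp the signal. The paper avoids this entirely by observing that in the CW construction $A$ is a stack of upper-triangular blocks whose diagonal entries are a parameter $P$ \emph{set by Bob}; repeatedly increasing $P$ (a legal turnstile update) simultaneously drives $\sigma_d\to\infty$ and preserves the bit-recovery property, with no need to know $x^*$.
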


In particular, the third point shows that assuming only that the smallest singular value diverges is not enough. So, the other assumption, that the input is given in row-wise order is necessary.

\begin{proof}
We adapt the proof of Theorem 3.14 from \cite{LinAlgStream}. In their proof, they perform a reduction from the indexing problem with strings of size $\frac{d^2}{\eps}$ for $\eps \geq \frac{d}{36n}$. The protocol is the following. Alice feeds to the algorithm a matrix $A$ built from the input string and sends the content of the memory to Bob. Then, Bob feeds the algorithm with some modifications on $A$ and builds a vector $b$.
He retrieves the approximate optimum to the regression problem $\min_{x \in \REAL ^d}\norm{Ax-b}$ and with probability larger than $2/3$, he can guess from this approximate optimum the required bit from Alice's input string.

Now, to prove our result, we slightly modify this protocol. We denote by $A'$ the matrix and by $b'$ the vectors defined as 
$$
A' = \begin{pmatrix}
A & 0 \\ 
0 & \alpha
\end{pmatrix} 
, \ b' = \begin{pmatrix}
b \\ 0
\end{pmatrix}
$$
where $\alpha$ is just a scalar and $A$ and $b$ are given by the protocol from \cite{LinAlgStream}. Remark that the value of $\alpha$ does not modify the optimal solution, nor its value. So, for a given $\eps_0$, Alice and Bob can apply the protocol normally and then Bob simply modifies $\alpha$ repeatedly. Since we have an improving algorithm, after some time Bob will be able to get a $(1 + \eps)$-estimate of the optimal solution, where $\eps < \eps_0$. Using this optimum, Bob can retrieve the required entry from $A$. Thus, it follows that the memory used by the algorithm is at least $\Omega(d^2/\eps_0)$. Now, this lower bound holds for any $\eps_0 \geq \frac{d}{36n}$. So in particular it holds for $\eps_0$ equal to this value. This yields a lower bound of $\Omega(nd)$. The second point is obtained by just replacing $\alpha$ by a column vector. In particular, this construction proves that even under the assumption that the norm of $A$ goes to infinity, we cannot get a sub-linear improving algorithm. However, in the improving algorithm we described earlier, we made the assumption that the smallest singular value of $A$ goes to infinity. With the above constructions, this condition is not met.

To prove the third claim, we need to look a little more into the details of the protocol from \cite{LinAlgStream}. In the protocol, we have the matrix $
A = [
A_1^T,\ A_2^T,\ \cdots,\ A_k^T
]^T
$
where $A_i$ is an upper triangular matrix. Alice fills the entries of the matrix above the diagonal with $\pm 1$ according to her input string. Then Bob fills the diagonal entries with either $0$ or $P$, where $P$ is only required to be a large enough value. In particular, the diagonal entries of the matrix that contain the bit that Bob wants to retrieve are all set to $P$. Thus, updating the matrix by simply increasing the number $P$ repeatedly, the smallest singular value of $A$ goes to infinity and we would get an estimate with an error decreasing to zero. The estimated solution would still satisfy the properties needed to retrieve the bit $x_i$. So using the same argument as before, the algorithm uses $\Omega(nd)$ memory.
\end{proof}

\section{Conclusion}
\label{conclusion}
In this paper, we introduced the notion of \emph{asymptotically exact streaming algorithms}. These have an approximation ratio that tends to one as the length of the stream goes to infinity and are thus optimal in the limit. We have considered several problems from the streaming literature in this setting. Interestingly, estimating the frequency moments works in the case of $F_2$ without making additional assumptions, whereas $F_0$ does not allow for algorithms in our model. This is different in the ordinary streaming model, where both problems have $(1\pm\eps)$-approximations. However, for clustering and regression, we had to make some assumptions on the input stream to have a decreasing error bound. These were imposed to ensure that the value of a solution does not depend too much on a small number of items and were shown to be necessary. In contrast to our positive results concerning $k$-means and $k$-median clustering, there is no asymptotically exact streaming algorithm for $k$-center. It would be interesting to have similar algorithms also for other base problems like counting frequent items in a data stream and further explore the possibilities and limitations of our model. Another possible direction for future work is to extend our model to semi-streaming and graph problems.

\bibliography{bib}

\begin{thebibliography}{10}

\bibitem{RandomProj}
D.~Achlioptas.
\newblock Database-friendly random projections: Johnson-lindenstrauss with
  binary coins.
\newblock {\em J. Comput. Syst. Sci.}, 66(4):671--687, 2003.

\bibitem{ApproxFreqMom}
N.~Alon, Y.~Matias, and M.~Szegedy.
\newblock The space complexity of approximating the frequency moments.
\newblock {\em J. Comput. Syst. Sci.}, 58(1):137--147, 1999.

\bibitem{ArriagaV06}
R.~I. Arriaga and S.~Vempala.
\newblock An algorithmic theory of learning: Robust concepts and random
  projection.
\newblock {\em Machine Learning}, 63(2):161--182, 2006.

\bibitem{FirstCoreset}
M.~B\={a}doiu, S.~Har-Peled, and P.~Indyk.
\newblock Approximate clustering via core-sets.
\newblock In {\em Proceedings of the Thiry-fourth Annual ACM Symposium on
  Theory of Computing}, STOC '02, pages 250--257, New York, NY, USA, 2002. ACM.

\bibitem{Bajaj}
C.~Bajaj.
\newblock Proving geometric algorithm non-solvability: An application of
  factoring polynomials.
\newblock {\em Journal of Symbolic Computation}, 2(1):99 -- 102, 1986.

\bibitem{Bar-YossefJKST02}
Z.~Bar-Yossef, T.~S. Jayram, R.~Kumar, D.~Sivakumar, and L.~Trevisan.
\newblock Counting distinct elements in a data stream.
\newblock In {\em Randomization and Approximation Techniques, 6th International
  Workshop, RANDOM 2002}, pages 1--10, 2002.

\bibitem{Scheduling}
D.~Bertsimas and D.~Gamarnik.
\newblock Asymptotically optimal algorithm for job shop scheduling and packet
  routing.
\newblock {\em J. Algorithms}, pages 296--318, 1999.

\bibitem{MichaelJordan2013}
V.~Chandrasekaran and M.~I. Jordan.
\newblock {Computational and statistical tradeoffs via convex relaxation.}
\newblock {\em Proceedings of the National Academy of Sciences of the United
  States of America}, 110(13):E1181--E1190, 2013.

\bibitem{PolyCoreset}
K.~Chen.
\newblock On coresets for $k$-median and $k$-means clustering in metric and
  euclidean spaces and their applications.
\newblock {\em SIAM J. Comput.}, 39(3):923--947, Aug. 2009.

\bibitem{LinAlgStream}
K.~L. Clarkson and D.~P. Woodruff.
\newblock Numerical linear algebra in the streaming model.
\newblock In M.~Mitzenmacher, editor, {\em STOC}, pages 205--214. ACM, 2009.

\bibitem{Feldman}
D.~Feldman and M.~Langberg.
\newblock A unified framework for approximating and clustering data.
\newblock In L.~Fortnow and S.~P. Vadhan, editors, {\em STOC}, pages 569--578.
  ACM, 2011.

\bibitem{oneMedian}
D.~Feldman, A.~Munteanu, and C.~Sohler.
\newblock Smallest enclosing ball for probabilistic data.
\newblock In S.-W. Cheng and O.~Devillers, editors, {\em Symposium on
  Computational Geometry}, page 214. ACM, 2014.

\bibitem{TinyData}
D.~Feldman, M.~Schmidt, and C.~Sohler.
\newblock Turning big data into tiny data: Constant-size coresets for {\it
  k}-means, pca and projective clustering.
\newblock In S.~Khanna, editor, {\em SODA}, pages 1434--1453. SIAM, 2013.

\bibitem{BinPacking}
C.~Filippi and A.~Agnetis.
\newblock An asymptotically exact algorithm for the high-multiplicity bin
  packing problem.
\newblock {\em Math. Program.}, 104(1):21--37, Sept. 2005.

\bibitem{DynamicCoreset}
G.~Frahling and C.~Sohler.
\newblock Coresets in dynamic geometric data streams.
\newblock In H.~N. Gabow and R.~Fagin, editors, {\em STOC}, pages 209--217.
  ACM, 2005.

\bibitem{ImpFreqLowBound}
S.~Ganguly.
\newblock A lower bound for estimating high moments of a data stream.
\newblock {\em CoRR}, abs/1201.0253, 2012.

\bibitem{ClusteringMotion}
S.~Har-Peled.
\newblock Clustering motion.
\newblock {\em Discrete {\&} Computational Geometry}, 31(4):545--565, 2004.

\bibitem{SmallerCoresets}
S.~Har-Peled and A.~Kushal.
\newblock Smaller coresets for k-median and k-means clustering.
\newblock {\em Discrete {\&} Computational Geometry}, 37(1):3--19, 2007.

\bibitem{HPClustering}
S.~Har-Peled and S.~Mazumdar.
\newblock On coresets for k-means and k-median clustering.
\newblock In L.~Babai, editor, {\em STOC}, pages 291--300. ACM, 2004.

\bibitem{ExactKMeans}
M.~Inaba, N.~Katoh, and H.~Imai.
\newblock Applications of weighted voronoi diagrams and randomization to
  variance-based k-clustering: (extended abstract).
\newblock In {\em Proceedings of the Tenth Annual Symposium on Computational
  Geometry}, SoCG '94, pages 332--339, New York, NY, USA, 1994. ACM.

\bibitem{JLT}
W.~B. Johnson and J.~Lindenstrauss.
\newblock Extensions of lipschitz mappings into a hilbert space.
\newblock {\em Conference in Modern Analysis and Probability}, pages 189--206,
  1984.

\bibitem{ImproveCoreset1}
M.~Langberg and L.~J. Schulman.
\newblock Universal epsilon-approximators for integrals.
\newblock In M.~Charikar, editor, {\em SODA}, pages 598--607. SIAM, 2010.

\bibitem{CommComplexity}
P.~B. Miltersen, N.~Nisan, S.~Safra, and A.~Wigderson.
\newblock On data structures and asymmetric communication complexity.
\newblock {\em J. Comput. Syst. Sci.}, 57(1):37--49, 1998.

\bibitem{Muthukrishnan05}
S.~Muthukrishnan.
\newblock Data streams: Algorithms and applications.
\newblock {\em Foundations and Trends in Theoretical Computer Science}, 1(2),
  2005.

\bibitem{NelsonN14}
J.~Nelson and H.~L. Nguy{\^e}n.
\newblock Lower bounds for oblivious subspace embeddings.
\newblock In {\em Automata, Languages, and Programming - 41st International
  Colloquium, ICALP 2014, Proceedings, Part I}, pages 883--894, 2014.

\bibitem{ImpLinAlg}
T.~Sarl{\'o}s.
\newblock Improved approximation algorithms for large matrices via random
  projections.
\newblock In {\em FOCS}, pages 143--152. IEEE Computer Society, 2006.

\bibitem{Disjointness}
G.~Schnitger and B.~Kalyanasundaram.
\newblock The probabilistic communication complexity of set intersection.
\newblock In {\em Structure in Complexity Theory Conference}. IEEE Computer
  Society, 1987.

\bibitem{TSP}
V.~Shenmaier.
\newblock Asymptotically optimal algorithms for geometric max {TSP} and max
  m-{PSP}.
\newblock {\em Discrete Applied Mathematics}, 163, Part 2(0):214 -- 219, 2014.
\newblock Optimal Discrete Structures and Algorithms {ODSA} 2010.

\bibitem{PDP}
K.~Treleaven, M.~Pavone, and E.~Frazzoli.
\newblock Asymptotically optimal algorithms for one-to-one pickup and delivery
  problems with applications to transportation systems.
\newblock {\em IEEE Trans. Automat. Contr.}, 58(9):2261--2276, 2013.

\end{thebibliography}
\bibliographystyle{abbrv}

\section{Appendix}
\label{append}

\begin{lemma}
\label{lemma:series}
Let $(a_i)$ be a series of positive terms and $\eps_l = \frac{\sum_{i=1}^l \frac{a_i}{i}}{f(l)}$ for a positive function $f$. We assume that $f(l) \rightarrow \infty$, $f(l) \geq \sum_{i=1}^l a_i$ and $a_i \leq c2^{i \Delta}$ for some absolute constants $c,\Delta\geq 1$. Then we have that $\eps_l \xrightarrow{l \rightarrow \infty} 0$ and, more precisely, $\eps_l = \bigO{\frac{1}{\log(f(l))}}$.
\end{lemma}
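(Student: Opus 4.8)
The plan is to split the sum $\sum_{i=1}^l \frac{a_i}{i}$ at an appropriate cutoff index $i_0$ and bound the two resulting pieces against $f(l)$. Concretely, fix a target error bound $\eps$ and choose $i_0 = \ceil{1/\eps}$, so that for every $i \geq i_0$ we have $\frac{1}{i} \leq \eps$. Then write
\[
\sum_{i=1}^l \frac{a_i}{i} = \sum_{i < i_0} \frac{a_i}{i} + \sum_{i=i_0}^l \frac{a_i}{i} \leq \sum_{i < i_0} a_i + \eps \sum_{i=i_0}^l a_i \leq \sum_{i<i_0} a_i + \eps\, f(l),
\]
using $a_i/i \leq a_i$ in the first (tail of small indices) piece and $1/i \leq \eps$ together with the hypothesis $f(l) \geq \sum_{i=1}^l a_i$ in the second. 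Dividing by $f(l)$ gives
\[
\eps_l \leq \frac{\sum_{i<i_0} a_i}{f(l)} + \eps .
\]

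First I would control the leftover head $\sum_{i < i_0} a_i$ using the geometric growth hypothesis $a_i \leq c\, 2^{i\Delta}$: this sum is at most $c \sum_{i < i_0} 2^{i\Delta} \leq c' 2^{i_0 \Delta}$ for an absolute constant $c'$ depending only on $c$ and $\Delta$. Since $i_0 \leq 1/\eps + 1$, this is at most $c'' 2^{\Delta/\eps}$. So it remains to choose $\eps$ — which may depend on $l$ — so that simultaneously $\eps \to 0$ as $l \to \infty$ and $\frac{c'' 2^{\Delta/\eps}}{f(l)} \leq \eps$; combining the two displays then yields $\eps_l \leq 2\eps \to 0$.

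The key step is the choice of $\eps = \eps(l)$. The condition $c'' 2^{\Delta/\eps} \leq \eps f(l)$ is, up to lower-order terms, the requirement $2^{\Delta/\eps} \lesssim f(l)$, i.e. $\Delta/\eps \lesssim \log_2 f(l)$, i.e. $\eps \gtrsim \Delta / \log_2 f(l)$. So I would set $\eps = \frac{K}{\log f(l)}$ for a sufficiently large absolute constant $K$ (depending on $\Delta$ and $c''$); since $f(l) \to \infty$ by hypothesis, this $\eps$ is well-defined for all large $l$ and tends to $0$. A short direct verification — plugging this $\eps$ into $c'' 2^{\Delta/\eps} = c'' f(l)^{\Delta/K}$ and noting that for $K > \Delta$ this is $o(f(l))$, hence $\leq \eps f(l)$ for $l$ large — confirms the constraint holds eventually, and for the finitely many small $l$ the bound $\eps_l = O(1/\log f(l))$ can be absorbed into the constant. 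This gives both $\eps_l \to 0$ and the quantitative rate $\eps_l = \bigO{\frac{1}{\log f(l)}}$.

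The main obstacle is purely bookkeeping: making sure the "sufficiently large $l$" and "sufficiently large constant $K$" quantifiers are consistent, i.e. that a single constant $K$ works for all large $l$, and that the constants $c', c''$ extracted from the geometric sum genuinely depend only on $c$ and $\Delta$ and not on $l$ or the particular sequence $(a_i)$. There is no analytic difficulty beyond the geometric-series estimate; the delicacy is entirely in the order of quantifiers and in confirming that $f(l)^{\Delta/K} = o(f(l))$ suffices rather than needing a stronger gap.
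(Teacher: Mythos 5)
Your proposal is correct and follows essentially the same route as the paper: fix a cutoff $i_0 \approx 1/\eps$, bound the head $\sum_{i<i_0} a_i/i$ by the geometric hypothesis $a_i \leq c2^{i\Delta}$ to get $O(2^{\Delta/\eps})/f(l)$, bound the tail using $1/i \leq \eps$ and $f(l) \geq \sum a_i$, and then choose $\eps = \Theta(1/\log f(l))$. The only differences are cosmetic (the paper takes $i_0$ with $1/i_0 \leq \eps/2$ to land on a cleaner final constant, and you spell out the verification $f(l)^{\Delta/K} = o(f(l))$ a bit more explicitly).
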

\begin{proof}
Fix an $\eps > 0$ to be determined later and let $i_0$ be the smallest index such that $\frac{1}{i_0} \leq \frac{\eps}{2}$.

Then using the assumptions, we have
\begin{align*}
\eps_l &= \frac{\sum_{i=1}^{i_0 -1} \frac{a_i}{i}}{f(l)} + \frac{\sum_{i=i_0}^l \frac{a_i}{i}}{f(l)} \\
	&\leq  \frac{\sum_{i=1}^{i_0 -1} a_i}{f(l)} + \frac{\eps}{2} \frac{\sum_{i=i_0}^l a_i}{f(l)} \\
	& \leq \frac{\sum_{i=1}^{i_0 -1} a_i}{f(l)}+ \frac{\eps}{2}.
\end{align*}

It remains to bound the first term. We have
\begin{align*}
\frac{\sum_{i=1}^{i_0 -1} a_i}{f(l)} \leq \frac{c 2^{\Delta i_0}}{f(l)} \leq \frac{c 2^{\Delta \frac{3}{\eps}}}{f(l)}.
\end{align*}
If we can bound this quantity by $\frac{\eps}{2}$, then the lemma follows. It can be verified that $\eps = \frac{C}{\log f(l)}$ for some constant $C$ satisfies this condition.
\end{proof}

\end{document}